\newcommand{\blind}{0}
\newtheorem{theorem}{Theorem}
\newtheorem{lemma}{Lemma}
\newtheorem{proposition}{Proposition}
\newtheorem{definition}{Definition}
\newtheorem{method}{Algorithm}
\newcommand{\X}{X}
\newcommand{\Y}{Y}
\newcommand{\W}{W}
\renewcommand{\P}{\mbox{$\mathbb{P}$}}
\newcommand{\E}{\mbox{$\mathbb{E}$}}
\newcommand{\pb}{\mbox{$p_b$}}
\newcommand{\ps}{\mbox{$p_s$}}
\newcommand{\pbM}{\mbox{$p_{bM}$}}
\newcommand{\psM}{\mbox{$p_{sM}$}}
\newcommand{\mb}{\mbox{$m_b$}}
\newcommand{\ms}{\mbox{$m_s$}}
\newcommand{\n}{\mbox{$n$}}
\newcommand{\M}{\mbox{$M$}}
\newcommand{\p}{\mbox{$q$}}
\newcommand{\pM}{\mbox{$q_M$}}
\DeclareMathOperator{\pt}{p_{\mathrm{T}}}
\DeclareMathOperator*{\argmin}{arg\,min}
\DeclareMathOperator*{\argmax}{arg\,max}
\DeclareMathOperator*{\lin}{Lin}
\DeclareMathOperator*{\TV}{TV}
\DeclareMathOperator{\toD}{\overset{d}{\to}}
\DeclareMathOperator{\iid}{\stackrel{iid}{\sim}}
\DeclareMathOperator{\st}{\text{ s.t. }} 
\DeclareMathOperator{\as}{\text{ as }}
\DeclareMathOperator{\comma}{\text{ , }}
\DeclareMathOperator{\where}{\text{ where }}
\DeclareMathOperator{\textif}{\text{if }}
\DeclareMathOperator{\textand}{\text{ and }}
\DeclareMathOperator{\for}{\text{ for }}
\DeclareMathOperator{\Cr}{\mathbb{C}}
\DeclareMathOperator{\Sr}{\mathbb{S}}
\DeclareMathOperator{\Normal}{\mathcal{N}}
\let\norm\relax
\newcommand{\norm}[1]{\left\lVert #1 \right\rVert}
\newcommand\Revision[1]{{#1}}
\begin{document}

\begin{center}

\begin{adjustwidth}{-2cm}{-2cm}
\centering
{\bf{\Large{Robust semiparametric signal detection in particle physics\\ with classifiers decorrelated via optimal transport}}}
\end{adjustwidth} 

\vspace*{.2in}

\begin{adjustwidth}{-2cm}{-2cm}
\centering
\begin{tabular}{ccccc}
Purvasha Chakravarti$^{\diamond,*}$ & Lucas Kania$^{\dagger,*}$  & Olaf Behnke$^{\ddagger}$ & Mikael Kuusela$^{\dagger}$ & Larry Wasserman$^{\dagger}$
\end{tabular}
\end{adjustwidth}

\vspace{.15in}

\begin{tabular}{c}
$^{\diamond}$Department of Statistical Science, University College London\\
$^\dagger$Department of Statistics and Data Science, Carnegie Mellon University\\
$^{\ddagger}$Deutsches Elektronen-Synchrotron (DESY)
\end{tabular}


\begin{tabular}{cc}
\texttt{p.chakravarti@ucl.ac.uk}, 
\texttt{lucaskania@cmu.edu}, 
\end{tabular}
\begin{tabular}{ccc}
\texttt{olaf.behnke@desy.de},
\texttt{mkuusela@andrew.cmu.edu},
\texttt{larry@stat.cmu.edu}
\end{tabular}

\vspace{.15in}

\today

\end{center}

\begin{abstract}
Searches for signals of new physics in particle physics are usually done by training a supervised classifier to separate a signal model from the known Standard Model physics (also called the background model). However, even when the signal model is correct, systematic errors in the background model can influence supervised classifiers and might adversely affect the signal detection procedure. To tackle this problem, one approach is to use the (possibly misspecified) classifier only to perform a preliminary signal-enrichment step and then to carry out a signal detection test on the signal-rich sample. For this procedure to work, we need a classifier constrained to be decorrelated with one or more protected variables used for the signal-detection step. We do this by considering an optimal transport map of the classifier output that makes it independent of the protected variable(s) for the background. We then fit a semiparametric mixture model to the distribution of the protected variable after making cuts on the transformed classifier to detect the presence of a signal. We compare and contrast this decorrelation method with previous approaches, show that the decorrelation procedure is robust to moderate background misspecification, and analyze the power and validity of the signal detection test as a function of the cut on the classifier both with and without decorrelation. We conclude that decorrelation and signal enrichment help produce a stable, robust, valid, and more powerful test.
\end{abstract}

\begingroup
\renewcommand\thefootnote{*}
\footnotetext[1]{These two authors contributed equally to this paper. The names are in alphabetical order.}
\endgroup

\etocdepthtag.toc{mtchapter}
\etocsettagdepth{mtchapter}{section}
\etocsettagdepth{mtappendix}{none}
\etocsettagdepth{mtreferences}{section}

\pagebreak

\pagebreak
\section{Introduction}
\label{sec:intro}
\Revision{
Particle physics seeks to understand the fundamental structure of matter by identifying its indivisible components. The Standard Model (SM) describes all known fundamental particles and their interactions. Physicists have long theorized the existence of additional fundamental particles beyond the SM and are now searching for empirical evidence of them \citep{evans2008lhc}.

At the Large Hadron Collider (LHC), two protons or heavy ions are accelerated in opposite directions and made to collide inside one of four underground detectors: ALICE, ATLAS, CMS, or LHCb. When a collision occurs, the kinetic energy of the particles transforms into mass, creating new particles \citep{Lyons08}. Each collision is called an event. Most events are unremarkable and form what physicists call background events. Occasionally, some collisions might produce theorized new fundamental particles. These collisions of interest are referred to as signal events. The problem is to detect the presence of the signal events among the background events. 

Physicists may not know the exact background or signal distributions, but they can often identify regions in the invariant-mass variable, reconstructed from the particles in the collision, where deviations from the background are likely. These are called signal regions. Figure~\ref{fig:data} illustrates this approach in an idealized scenario with both background and signal events, as well as a predefined signal region. First, physicists estimate the background by excluding the signal region (Figure~\ref{fig:fit_background}). Then, they interpolate the fitted background into the signal region (Figure~\ref{fig:extrapolate}). Finally, they test whether the observed distribution deviates significantly from the interpolated background (Figure~\ref{fig:signal_detection}).
However, in many new physics searches, the expected signal strength is very small, leading to tests with negligible power. To improve detection power, physicists train classifiers on auxiliary variables to suppress background and enhance the signal before performing the detection test.

\begin{figure}[ht]
\begin{subfigure}{0.25\textwidth}
    \centering
    \includegraphics[width=\linewidth]{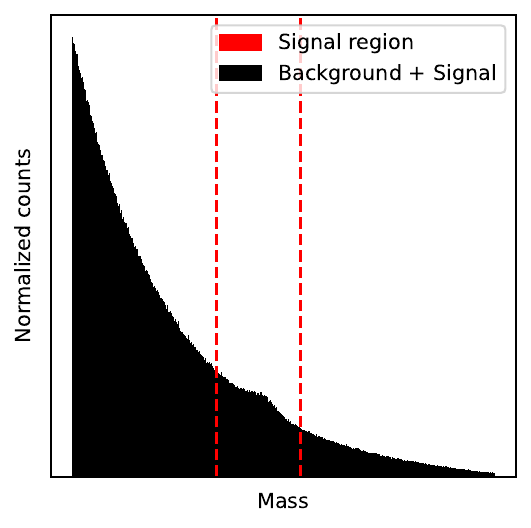}
    \caption{}
    \label{fig:data}
\end{subfigure}\hspace*{\fill}
\begin{subfigure}{0.25\textwidth}
    \centering
    \includegraphics[width=\linewidth]{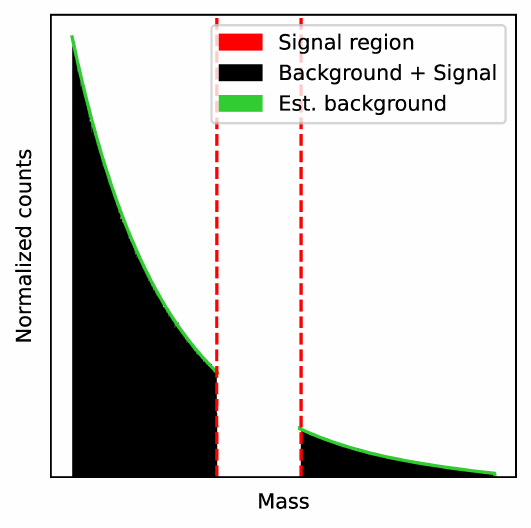}
    \caption{}
    \label{fig:fit_background}
\end{subfigure}\hspace*{\fill}
\begin{subfigure}{0.25\textwidth}
    \centering
    \includegraphics[width=\linewidth]{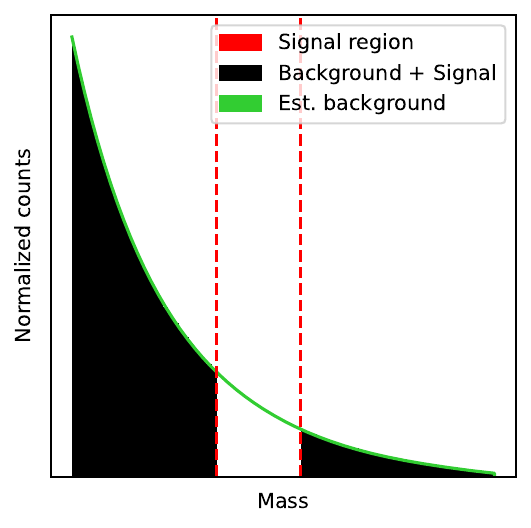}
    \caption{}
    \label{fig:extrapolate}
	\end{subfigure}\hspace*{\fill}
\begin{subfigure}{0.25\textwidth}
    \centering
    \includegraphics[width=\linewidth]{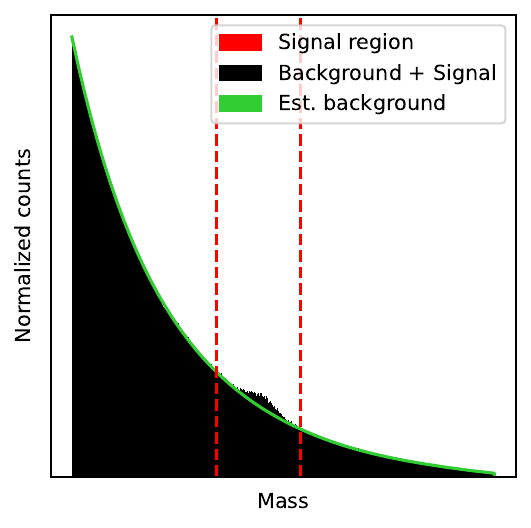}
    \caption{}
    \label{fig:signal_detection}
\end{subfigure}
\caption{Pictorial representation of signal detection in a mass spectrum.}
\end{figure}

Training boosted decision trees or neural networks on these auxiliary variables has been particularly effective at increasing the power to detect signal events \citep{atlas2017identification, atlas2017performance}. These supervised classifiers are typically trained on Monte Carlo simulations based on hypothesized background and signal models, then applied to real experimental data. However, discrepancies between the simulated training data and real test data can introduce inaccuracies. Simulating background events accurately can be difficult, since the data generation process often involves nuisance parameters and lacks a unique specification \citep{louppe2017learning}. As a result, these classifiers serve to filter and amplify the signal prior to testing, rather than to detect the signal directly.

\begin{figure}[ht]
\begin{center}
\begin{tabular}{cc}
        \text{\small Distribution of Mass} & \text{\small Distribution of Mass after Cut} \\
       \includegraphics[width=0.45\linewidth]{./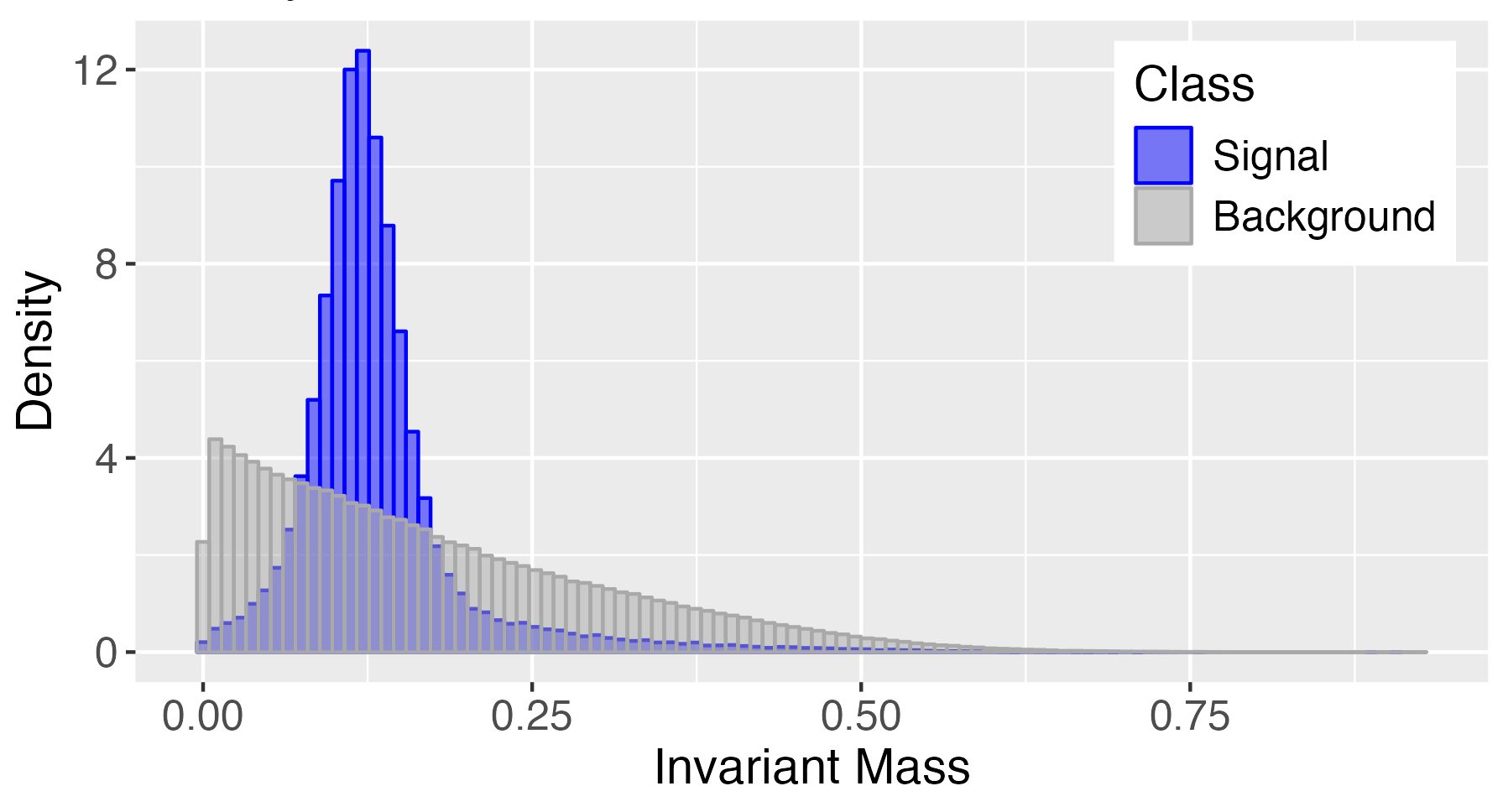} & \hspace*{-0.2in}\includegraphics[width=0.45\linewidth]{./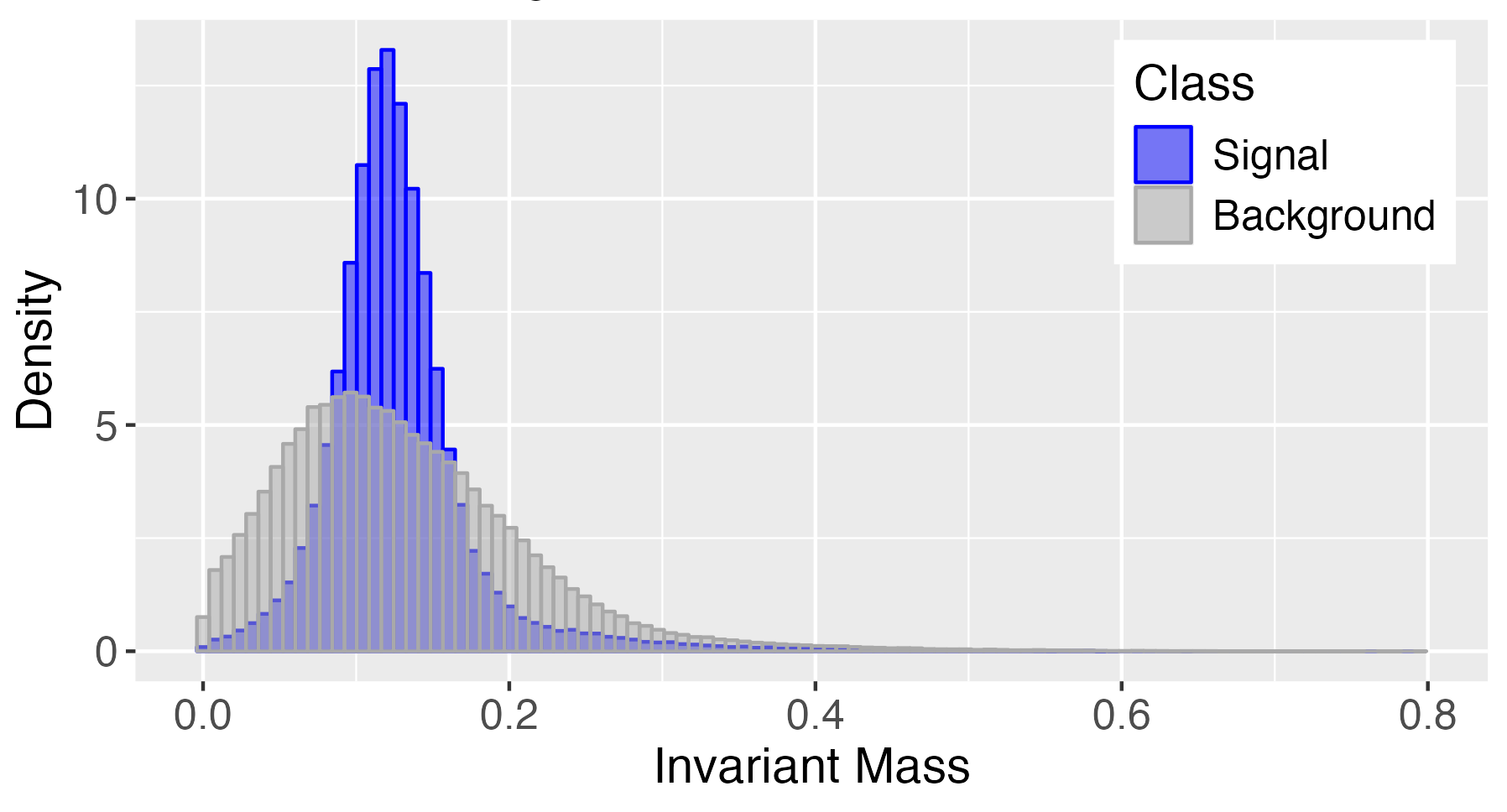}
\end{tabular}
\caption{Evidence of sculpting in the shape of the background distribution.\label{fig:sculpting}}
\end{center}
\end{figure}

A supervised classifier ($h$) trained on auxiliary variables of simulated background and signal samples is used to identify signal-rich regions where the background is suppressed. The signal detection is then performed on data within this signal-rich region, defined by predetermined cuts on the classifier output. If the classifier ($h$) is trained so that large values correspond to signal events, for a cut at $t \in [0,1]$, the signal-rich region is given by $\{x: h(x) > t\}$. Post filtering, the signal detection test is performed conditioned on the classifier in $\{x: h(x) > t\}$. This filtering can distort the background distribution of the invariant mass, making it difficult to distinguish the actual signal from background noise, a problem known as \textit{sculpting} (Figure~\ref{fig:sculpting}). 

The goal is to find a classifier that does not depend on the \emph{protected variable(s)} (invariant mass) being used for the final signal detection, ensuring that the background distribution remains unchanged, and the signal can be accurately identified. This process is known as decorrelation in high-energy physics. The proposed CDOT (Classifier Decorrelated using Optimal Transport) algorithm transforms any pre-trained classifier to be independent by composing the classifier output with an optimal transport map \citep{villani2021topics}. Further details of this algorithm are presented in Section~\ref{subsec:decorr}. 

\begin{figure}[ht]
\begin{center}
\includegraphics[width=0.9\textwidth]{./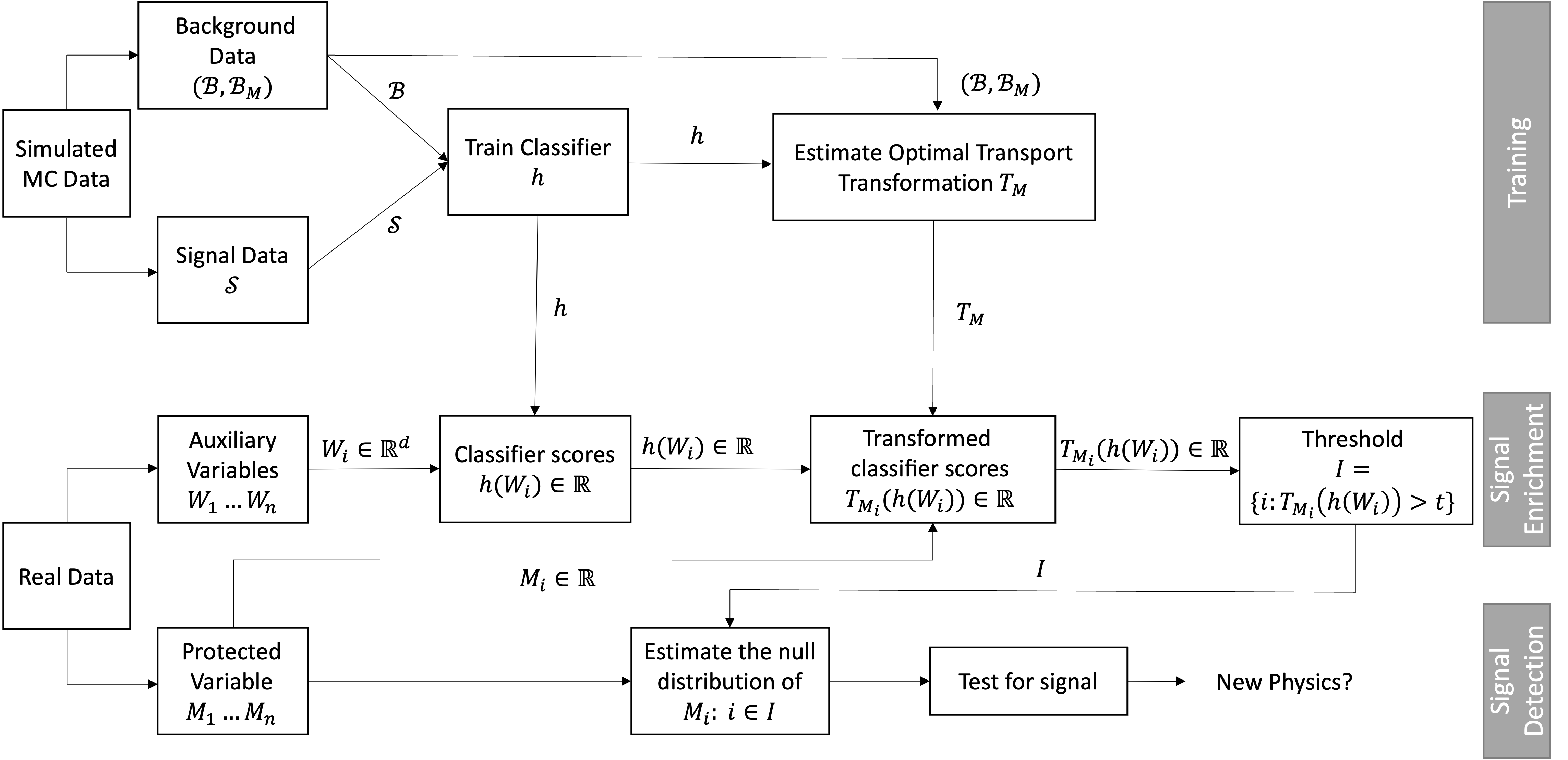}
\end{center}
\caption{Flowchart of the signal detection pipeline.}
\label{fig:flowchart}
\end{figure}

The main contribution of this paper is to present a comprehensive pipeline for signal detection, consisting of three main steps (demonstrated in Figure~\ref{fig:flowchart}):
\begin{enumerate}
    \item[(i)] \emph{Training Classifier and Decorrelation Map:} (a) \emph{Classifier:} A supervised classifier $h$ is trained on the auxiliary variables of Monte Carlo (MC) simulated background and signal data. (b) \emph{Decorrelation:} The optimal transport map ($T_M$) is determined using the proposed CDOT algorithm to obtain a decorrelated classifier, $T_M(h(\cdot))$.
    \item[(ii)]  \emph{Signal Enrichment:} The real data is filtered using the decorrelated classifier to enhance the signal.
    \item[(iii)] \emph{Signal Detection:} A semiparametric test is applied to the protected variable on the filtered data to perform signal detection. 
\end{enumerate}

The contributions of the paper can be summarized as follows:
\begin{enumerate}
\item {\bf Effect of Signal Enrichment and Decorrelation on Signal Detection Power:} We study---to the best of our knowledge, for the first time---how implementing a classifier filtration and a decorrelation algorithm before performing a signal detection test affects the power and validity of the test.  We show that signal enrichment combined with decorrelation yields a stable, robust, valid, and more powerful test. In contrast, using non-decorrelated classifiers causes sculpting, making the test unreliable.
\item {\bf Robustness to Background Distribution Misspecification:} {We perform a robustness study and show that the pipeline is robust to systematic misspecification of the background distribution. To the best of our knowledge, this is the first such robustness study in decorrelation literature.}
\item {\bf Comparison of CDOT with Other Decorrelation Algorithms:} {The proposed CDOT algorithm can be extended via geodesic morphing to create classifiers with varying degrees of decorrelation. This paper demonstrates the superiority of CDOT for high levels of decorrelation compared to previous methods.}
\item {\bf Optimality of Counting Experiments:} Assuming a parametric background and an arbitrarily complex signal distribution within the signal region, we prove that comparing the observed events in the signal region to the predicted background events yields an optimal estimator of the signal strength under the semiparametric framework \citep{bickelEfficientAdaptiveEstimation1993,vaartAsymptoticStatistics1998}.
\end{enumerate}

Lastly, one of our objectives is to also introduce the idea of decorrelation to statisticians since it has potential applications outside of particle physics.

\subsection{Related work}\label{sec:related}

Here we summarize the related literature on signal detection and decorrelation correspondingly. We focus exclusively on frequentist methods, as these are the most commonly used in high-energy physics to ensure control of error rates. Although Bayesian methods have also been explored in this field, they have yet to gain widespread adoption \citep{vandykRoleStatisticsDiscovery2014}. 

\subsubsection{Signal detection}\label{sec:related_signal_detection}

In signal detection, the aim is to determine whether the observed data originates from a background distribution or a mixture of background and signal distributions. Nonparametric, semiparametric, or parametric approaches are used depending on the available data.

{\bf Nonparametric:} When additional data from the background distribution is available, signal detection reduces to two-sample testing. That is, one can nonparametrically assess the divergence between the background and observed distributions using classifier-based tests \citep{cranmer2015approximating,kim2021,chakravartiModelindependentDetectionNew2023,gerber2023minimax}. 

{\bf Parametric:} Without additional background data, nonparametric methods cannot identify signal strength \citep{patraEstimationTwocomponentMixture2016}. A common strategy specifies distinct parametric families for background and signal and applies a likelihood ratio test \citep{chatrchyan2012observation, atlas2012observation,vandykRoleStatisticsDiscovery2014,moustafaAnomalyDetectionSystem2018}.

{\bf Semiparametric:} Semiparametric methods treat the background as known or parametric and impose structure on the signal. When the structure is too weak, one can identify only a lower bound on signal strength \citep{patraEstimationTwocomponentMixture2016}. A standard route to identifiability models the signal parametrically while keeping the background nonparametric \citep{rolkeEstimatingSignalPresence2012}. Weaker but informative conditions also work, for example, include assuming the signal is symmetric \citep{bordesSemiparametricTwocomponentMixture2010,maFlexibleEstimationSemiparametric2015} or has localized support \citep{Algeri20}. Our proposed signal-detection test assumes the latter and provides some optimality guarantees. 

Finally, we note that in this work we assume the signal region is known or can be well approximated from independent data; see \cref{sec:experiments}.  Alternatively, one may scan candidate regions and correct for multiple testing. This strategy applies to any test that uses a signal region and is standard in particle physics \citep{vandykRoleStatisticsDiscovery2014}.

\subsubsection{Decorrelation}\label{sec:related_decorrelation}

The goal of decorrelation in particle physics is to build classifiers that remain independent of the \emph{protected variable(s)} (such as invariant mass) used in signal detection tests. This independence preserves the background distribution, enabling accurate signal detection. Researchers typically use three main approaches to achieve decorrelation.

{\bf Independent Input Variables:} Modify the auxiliary variables before training the classifier to achieve independence. Early methods like Designing Decorrelated Taggers (DDT) \citep{dolen2016thinking} and Convolved SubStructure (CSS) \citep{moult2018convolved} used theoretical domain knowledge for this purpose, but are limited to specific experiments.

{\bf Decorrelation During Training:} Add a penalty term to the classification loss function to enforce independence during training. Examples include DisCo \citep{kasieczka2020disco}, which uses distance correlation, Adversarial Neural Networks (ANN) \citep{shimmin2017decorrelated, louppe2017learning}, which minimize dependence through adversarial training, and Moment Decomposition (MoDe) \citep{kitouni2021enhancing}, which penalizes dependence based on moments of the protected variable's CDF. Other methods include outlier exposure variational autoencoders (OE-VAE) \citep{cheng2023variational}.

{\bf Transforming Pre-trained Classifiers:} Apply an additional mapping to the output of pre-trained classifiers to achieve independence. The proposed CDOT (Classifier Decorrelated using Optimal Transport) algorithm uses an optimal transport map \citep{villani2021topics} for this transformation, which is found by a closed-form solution to the optimal transport problem and estimated using conditional kernel density estimators. A parallel body of work, CNOTS \citep{algren2024decorrelation}, uses the Kantorovich dual formulation and partially input convex neural networks (PICNNs) to estimate the optimal transport map. Both methods can be applied to multivariate protected variables. CDOT is designed for classifiers with univariate output, whereas CNOTS is capable of handling multivariate output classifiers. Although CDOT can be extended to work with classifiers with multivariate output, the optimal transport problem in this scenario lacks a closed-form solution (except when the variables are normally distributed). Consequently, it must be estimated numerically, which is computationally intensive. Other methods include \cite{klein2022decorrelation}, which finds the transformation using conditional normalizing flows. 
\cite{moreno2020interaction}, instead of modifying the classifier itself, adjusts the filtering threshold using quantile regression to ensure the background distribution of the filtered protected variable remains unchanged. Here, the threshold varies with the protected variable, making it a random variable. Although theoretically similar to transforming the classifier and setting a fixed threshold, the practical estimation method differs. 

Finally, we remark that the tests in our work rely on point estimates of optimal transport maps. Accounting for the uncertainty of these maps can improve inference. We leave this extension to future work, since quantifying uncertainty in transport maps \citep{del2025distributional,balakrishnan2025statistical} lies beyond the scope of this article.

\section{Datasets}
\label{sec:data}

Searches for new particles are conducted at large particle colliders like the Large Hadron Collider (LHC), where high-energy proton--proton (pp) collisions can produce heavy, unstable particles (called resonances). These particles decay into more stable particles, whose paths and energies are measured in detectors. From these measurements, physicists reconstruct the original particles. The challenge is to distinguish these rare signal events from the background. In our experiments, we consider two signal searches: the search for high-momentum W-boson production and high-mass exotic resonance production.

 {\bf High-Momentum W-Boson Production:} Some theorized new massive particles are predicted to decay into a final state containing a W-boson and other heavy particles  \citep{atlas2018performance}. The W-boson, being unstable, can decay into two quarks, which form jets. When these jets merge due to the high transverse momentum ($\pt$) of the W-boson, they create a large-radius ``W-jet." The challenge is to distinguish these W-jets (signal events) from other jets produced by pp collisions (background events). The invariant mass of the W-jet is a key discriminator, as the signal forms a peak while the background is smooth  \citep{atlas2018performance}. Although the W-boson itself is a well-known SM particle, detecting highly boosted W-bosons can be a strong indirect hint for new particles with much higher mass. Jet sub-structure variables (details in Table 1 of \cite{atlas2018performance})  are used as auxiliary variables for training classifiers in this experiment.

This experiment is referred to as W-tagging. W-tagging is one of the benchmark experiments used to compare decorrelation methods in particle physics. In \citet{atlas2018performance}, the ATLAS Collaboration performed a detailed study of some existing decorrelation methods for W-tagging. Here, we use the more recent W-tagging dataset from \cite{kasieczka2020disco} and \cite{kitouni2021enhancing}. The dataset includes events where gluons and light-flavor quarks form jets (background, represented by $pp \rightarrow jj$) and events producing two W-bosons decaying into jets (signal, represented by $pp \rightarrow WW$). The data is simulated using Pythia 8.219 \citep{sjostrand2008brief}, Delphes 3.4.1 \citep{de2014delphes}, and FastJet 3.0.1 \citep{cacciari2012fastjet}, with reconstructed jets having mass $m \in [50,300]$ GeV and momentum $\pt \in [300,400]$ GeV. The invariant mass of the jets is scaled to $[0, 1]$. More details on the simulation can be found in \cite{kasieczka2020disco} and \cite{atlas2018performance}.

{\bf High-Mass Exotic Resonance Production:} This involves detecting the production of an exotic high-mass particle that decays into two Higgs bosons, which then decay into four b-quarks (represented by $pp \rightarrow X \rightarrow HH \rightarrow 4b$). This happens as the Higgs boson is an unstable particle, making it impossible to directly detect it. The task is to identify events where four b-quarks originate from the resonance (signal events) versus events where the four b-quarks were produced by a different process (background events). The
b-quarks themselves hadronize to form streams of other particles, called b-jets. The kinematic properties of these b-jets help in signal identification and hence are used as auxiliary variables for training classifiers. The invariant mass of the high-mass resonance is a key discriminator and is used for signal detection (similar to W-tagging).

Here the signal distribution can be accurately approximated, but simulating the background is computationally challenging \citep{di2020higgs}. Instead, previous research assumes that there are additional events that are related to the background distribution via a distributional shift and can be used to estimate the background distribution \citep{bryant2018search, manoleBackgroundModelingDouble2022}. An example would be events that result in three b-quarks, making them impossible to have come from a signal event, but having similar kinematic properties as the background \citep{bryant2018search}. Since the background distribution is now being estimated, the simulations might not be accurate. Hence, we use this experiment to demonstrate the robustness of our entire pipeline to a misspecified background.

The dataset includes two kinds of background data:  QCD multijet events producing four b-quarks (4b background) and events producing four jets with exactly three b-jets (3b background). The data is simulated using MadGraph \citep{alwall2011madgraph}, with the invariant mass of the high-mass resonance set to $m \approx 400$ GeV and reconstructed jets having transverse momentum $\pt > 40$ GeV and pseudorapidity $|\eta| \leq 2.5$.  The classifier and CDOT training are performed on the 3b background, whereas the test is performed on the 4b background. As the kinematics of the two events are similar, but not exactly the same \citep{cms2022search}, this is a good check of robustness of the pipeline to moderate background misspecification.}

\section{Methods}
\label{sec:meth}

As discussed in the introduction in Section~\ref{sec:intro}, we have three sources of data at hand \begin{align*}
\text{MC Background }(\mathcal{B}): \quad & \X_1, \ldots, \X_{\mb} \sim \pb \\
\text{MC Signal }(\mathcal{S}): \quad & \Y_1, \ldots, \Y_{\ms} \sim \ps \\
\text{Experimental/Real data }(\mathcal{E}): \quad & W_1, \ldots, W_{\n} \sim \p = (1 - \lambda) \cdot \pb + \lambda \cdot \ps
\end{align*} Here $\pb$, $\ps$, and $\p$ are the densities of the simulated background, the simulated signal, and the experimental data, respectively. $\lambda \in [0, 1]$ is the signal strength, which represents the probability of a signal event in the experimental data. Furthermore, we have data on the protected variable $\M$, for all the three samples, i.e., $\M^\X_{i} \sim \pbM$ (Background), $\M^\Y_{i} \sim \psM$ (Signal) and $\M_{i} \sim \pM$ (Experimental, for brevity). Henceforth, we will drop the $\M$ from the density notation to interchangeably use $\pb$, $\ps$, and $\p$ for the protected variable as well, since from context, it should be clear which density we mean. We want to perform a signal detection test, i.e., test $H_0: \lambda = 0$. As discussed in Section~\ref{sec:intro}, since the signal is localized in the protected variable, the test is performed on the protected experimental data $\M_{i} \sim \p = (1 - \lambda) \pb + \lambda \ps$. Additionally, as explained in the introduction, we do not entirely trust the background simulations and so, do not use the background or signal simulations for the actual test. Hence, the test is able to robustly detect any signal that is localized in the protected variable, even if the background simulation is misspecified. 

Let $S \in \{0,1\}$ be an indicator for a signal event.
Now, let $Z= h(\W) = \hat{\P}(S = 1|\W)$ denote the output of a probabilistic classifier that separates background and signal data.
The joint density of $(Z, \M)$ for the experimental data can be written as \begin{equation}
p(z,m) =
p(z)[ (1-\lambda(z)) p_b(m|z) + \lambda(z) p_s(m|z)],\end{equation} where $\lambda(z) = \P(S = 1|Z = z)$. Now, $\lambda = \P(S=1) = \int \lambda(z) dP(z)$.
We want to test $H_0: \lambda = 0$. But as described in Section~\ref{sec:intro}, we want to perform the test after performing signal enrichment, i.e., perform the test on experimental data that passes the cut: $Z = h(\W) > t$, where $t > 0$ is some fixed threshold. So, instead of $\lambda$ we are now interested in $\lambda(z)$ for $z > t$. Under the null, $H_0: \lambda = 0$, we also have $\lambda(z) = 0 \ \forall  \ z$. Also,  we expect $\lambda(z)$ to increase as $z$ increases since the classifier is trained to detect signal, so a higher classifier output indicates a higher chance of the event being a signal event.

Considering the conditional density $p(z, m|Z > t)$, this has two effects: we are now interested in $\lambda(z)$ for $z > t$ which is larger than $\lambda$, so it increases the signal strength in the test data (good) but it replaces $\pb(m)$ and $\ps(m)$ with $\pb(m|Z > t)$ and $\ps(m|Z > t)$ which are closer together (bad) causing sculpting.  Ideally, we want the optimal classifier subject to the condition: $\pb(m|z) = \pb(m)$ and $\ps(m|z) = \ps(m)$. It is impossible to satisfy both of these conditions at once but in practice, it is sufficient to have $Z = h(\W)$ such that $\pb(m|z) = \pb(m)$, i.e., $h$ is constructed so that $h(X)$ is independent of the corresponding protected variable $\M$, given that $X$ is from the background distribution $\pb$. Note also that as the amount of enrichment ($t$) increases, while $\lambda(z)$ increases, the sample size decreases, and hence, too much enrichment could potentially reduce power. So there is a trade-off between sample size and increasing $\lambda(z)$, and one needs to be careful when choosing $t$.
So, now the joint density of $(Z, \M)$ becomes: $p(z,m) =
p(z)[ (1-\lambda(z)) \pb(m) + \lambda(z) \ps(m|z)]$  due to decorrelation. The proposed signal detection workflow has three steps (see Figure~\ref{fig:flowchart}):
\begin{itemize}
\item[(Step 1):] Training and Decorrelation: Train a probabilistic classifier ($h$) on the simulated signal and simulated background data to identify signal events. Then apply a transformation ($T_M(h)$) to decorrelate it, i.e., make it independent of the protected variable ($\M$) for the background. We present the decorrelation procedure where we find the transformation $T_M$ using optimal transport in more detail in Section \ref{subsec:decorr}.
\item[(Step 2):] Signal Enrichment: Filter the experimental data ($\W$'s) using the classifier selecting a signal-rich sub-sample $E_t = \{i: T_{M_i}(h(\W_i)) > t\}$ for some threshold $0 < t < 1$. 
\item[(Step 3):] Signal Detection:
We perform a signal detection test on the protected variable for the selected experimental data, i.e., on $\M_{i}$ for $i \in E_t$. Let $\M$ be distributed as $\p(m) = (1 - \lambda) \cdot \pb(m) + \lambda \cdot \ps(m).$
Here, $\pb(m)$ is a smooth background model, and $\ps(m)$ is a bump-like function.
We additionally have for the selected experimental data:
\begin{equation}
\label{eq:mixture}
\p(m|Z > t) = (1 - \lambda_t) \cdot \pb(m) + \lambda_t \cdot \ps(m|Z>t)
\end{equation}
where $Z = T_M(h(W))$ is the new transformed classifier and $\lambda_t = \P(S = 1|Z>t)$. We fit $\p(m|Z > t)$ using the selected experimental data, $\M_{i}$ for $i \in E_t$, and test $H_0: \lambda_t = 0$ versus $H_1:\lambda_t > 0$. We discuss the signal detection in detail in Section~\ref{sec:test}.
\end{itemize}

\subsection{Decorrelation via optimal transport}
\label{subsec:decorr}

The goal of the decorrelation procedure is to construct $h$ so that $h(\X)$ is independent of $m(\X)$, given $\X$ is from the background distribution $\pb$. 
One option is to transform the data $\X$ to make it independent of $\M$ for $\X \sim p_b$ 
However, if $\X$ is high-dimensional, this is computationally expensive, and a practical alternative is to consider transforming the classifier $h(\X)$ to be independent of $\M$ for $\X \sim \pb$. We want to do this such that the transformed classifier output is as close to the original as possible, while keeping the marginal distribution the same. Without loss of generality, one could choose any distribution for the new marginal distribution. We decided to keep the marginal distribution unchanged. 

\Revision{ One way to achieve this transformation is to solve an optimal transport problem. The Monge formulation of an optimal transport problem is to solve $\inf_T \int ||x - T(x)||^p dP(x),$ where $T$ is such that $P(\{x : T(x) \in A\}) = Q(A)$ for any set $A$ \citep{villani2021topics}. The minimizer $T$ (if it exists) is called the optimal transport map.}

The optimal transport problem that we are interested in solving is given by: minimizing $\E\left[(T_M(Z) - Z)^2\right]$ subject to $T_M(Z)$ independent of $M$ and $p_{bT_M(Z)}(z) = \pb(z)$, where $Z = h(\X)$ and $p_{bT_M(Z)}$ and $\pb$ are the marginal densities of $T_{\M}(Z)$ and $Z$, respectively. The solution to this optimization problem is given by the optimal transport map  $T_m$  from $\pb(z|m)$ to the marginal $\pb(z)$. This is shown in  Lemma~\ref{lemma:OT} in  Appendix~\ref{app::decorr}. Additionally, as our classifier $Z = h(X)$ is univariate,  $T_m$ has a closed-form solution given by $T_m(z) = F_z^{-1}(F_{z|m}(z))$ where $F_z$ is the marginal cdf of $Z = h(X)$ and $F_{z|m}$ is the conditional cdf of $Z$ given $M = m$ for $X \sim \pb$.

So, to estimate the optimal transport map $T_m$, we first need to estimate $F_z$, the marginal cdf of $Z$ and $F_{z|m}$, the conditional cdf of $Z$ given $M = m$,  where $Z = h(\X)$ is the classifier output on the background data $\X$. To estimate $F_z$, we use the empirical cdf of the background data $h(\X_i)$ to obtain an estimate $\widehat{F}_z(t)$.
Since we have sufficient background data, this is a good estimate of the marginal cdf. 

Estimating the conditional cdf $F_{z|m}$ poses a more challenging task. We use a kernel conditional distribution estimator (using \texttt{npcdist} in R package \texttt{np} \citep{li2008nonparametric, hayfield2008nonparametric, li2013optimal}) to estimate the conditional cdf. The challenge is that we need to find two optimal bandwidths for the kernel estimator, one along the $z$ dimension and one along the $m$ dimension. We note in our experiments that a fixed bandwidth works along the $z$ dimension, however, the optimal bandwidth along the $m$ dimension varies. Finding an adaptive optimal bandwidth w.r.t.~$M$ is computationally expensive. Instead, we take one of two approaches. The first approach is to assume that the optimal bandwidth along $M$ changes as a step function of $M$, where the points at which it changes are visibly observable from the distribution of $M$. The second approach is to assume that a particular location-scale transformation of $Z$ as a function of $\log(M)$ results in a constant optimal bandwidth for the kernel estimator of $F_{z|m}$. Specifically, we assume $ logit(Z) = \mu(M) + \sigma(M)\epsilon$ ,
where $\mu(M) = E[logit(Z)|\log(M)]$, $\sigma^2(M) = Var(logit(Z)|\log(M))$, and the conditional distribution estimator of $\epsilon|log(M)$ has a fixed optimal bandwidth. The intuition behind taking the $logit$ and $\log$ transformations comes from the context of $Z = h(X)$ being a classifier output and the protected variable $M$ usually being the invariant mass, which is a positively skewed variable in our experiments. Depending on the data set, the $logit$ and $\log$ transformations may or may not be necessary. 

Using this assumption, we estimate on the background data, $\widehat{\eta}_i = logit(h(X_i)) - \widehat{\mu}(M^\X_{i})$, where $\widehat{\mu}$ is an estimator of $\mu$ derived using non-parametric regression of $logit(h(X_i))$ on $\log(M^\X_i)$. Then we can estimate $\widehat{\sigma}^2(M^\X_{i})$'s using a non-parametric regression of $\widehat{\eta}^2_i$ on $\log(M^\X_i)$. Now, we can estimate $\epsilon$ using $\widehat{\epsilon}_i = \widehat{\eta}_i/\widehat{\sigma}(M^\X_{i})$ and estimate $F_{\epsilon|m}$ using a fixed-bandwidth conditional distribution estimator $\widehat{F}_{\epsilon|m}$. Then the conditional distribution of $Z|M^\X$ can be derived as $F_{z|m}(t) = F_{\epsilon|m}\left( (logit(t) - \mu(m))/{\sigma(m)}\right)$.
Therefore, $\widehat{F}_{z|m}(t) = \widehat{F}_{\epsilon|m}\left( (logit(t) - \widehat{\mu}(m))/{\widehat{\sigma}(m)}\right).$
The CDOT (Classifier Decorrelated via Optimal Transport) procedure is detailed in Algorithm~\ref{method:CDOT} in Appendix~\ref{app::decorr}.

It is important to note that even though we call the classifier a decorrelated classifier in accordance with standard terminology in HEP, in reality the transformed classifier $T_M(h)$ is independent of $M$ for the background data and not just uncorrelated. Second, the best approach to estimate the conditional distribution of $Z|M$ (specifically the optimal bandwidth along the $m$-axis) in Step 3 of Algorithm~\ref{method:CDOT} in Appendix~\ref{app::decorr}, depends on the distribution of $M$ and $Z|M$ for the background data. We can use the available background validation data to decide which approach is better for a particular dataset. 

The optimal transport map $T_m(h)$ transports $p_b(z|m)$ to the marginal $p_b(z)$ for every fixed $m$. An additional contribution of this paper is to consider the optimal path or geodesic taken by the optimal transport map, i.e., to look at the intermediate steps of the transport problem. Let $\mathcal{F}$ be the set of all distributions and $g_m: [0, 1] \rightarrow \mathcal{F}$ is such that $g_m(0) = p_b(\cdot|m)$ and $g_m(1) = p_b(\cdot)$. Then $\{g_m(t): 0 \leq t \leq 1\}$ is a path connecting $p_b(z|m)$ and $p_b(z)$. The geodesic is the shortest path connecting $p_b(z|m)$ and $p_b(z)$ and it can be shown that for the geodesic path, the transformation corresponding to an intermediate point in the path $g_m(t)$ is given by $th + (1 - t) T_m(h)$  \citep{villani2021topics}. Hence, the geodesic gives rise to a range of decorrelated classifiers with varying degrees of decorrelation.

\subsection{Signal detection via semiparametric efficient estimation}\label{sec:test}

\Revision{We propose a test for signal detection using only the protected variable $M$. To simplify notation, we write the conditional mixture after the signal enrichment step \eqref{eq:mixture} as \begin{equation}\label{eq:model}
M_1,\dots,M_n\ \iid\ f(m) = (1-\lambda) \cdot b(m) + \lambda \cdot s(m)
\end{equation} where $f(m) = q(m \mid Z > t)$ is the mixture density, $b(m) = p_b(m)$ is the background density, $s(m) = p_s(m \mid Z > t)$ is the signal density, and $\lambda = \lambda_t = P(S = 1 \mid Z > t)$ is the signal strength. The goal is to distinguish between the null hypothesis $H_0: \lambda = 0$ (no signal) and the alternative $H_1: \lambda > 0$.

The background distribution is assumed to be either known or parametric, while the signal can be arbitrarily complex but resides in a predefined region.  We assume that the support of the mixture, denoted by $\Omega$, can be partitioned into disjoint signal and control regions, denoted by $\Sr$ and $\Cr$, such that the signal is entirely contained in $\Sr$: $\text{support}(f) = \Omega = \Sr \cup \Cr$, $\Sr \cap \Cr = \emptyset$, and $\int_{\Sr} s(m)\, dm = 1$. Under these assumptions, the model is semiparametric  \citep{bickelEfficientAdaptiveEstimation1993,vaartAsymptoticStatistics1998,kosorokIntroductionEmpiricalProcesses2008}, and the signal strength $\lambda$ is identifiable. 

We focus on tests that remain robust when the underlying signal is unknown, a scenario frequently encountered in practice. Although this robustness often comes at the expense of efficiency compared to tests tailored to known signals, it offers valuable protection in real-world settings. 
}

In this section, we adopt the following conventions: if $b$ denotes the background density, then $B(A)$ indicates its integral over the set $A$, i.e., $B(A) = \int_A b(x)\ dx$. Analogously, let $F_n\left(A\right) = n^{-1}\sum_{i=1}^nI(M_i \in A)$ be the integral with respect to the empirical distribution. Furthermore, we let $z_{1-\alpha}$ denote the $1-\alpha$ quantile of the standard normal distribution.

A test is a function of the data $\Psi: \Omega^n\to\{0,1\}$ that returns $0$ if it considers that the data does not provide evidence against the null and $1$ otherwise. Furthermore, we say that a test $\Psi_\alpha$ is asymptotically valid at the $\alpha$-level if it asymptotically controls the probability of choosing the alternative hypothesis when the null hypothesis is true by $\alpha$, that is,
$\lim_{n\to\infty} B(\Psi_\alpha(X_1,\dots,X_n) = 1) \leq \alpha$. Finally, the power of a test is the probability that the alternative hypothesis is chosen when it is true, i.e. $F(\Psi_\alpha(X_1,\dots, X_n)=1)$ where $\lambda>0$ in \eqref{eq:model}. All tests in this section are asymptotically powerful, in the sense that for a sample size large enough, they detect the alternative hypothesis: fix $\lambda>0$, it follows that $\lim_{n\to\infty} F(\Psi_\alpha(X_1,\dots,X_n)=1) = 1$. 

To build intuition, assume the background distribution is known. Integrating the mixture density over the control region reveals that the signal strength corresponds to the ratio of the probabilities of observing an event in the control region under the mixture and background distributions: \begin{equation}
\lambda = 1 - \frac{F(\Cr)}{B(\Cr)} = 1 - \frac{1-F(\Sr)}{1-B(\Sr)}.
\end{equation} Substituting the empirical distribution $F_n$ for $F$ yields the plug-in estimator, which compares observed counts in the signal region to the expected background. In physics, this approach is commonly known as a counting experiment \citep{olafbook}. \Revision{The following lemma shows that the plug-in estimator is efficient for estimating $\lambda$. In this context, efficiency means that the estimator asymptotically achieves the minimum variance among all regular estimators. The proof and formal definition of efficiency appear in Appendix~\ref{appx:signal_detection}. \begin{restatable}{lemma}{EfficientEstimatorKnownBackground}\label{lemma:EfficientEstimatorKnownBackground}
Under model \eqref{eq:model} with known background and $\lambda \in (0,1)$, the plug-in estimator:  \begin{equation}\label{eq:efficient_estimator}
\lambda(F_n,B) = 1 - \frac{F_n(\Cr)}{B(\Cr)}
\end{equation} is efficient. Furthermore, the following test: \begin{equation}\label{eq:known_background_test}
\Psi_\alpha(F_n,B)=I\left(T(F_n,B) > z_{1-\alpha}\right) \where T(F_n,B) = \sqrt{n} \cdot\frac{F_n(\Sr)-B(\Sr)}{\sqrt{F_n(\Sr)(1-F_n(\Sr))}}
\end{equation} is an asymptotically valid test at level $\alpha$ for $\lambda \in [0,1)$ and $B(\Cr)>0$.
\end{restatable} } Physicists often do not know the exact boundaries of the signal region but expect minimal signal contribution in the control region. Consider a case where an $\epsilon$ fraction of the signal leaks into the control region, so that $\int_{\Cr} s(x)\ dx = \epsilon$. Assume $\epsilon$ is smaller than the background probability in the control region, $B(\Cr)$. Then, $\lambda(F_n,B)$ underestimates the signal strength:
\begin{equation}\label{eq:contamination}
E[\lambda(F_n,B)]=\lambda \cdot \left[1-\frac{\epsilon}{B(\Cr)}\right] .
\end{equation}
Greater contamination increases the estimator's bias and reduces the test's detection power, requiring more observations to detect the signal. However, the test remains asymptotically valid because contamination cannot occur under the null hypothesis.

Researchers do not usually know the background distribution. Thus, they might consider replacing $B$ in the test \eqref{eq:known_background_test} with a plug-in estimator of the background based on an auxiliary sample. However, this substitution introduces a non-negligible asymptotic error due to the efficient estimator being robust only to perturbations in the signal region.

A natural next step is to assume a parametric background in \eqref{eq:model}, that is,
$b = b_\gamma$ where $\gamma \in \mathbb{R}^K$. To prevent data from the signal region from biasing the background estimate, we fit the background using only observations from the control region. This leads to a censored maximum likelihood estimator: \begin{equation}
\label{eq:censored_mle}
(\gamma_*(F_n),\lambda_*(F_n)) = \argmax_{\tilde{\gamma},\tilde{\lambda}}\ \sum_{i=1}^n\ell(M_i,\tilde{\lambda},\tilde{\gamma}) \st B_\gamma(\Omega) = 1,
\end{equation} where $\ell(m,\lambda,\gamma)= I(m \in \Sr)\cdot\log\left((1-\lambda)\cdot B_\gamma(\Sr)+\lambda\right) +  I(m\in \Cr)\cdot \log\left((1-\lambda)\cdot b_\gamma(m)\right)$. \Revision{The following lemma shows that $(\gamma_*(F_n),\lambda_*(F_n))$ is the efficient estimator of $(\gamma_*(F),\lambda_*(F))$. The proof and the definition of $\tau_\lambda(F_n)$ appear in Appendix~\ref{sec:efficient_estimator_parametric_background}. \begin{restatable}{lemma}{EfficientEstimatorParametricBackground}\label{lemma:EfficientEstimatorParametricBackground}
Under the model \eqref{eq:model} with a parametric background, the censored MLE estimator \eqref{eq:censored_mle} is efficient for $\lambda \in (0,1)$ and $B_\gamma(\Cr)>0$. Furthermore, the following test: \begin{equation}\label{eq:censored_MLE_test}
\Psi^{(K)}_\alpha(F_n)=I\left(\sqrt{n}\cdot \frac{\lambda_*(F_n)}{\tau_\lambda(F_n)} > z_{1-\alpha}\right)
\end{equation} is an asymptotically valid test at level $\alpha$ for $\lambda \in [0,1)$ and $B_\gamma(\Cr)>0$.
\end{restatable}} Although the censored MLE does not provide a closed-form solution for $\gamma_*(F_n)$, looking at the first order optimality condition for the signal strength reveals that the efficient estimator is analogous to \eqref{eq:efficient_estimator} but using the estimated parametric background $
\lambda_*(F_n) = 1 - F_n(\Cr)/B_{\gamma_*(F_n)}(\Cr)
$. Since the background is estimated from the data, there is a trade-off with respect to the size of the signal region. A small signal region can lead to part of the signal leaking into the control region, biasing the estimation of the background and ultimately biasing the signal strength estimator $\lambda_*(F_n)$. Conversely, a large signal region protects against the contamination of the control region, but it increases the variance of the background estimator due to reducing the amount of data in the control region in addition to making it harder to interpolate the background in the signal region. This, in turn, increases the variance of $\lambda_*(F_n)$. In both situations, the test based on $\lambda_*(F_n)$ remains valid, but it loses power, i.e., it requires more observations to detect the signal. Finally, it is worth noting that the censored MLE is equivalent to maximizing the conditional likelihood over the control region and extending it into the signal region, see Appendix \ref{sec:conditional_MLE}. 

If the parametric form is a truncated series, $b = b_\gamma = \sum_{k=1}^K \gamma_k \cdot \phi_k(x)$ with $\gamma \in \mathbb{R}^K$, one can estimate the background parameters $\gamma$ using an expectation-maximization algorithm \citep{dempsterMaximumLikelihoodIncomplete1977}; see Appendix~\ref{sec:EM_censored_MLE}. This procedure corresponds to a version of the D'Agostini iteration \citep{agostini1,agostini2} that avoids binning. Its binned counterpart is widely used in unfolding problems \citep{adyeUnfoldingAlgorithmsTests2011}. Our numerical studies employ the $K$-th order Bernstein basis (Appendix~\ref{sec:bernstein_basis}) and apply data discretization before estimating the censored MLE to speed up computation (Appendix~\ref{sec:discretization}).

\section{Experiments}\label{sec:experiments}

We investigate the performance of our method for the detection of high-$p_T$ W bosons in Section \ref{sec:wtagging}, and of exotic high-mass resonance events in Section \ref{sec:3b_4b}. In all experiments, the data is split into training, validation, and test datasets. The training data is used to fit a random forest classifier, however, one could use any probabilistic classifier. \Revision{We use probability forests \citep{Malley2012Probability} as the classifier where the Gini index is used as the node impurity for splitting. The training data is roughly balanced in the exotic high-mass resonance experiment and skewed in favor of the signal in the W-tagging experiment.} The validation data is used to perform decorrelation, calibrate the classifier, and calibrate the signal-enriched test to achieve a type I error of $\alpha=0.05$ (further details in Appendix~\ref{sec:experiments_appx}).

Using the test dataset, we study the power of the signal-enriched test and the performance of the CDOT decorrelation algorithm. To analyze the power of the signal detection test, we proceed as follows: given a signal strength $\lambda$, we sub-sample $N=500$ datasets of $n=20000$ observations from the test dataset, such that $\lambda\%$ of those observations correspond to signal events. Then, for each dataset, we check if the test rejects the null hypothesis $H_0:\lambda=0$ and compute the empirical probability of rejecting the null hypothesis across datasets. For $\lambda=0$, that probability is the empirical type-I error, while for $\lambda \in \{0.01,0.02,0.05\}$, it is the empirical power. In all cases, we report the empirical results with their corresponding Clopper--Pearson confidence intervals. Finally, to understand the utility of the decorrelation algorithm, we study the power of the signal-enriched test both when using a non-decorrelated and decorrelated classifier to filter the observations. 

\subsection{Detection of  high-\texorpdfstring{$\pt$}{pt} W bosons}\label{sec:wtagging}

In the following, we consider the search for high-$\pt$ W bosons, which is referred to as W-tagging. Here the invariant mass is the protected variable and the jet sub-structure variables  are used for training the auxiliary classifier (details in Section~\ref{sec:data}).

We use the data in \cite{kasieczka2020disco} split into training (background: 110k, signal: 250k), validation (background: 330k, signal: 80k) and test (background: 770k, signal: 80k) datasets. The training data is used to fit a  random forest classifier with $1000$ trees and a minimal node size of $800$. The validation data is used to train the CDOT algorithm (Algorithm~\ref{method:CDOT} in Appendix~\ref{app::decorr}, using Approach 1) with splits set at $\{0.2, 0.4, 0.6\}$, i.e., a different fixed optimal bandwidth is chosen for each of the intervals $[0, 0.2], [0.2, 0.4], [0.4, 0.6]$, and $[0.6, 1]$ when training the kernel conditional distribution estimator.

\begin{figure}[ht]
\begin{center}
\includegraphics[width=\textwidth]{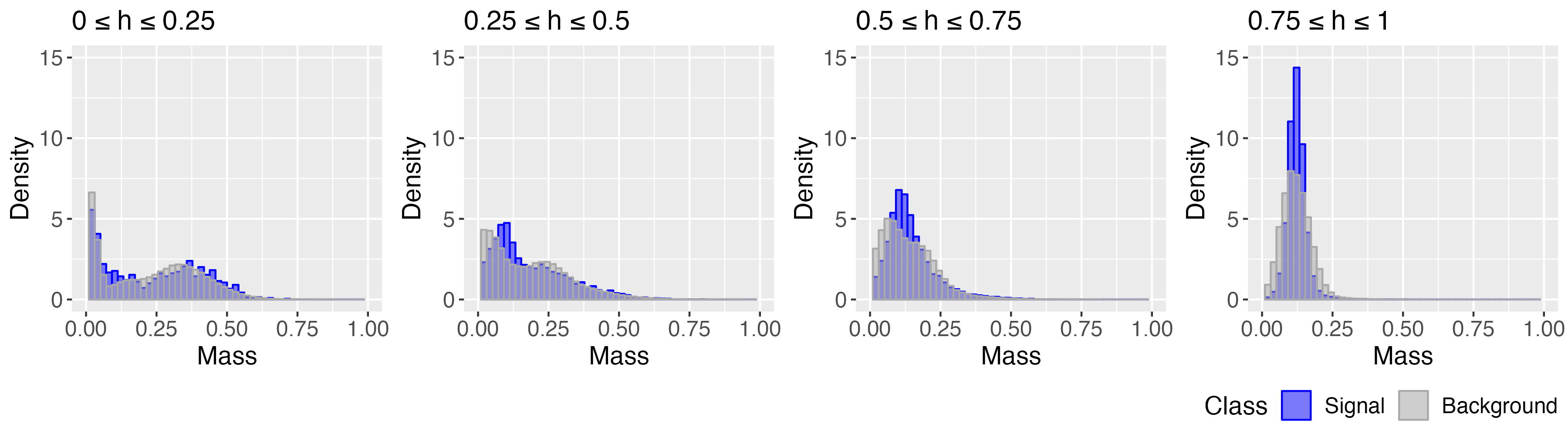}
\end{center}
\caption{Density plots of the invariant mass for the W-tagging data for different ranges of the classifier output ($h$) without any decorrelation.}
\label{img:WTaggingnoDecorrelation}
\end{figure}

\begin{figure}[ht]
\begin{center}
\includegraphics[width=\textwidth]{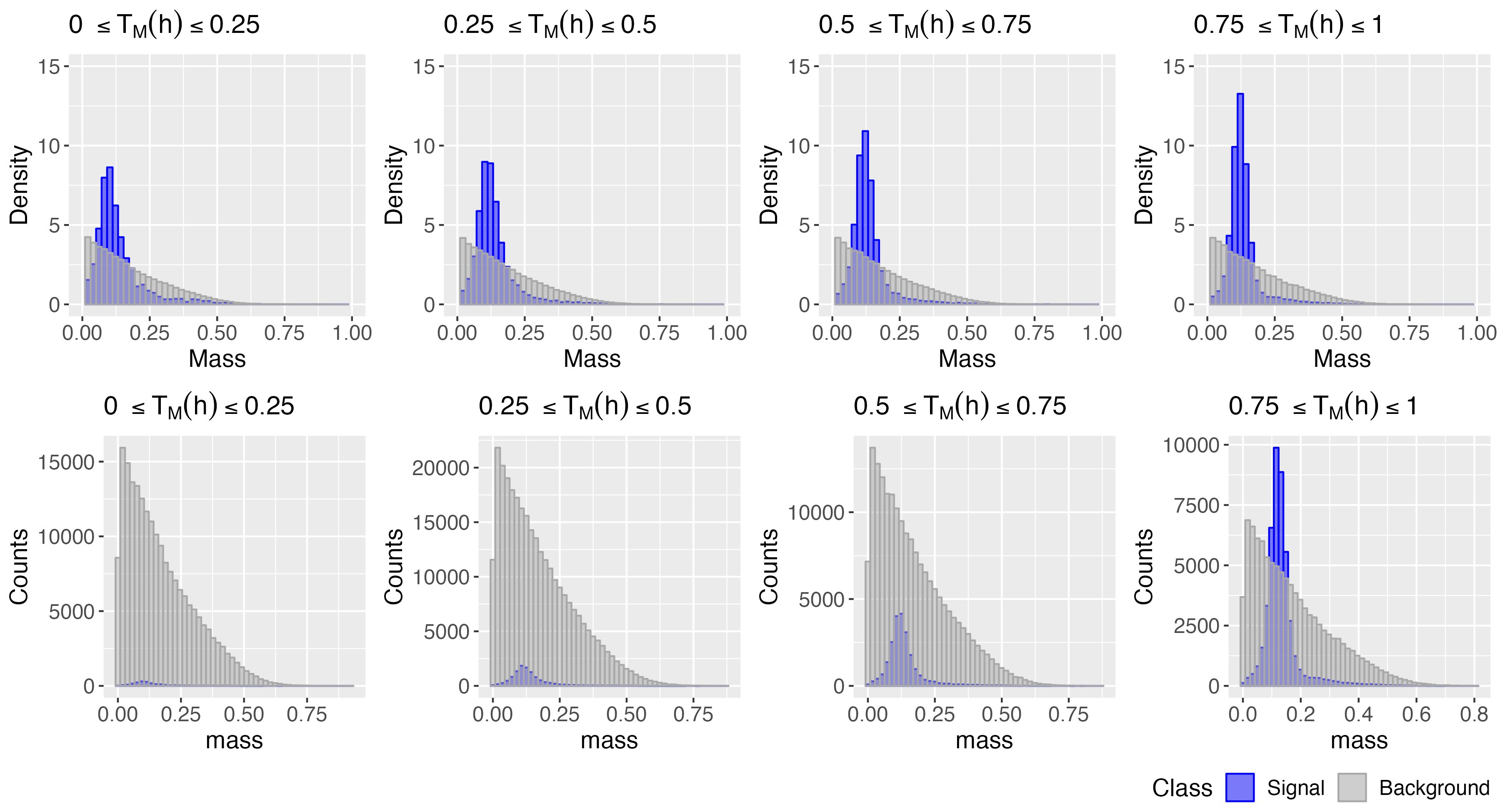}
\end{center}
\caption{Post-decorrelation density plots (top) and histograms (bottom) of the invariant mass for the W-tagging data, for different ranges of the transformed classifier ($T_M(h)$).}
\label{img:WTaggingDecorrelation}
\end{figure}

Figure~\ref{img:WTaggingnoDecorrelation} shows the effect of sculpting on the background distribution when signal enrichment is applied to the classifier output without any decorrelation. The  background distribution demonstrates a bump as the classifier output ($h$) increases making it hard to detect the signal bump using a test. Figure~\ref{img:WTaggingDecorrelation} shows the distributions of both the background and the signal after decorrelation, i.e., the signal enrichment cuts are now applied on the decorrelated classifier output ($T_M(h)$). As the classifier output increases, the top row of the figure shows that the background distribution's shape remains the same avoiding sculpting while the bottom row demonstrates the increase in the signal proportion. 

\begin{figure}[ht]
\begin{center}
\includegraphics[width=\textwidth]{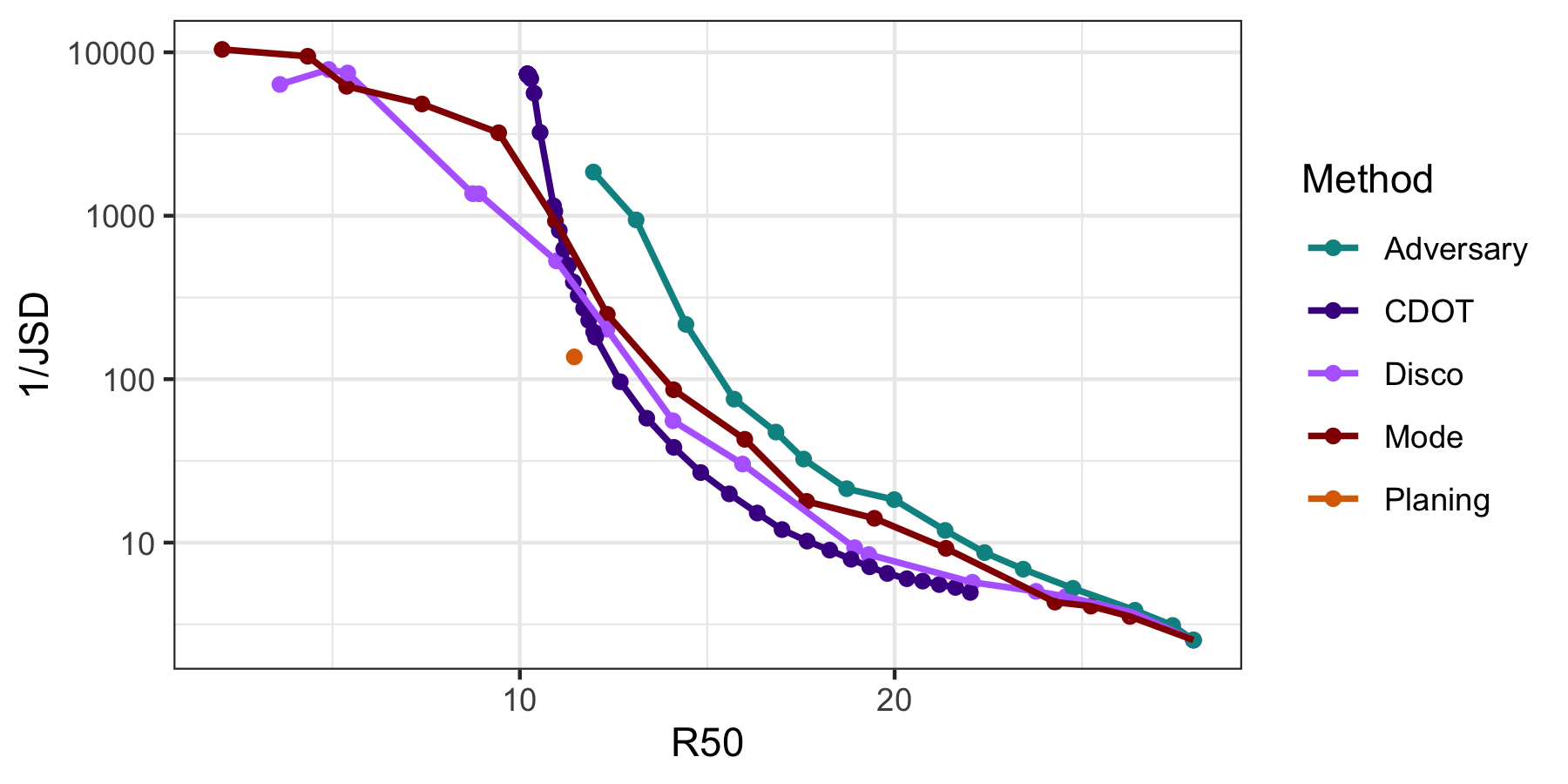}
\end{center}
\caption{Decorrelation (1/JSD on $\log_{10}$ scale) versus background-rejection power (R50) plot comparing CDOT to existing decorrelation methods. Based on \cite{kitouni2021enhancing}.
}
\label{fig:R50JSD}
\end{figure}

Figure~\ref{fig:R50JSD} compares CDOT to state-of-the-art methods introduced in Section~\ref{sec:related}, namely, Disco, ANN, MoDe, and Planing \citep{PhysRevD.97.056009}. We build on Figure~7 in \cite{kitouni2021enhancing} by adding the performance of CDOT using the same metrics to the figure. The plot compares the methods on 1/JSD versus R50, where R50 is the background rejection power (inverse false positive rate) at $50\%$ signal efficiency, and JSD is the Jensen-Shannon divergence between the background observations that are above the cut at $50\%$ signal efficiency and the background observations that are below the same cut. Higher R50 values represent good classifier performance in separating the signal from the background, and higher 1/JSD values represent higher decorrelation (independence) between the classifier and the protected variable for the background data. Figure~\ref{fig:R50JSD} shows that while Disco and MoDe are able to achieve higher decorrelation scores at the expense of lower R50 values, looking at around R50 = 10, we see that CDOT is able to achieve higher decorrelation scores for a higher amount of accuracy. The CDOT curve represents the performance of the intermediate classifiers given by the optimal transport geodesic. We see that by moving away from the fully decorrelated classifier, the drop in decorrelation is much faster than the gain in background-rejection power, which suggests that it is recommended to use the fully decorrelated classifier for the signal detection test. We also note that estimating 1/JSD is very sensitive to the selection of bins used to calculate it. We chose the same binning used in \cite{kitouni2021enhancing}, but changing the binning even slightly changes the 1/JSD scores.

\begin{figure}[ht]
\begin{center}
\includegraphics[width=\textwidth]{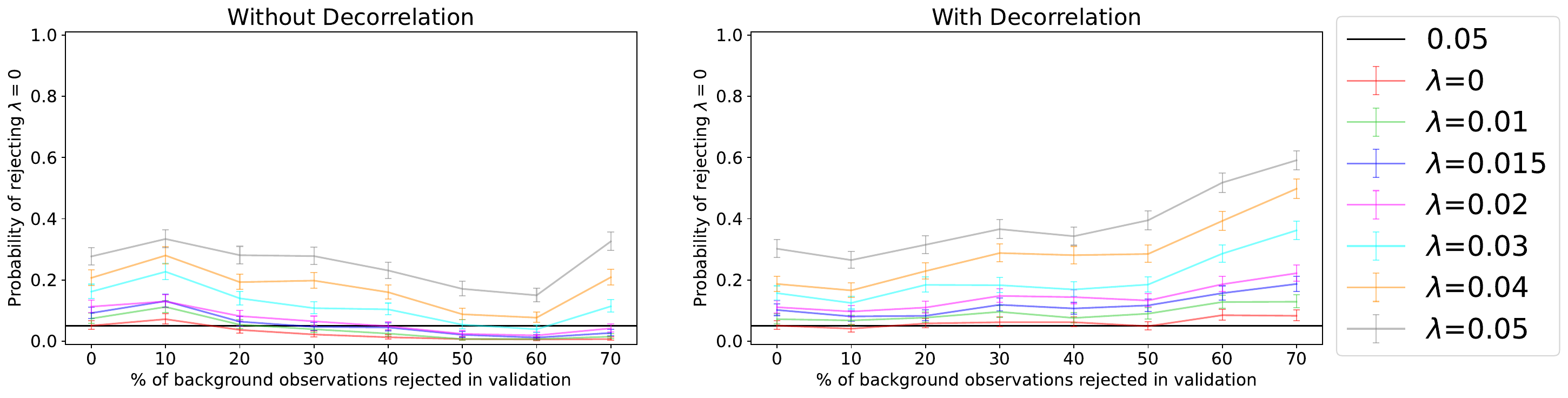}
\end{center}
\caption{The type-I error rate ($\lambda=0$) and power ($\lambda>0$) analysis for the W-tagging test dataset. Clopper-Pearson intervals are shown for all estimates. }
\label{img:wtagging_test_power}
\end{figure}

Regarding signal detection, using the model selection in eq.~\eqref{eq:test_selection}, we choose $K_*=35$ as the best-calibrated test. Figures displaying the empirical type-I error and p-value plots can be found in Appendix~\ref{sec:wtagging_appx}. Using the test dataset, we perform a power analysis of the test as the non-decorrelated and decorrelated classifiers are used to filter more data; see Figure~\ref{img:wtagging_test_power}. Figures~\ref{img:wtagging_background_fit_correlated} and \ref{img:wtagging_background_fit_decorrelated} show one instance of the observed data and fitted background for the original and decorrelated classifiers, respectively. Note that when sculpting is present, the estimated background includes a bump in the signal region. Hence, the signal-enriched tests that use the non-decorrelated classifier become more conservative, as shown by the near-zero type-I error rate in Figure~\ref{img:wtagging_test_power}, and lose power. Conversely, the signal-enriched test based on the decorrelated classifier avoids sculpting and approximately achieves the desired type-I error while improving its power as more data is filtered.


\begin{figure}[ht]
\begin{center}
\includegraphics[width=\textwidth]{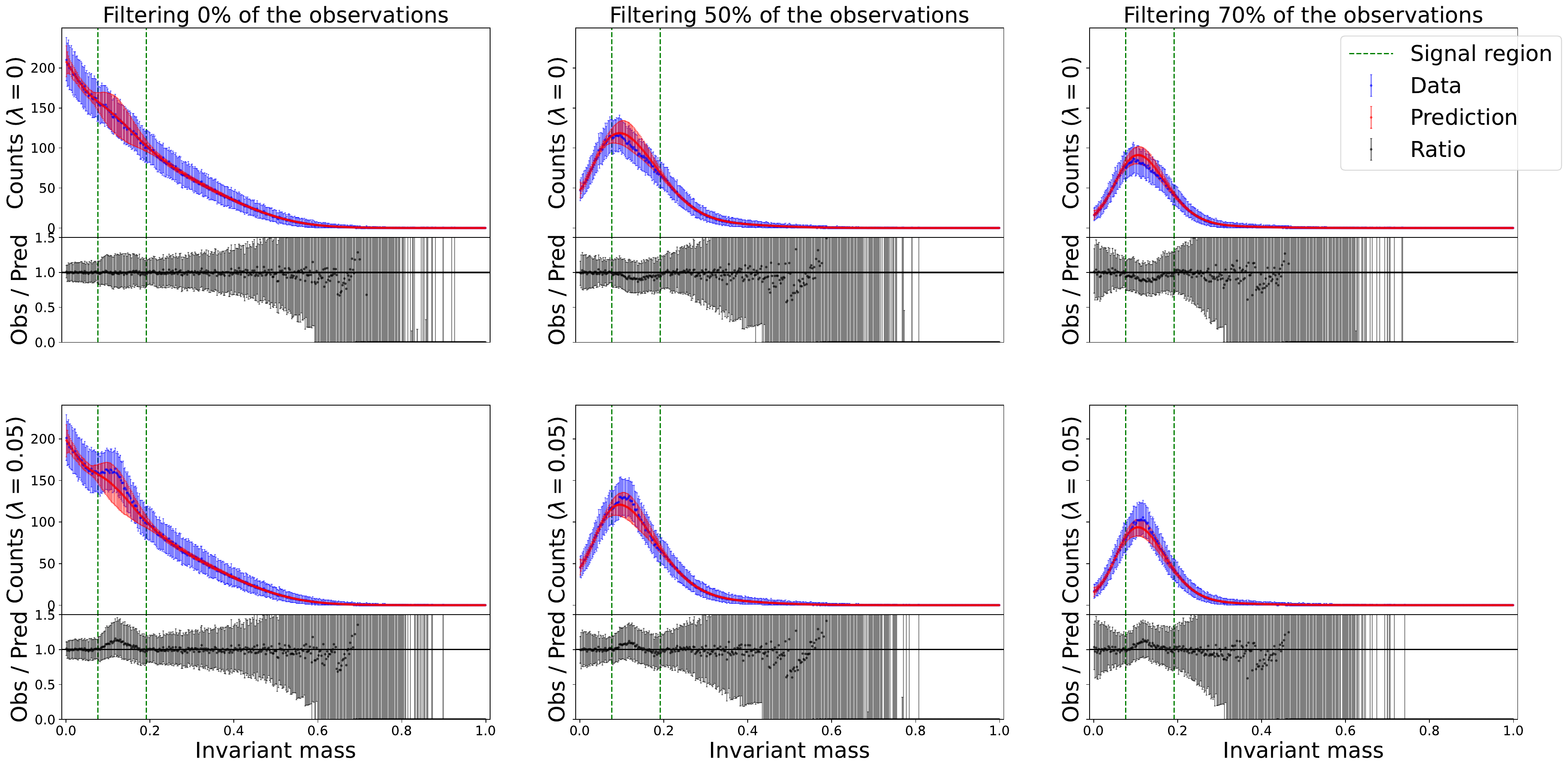}
\end{center}
\caption{Background estimates for the W-tagging test dataset as observations are filtered by a non-decorrelated classifier. In the first row, the null hypothesis is true ($\lambda=0$), while in the second it is false ($\lambda=0.05$). We report $95\%$ variability intervals from 1000 simulations. 
}\label{img:wtagging_background_fit_correlated}
\end{figure}

\begin{figure}[ht]
\begin{center}
\includegraphics[width=\textwidth]{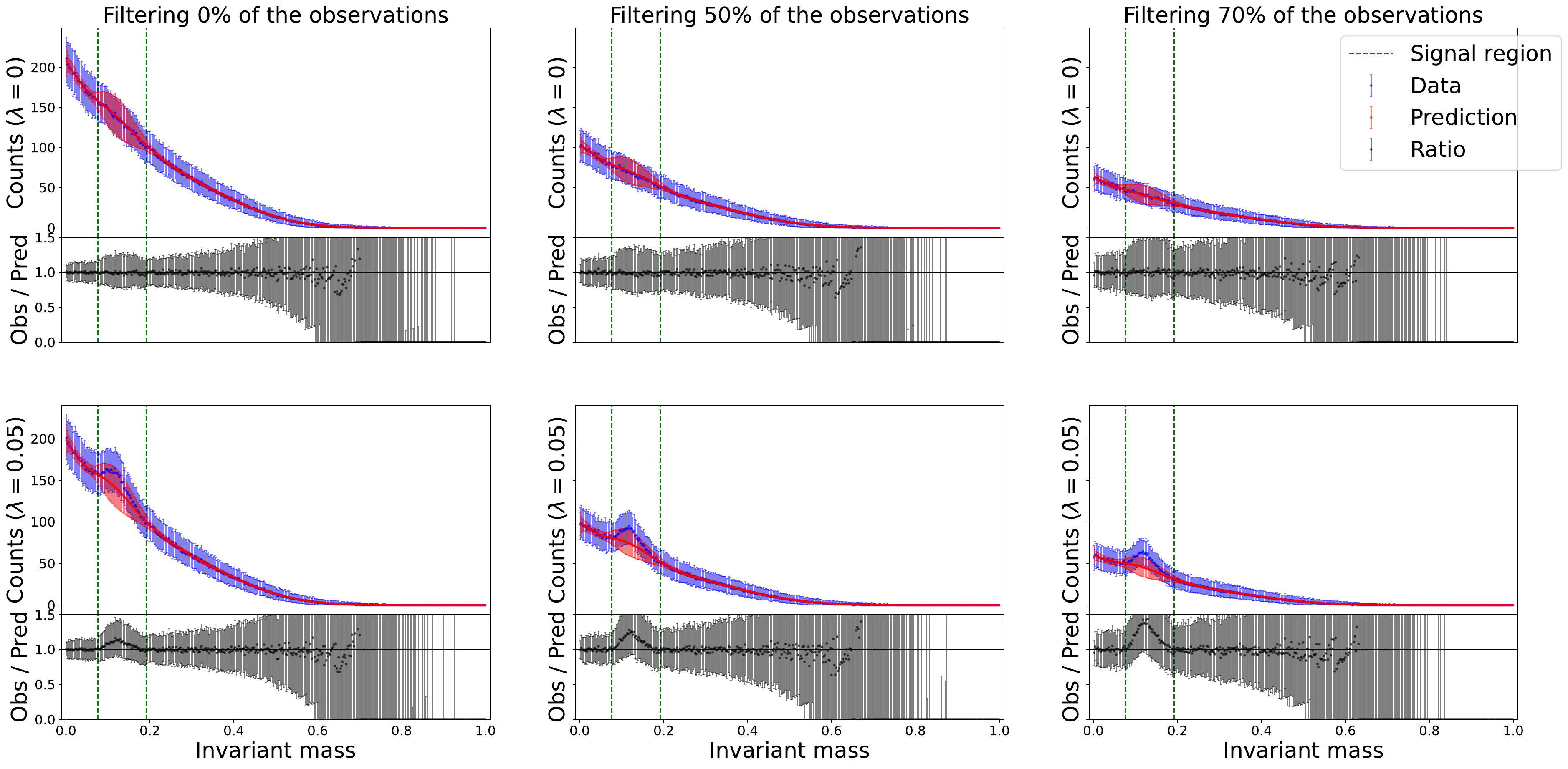}
\end{center}
\caption{Background estimates for the W-tagging test dataset as observations are filtered by the decorrelated classifier. We report $95\%$ variability intervals from 1000 simulations. 
}\label{img:wtagging_background_fit_decorrelated}
\end{figure}

\subsection{Detection of exotic high-mass resonance events}\label{sec:3b_4b}
Here we consider the search for exotic high-mass resonance events. The invariant mass is the protected variable, and the kinematic properties of the four jets produced in the experiment are used for training the auxiliary classifier (details in Section~\ref{sec:data}).

The dataset contains 463,848 3b events, 62,993 4b events, and 44,196 signal events. This was split into training data (3b: 50k, signal: 40k), validation data (3b: 120k) and test data (3b $\approx$ 294k, 4b $\approx$ 63k, signal $\approx$ 4k). 
The 3b and signal training data are used to fit a random forest classifier with 1000 trees and a minimal node size of 100 samples. The 3b validation data is used to train the CDOT Algorithm in Appendix~\ref{app::decorr} using Approach 2. The test data is used for the evaluation of the decorrelation algorithm and the signal detection test. This experiment is also used to check the robustness of our procedure to background misspecification, as the training is performed on the 3b background, whereas the test is performed on the 4b background.

\begin{figure}[hb]
\begin{center}
\includegraphics[width=\textwidth]{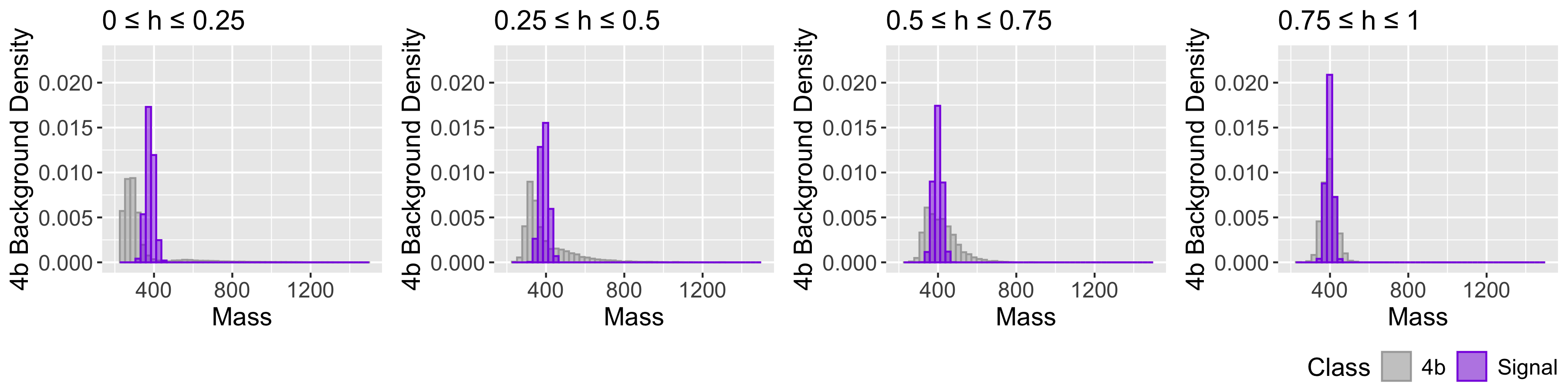}
\end{center}
\caption{Density plots of the invariant mass for the 4b data set with signal, for different ranges of the classifier ($h$) without any decorrelation. We observe sculpting.}
\label{img:4bnoDecorrelation}
\end{figure}
 
Figure \ref{img:4bnoDecorrelation} shows the effect of sculpting produced on the 4b background when signal enrichment is performed using the non-decorrelated classifier. In contrast, Figure \ref{img:3b4bDecorrelation} displays the distributions after decorrelating the classifier using CDOT. Note that though CDOT is trained on just the 3b background, it is robust to decorrelating the 4b background too.

\Revision{Finally, we turn to signal detection. The proposed test \eqref{eq:censored_MLE_test} depends on fitting the background with the Bernstein basis \eqref{eq:bernstein_basis}, which is supported on the interval $[0,1]$.} Since the $3b$ and $4b$ data sets are not supported on the unit interval, we map them to this interval. A visual inspection of the $3b$ background data in the validation dataset indicates that it decreases approximately exponentially. Thus, we transform all data in this section using the inverse CDF function of an exponential. Namely, given an observation, we apply the function $m \mapsto 1 - \exp\left\{- r \cdot (m - b)\right\}$, where $b$ is the minimum invariant mass of the $3b$ validation background and $r=0.003$, which provide an adequate fit to the data.

\begin{figure}[ht]
\begin{center}
\includegraphics[width=\textwidth]{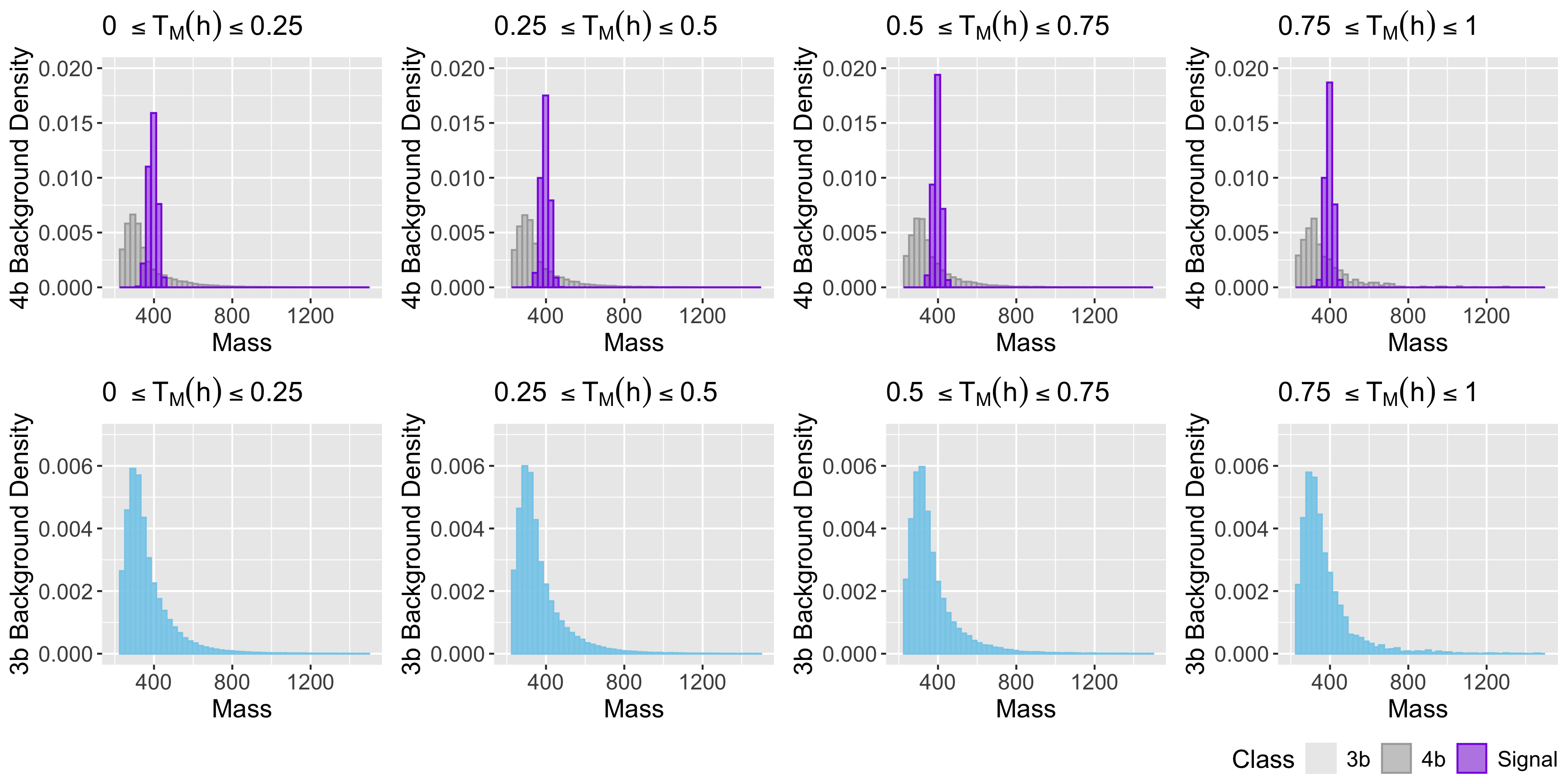}
\end{center}
\caption{Post-decorrelation density plots of the invariant mass for the 4b data set with signal (top) and the test 3b background (bottom), for different ranges of the transformed classifier ($T_M(h)$). We see that CDOT trained on 3b avoids sculpting 4b as well as 3b.}
\label{img:3b4bDecorrelation}
\end{figure}

\begin{figure}[H]
\begin{center}
\includegraphics[width=\textwidth]{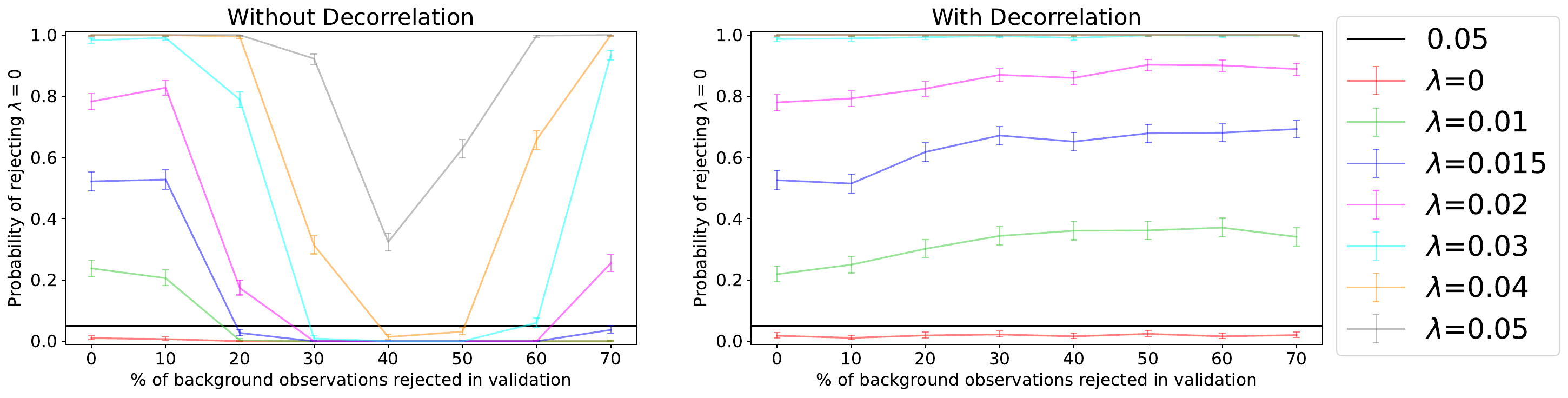}
\end{center}
\caption{Validity and power analysis on the 3b test dataset. }\label{img:3b_3b_test_power}
\end{figure}

\begin{figure}[H]
\begin{center}
\includegraphics[width=\textwidth]{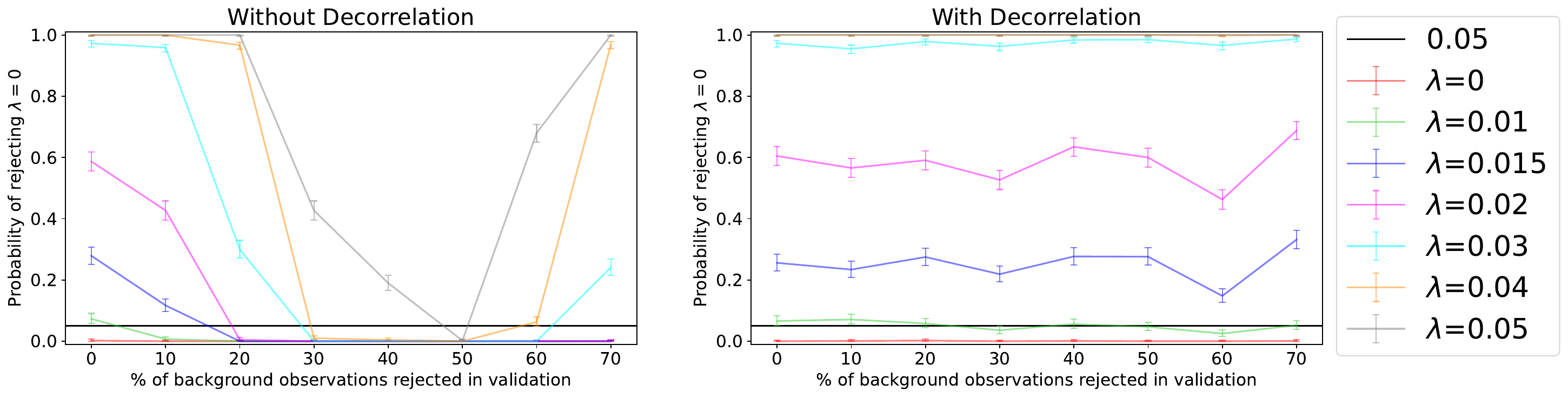}
\end{center}
\caption{Validity and power analysis on 4b test dataset. }\label{img:3b_4b_test_power}
\end{figure}

Analogously to the previous analysis, using the 3b validation dataset, we find that $K_*=20$ provides the best-calibrated test (see corresponding plots  in Appendix \ref{sec:3b_4b_appx}). The power analysis with the 3b test dataset is shown in Figure~\ref{img:3b_3b_test_power}, while the same analysis with the 4b test dataset is displayed in Figure~\ref{img:3b_4b_test_power}. When there is no distribution shift between the validation and test datasets, the conclusions are similar to the previous section. The decorrelated classifier preserves the shape of the background distribution; hence, the tests remain calibrated when enriched by a decorrelated classifier, which increases their power. However, when there is a distribution shift between the validation and test datasets, we observe that the decorrelated classifier does not lead to an improvement in power, while the non-decorrelated classifier actually decreases the power. Importantly, all tests remain valid, albeit conservative, despite the background misspecification. Plots displaying the background estimation are deferred to Appendix~\ref{sec:3b_4b_appx}.

\section{Discussion}
\label{sec:conc}

In this paper, we implemented the entire pipeline of signal detection in particle physics using a signal detection test after performing signal enrichment using a decorrelated transformed classifier. We conclude from the experiments that both signal enrichment using cuts on a classifier as well as decorrelating the classifier before performing cuts improves the power of the signal detection test. We also demonstrate in Section~\ref{sec:3b_4b} that the signal detection test, as well as the decorrelation algorithm proposed in this paper, CDOT, are robust to some background model misspecification. In the future, it would be interesting to study the robustness of the procedures to different kinds of model misspecifications. 

We note that CDOT, along with CNOTS \citep{algren2024decorrelation}, are post-processing decorrelation algorithms, meaning they can be applied to any pre-trained classifier to decorrelate it. The advantage of this is that we do not have to re-train the classifier again, lowering the computational cost of the whole procedure. It also has the advantage that domain experts can train a classifier for a particular application using their expert knowledge, and then use CDOT to decorrelate this pre-trained classifier. There may be some loss of signal detection power since we are optimizing the two problems, finding the optimal classifier and a decorrelated classifier, sequentially instead of together. From Section~\ref{sec:wtagging}, it appears that CDOT compares favorably with existing state-of-the-art decorrelation methods.

Finally, we note that it is possible to extend the decorrelation algorithm to accommodate multivariate protected variables by changing the conditional density estimation to accommodate higher-dimensional conditioned variables. However, the signal detection test would have to be modified to use a multivariate score test as in \cite{bickelTailormadeTestsGoodness2006}.

\pagebreak

\if0\blind
{
\textbf{Acknowledgments} We thank Patrick Bryant, John Alison, Ben Nachman, David Shih, Sara Algeri, Tudor Manole, Kenta Takatsu, Lukas Heinrich, Tobias Golling, Malte Algren, Johnny Raine, Samuel Klein, the STAMPS research group at CMU and the participants of \textit{Systematic Effects and Nuisance Parameters in Particle Physics Data Analyses} workshop and \textit{PHYSTAT - Statistics meets ML} at Imperial for useful discussions and comments. The authors gratefully acknowledge NSF grants PHY-2020295, DMS-2053804, and DMS-2310632.
}\fi


\appendix

\bigskip
\begin{center}
{\large\bf SUPPLEMENTARY MATERIAL}
\end{center}

\addcontentsline{toc}{section}{Supplementary material}
\etocdepthtag.toc{mtappendix}
\etocsettagdepth{mtchapter}{none}
\etocsettagdepth{mtappendix}{subsection}
\etocsettagdepth{mtreferences}{section}
\renewcommand{\contentsname}{Supplementary material}
{
\renewcommand{\contentsname}{\normalsize Table of contents}
\tableofcontents
}

\pagebreak




\section{Derivation and implementation of CDOT}
\label{app::decorr}

In this section, we show that the transformed classifier we need in Section~\ref{subsec:decorr} is given by the optimal transport map from the conditional density $p_b(z|m)$ to $p_b(z)$, where $Z = h(X)$, $X \sim p_b$, and $M$ is the protected variable.

\begin{lemma}\label{lemma:OT}
Given any two random variables $(Z, M)$, a map that minimizes \begin{equation}
\E\left[||T_M(Z) - Z||^2\right]
\end{equation} subject to $T_M(Z)$ independent of $M$ and $T_M(Z)$ has the same marginal distribution as $Z$, is given by the optimal transport map $T_m$ from $F_{z|m}$ to $F_z$ under the $\ell^2$ cost function, where $F_{z|m}$ and $F_z$ are the cdfs of $Z|M$ and $Z$ respectively.
\end{lemma}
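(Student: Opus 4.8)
The plan is to condition on $M$, which turns the problem into a family of one–dimensional optimal transport problems indexed by $m$ that can be minimized separately. First I would rewrite the two feasibility constraints in a single conditional form. An admissible transformation is a jointly measurable map $(m,z)\mapsto T_m(z)$; conditioning on $M=m$, its section $T_m$ pushes the conditional law $F_{z|m}$ of $Z$ forward to the conditional law of $T_M(Z)$ given $M=m$. If $T_M(Z)$ is independent of $M$, this conditional law is the same for $\mu_M$-almost every $m$ and coincides with the marginal law of $T_M(Z)$; requiring in addition that this marginal equal $F_z$ is then equivalent to
\[
(T_m)_{\#}F_{z|m}=F_z \qquad \text{for } \mu_M\text{-a.e.\ } m,
\]
where $\mu_M$ denotes the distribution of $M$. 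Conversely, any map with this property is admissible, since then $T_M(Z)\mid M=m$ has law $F_z$ for a.e.\ $m$, which forces both independence and the correct marginal. Hence the feasible set is exactly the collection of maps whose sections $T_m$ are transport maps from $F_{z|m}$ to $F_z$.

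Next I would decompose the objective by the tower rule,
\[
\E\!\left[\norm{T_M(Z)-Z}^2\right]=\int\!\left(\int \norm{T_m(z)-z}^2\,dF_{z|m}(z)\right)d\mu_M(m),
\]
and observe that, for each fixed $m$, the inner integral is the quadratic cost of the deterministic coupling of $F_{z|m}$ and $F_z$ induced by $T_m$, hence it is at least the value of the Monge problem between $F_{z|m}$ and $F_z$ for the cost $\norm{\cdot}^2$. Since the sections for distinct $m$ are unconstrained relative to one another, the double integral is minimized by minimizing the inner integral $m$-by-$m$, and the minimum over the feasible set is attained by taking, for each $m$, the optimal transport map from $F_{z|m}$ to $F_z$. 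Because $Z$ is univariate, this optimal map is the monotone rearrangement $T_m(z)=F_z^{-1}(F_{z|m}(z))$ (assuming $F_{z|m}$ atomless, which holds for a continuous classifier output), and it is jointly measurable in $(m,z)$, so the pointwise-optimal sections genuinely assemble into an admissible transformation. This proves the lemma and, along the way, the closed-form expression used in Section~\ref{subsec:decorr}.

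The step I expect to be the main obstacle is making the ``minimize $m$-by-$m$'' argument fully rigorous: one needs (i) the reduction of the first paragraph, namely that every admissible $T$ restricts, for $\mu_M$-a.e.\ $m$, to a coupling of $F_{z|m}$ and $F_z$ — this is precisely where the independence hypothesis is used — and (ii) a measurable-selection argument guaranteeing that the $m$-wise optimal maps can be chosen jointly measurably in $(m,z)$. In the univariate case (ii) is immediate from the closed form; in greater generality it would follow from standard measurable-selection results for optimal couplings, at the cost of assuming enough regularity (e.g.\ atomlessness of the conditionals) for Monge maps to exist.
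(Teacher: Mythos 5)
Your proposal is correct and follows essentially the same route as the paper's proof: condition on $M=m$, observe that the admissibility constraints are equivalent to each section $T_m$ pushing $F_{z|m}$ forward to $F_z$, and then minimize the tower-rule decomposition of the objective $m$-by-$m$ using the conditional optimality of the transport map, recovering the monotone rearrangement $F_z^{-1}\circ F_{z|m}$ in the univariate case. Your explicit two-way reduction of the independence-plus-marginal constraints to the a.e.\ pushforward condition, and the remark on measurable selection, are welcome refinements of details the paper leaves implicit, but they do not change the argument.
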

{\emph{Proof:}} Let $T_m$ denote the optimal transport map from $F_{z|m}$ to $F_z$, then by definition $T_m$ minimizes $\E\left[c(Z, T_M(Z))|M = m\right]$ 
for each $m$
and $P(T_m(Z) \leq t | M = m) = F_z(t)$, where $c(x,y)$ is the cost function. Now let $f_m(Z)$ be any other map such that given $M = m$, $f_m(Z)$ is distributed as $F_z$, that is, $P(f_m(Z) \leq t | M = m) = F_z(t)$. Then, by definition, when the cost function is the $\ell^2$ loss, $c(x,y) = ||x - y||^2$, 
\begin{align}
    & \E\left[||T_m(Z) - Z||^2|M = m\right]dP(m) \leq \E\left[||f_m(Z) - Z||^2|M = m\right]dP(m) 
\end{align}
Then
\begin{align}
    \E\left[||T_m(Z) - Z||^2\right] &= \int \E\left[||T_m(Z) - Z||^2|M = m\right]dP(m) \\
    &\leq \int \E\left[||f_m(Z) - Z||^2|M = m\right]dP(m) = \E\left[||f_m(Z) - Z||^2\right].
\end{align}
This shows that $T_m(Z)$ is independent of $M$, its marginal distribution is $F_z$ and it minimizes $\E[||T_M(Z) - Z||^2]$ among all such maps. Notice that in this result, $M$ or $Z$ can be of any dimension since the dimension does not enter in the argument.

In our case where we make the classifier output $Z = h(X)$ independent of $M$,  the form of the optimal transport map from $F_{z|m}$ to $F_z$ when $h(X)$ and $M$ are one-dimensional is derived by observing that by definition,
\begin{align*}
   & F_z(T_m(t)) = P(T_m(Z) \leq T_m(t) | M = m) = F_{z|m}(t) \\
  \implies  & T_m(t) = F_z^{-1}\left(F_{z|m}(t) \right).
\end{align*}

We remark that an alternative approach is to choose $T_m(z)$ such that $M$ is independent of $T_M(Z)$ and such that the Wasserstein distance between
$(Z,M)$ and $(T_M(Z),M)$ is minimized. The solution in this case is to define $T_m(z)$ to be the optimal transport map from $F_{z|m}$ to $B$ where $B$ is the barycenter, that is, the distribution that minimizes $\int W^2(F_{z|m},B) dP(m)$.

The CDOT (Classifier Decorrelated via Optimal Transport) procedure is detailed below: 

\begin{method}
\label{method:CDOT}
CDOT decorrelated classifier output on the experimental data $\mathcal{E}$. \\
Input: $\X_i \in \mathcal{B}$ (Training: $\mathcal{B}_1$, Validation: $\mathcal{B}_2$), $\Y_j \in \mathcal{S}, \W_k \in \mathcal{E}, M_{i}^\X, M_{j}^\Y, M_{k} \ \forall \ i,j,k$.\\
Output: $T_{M_k}(h(W_{k}))$, decorrelated classifier output for every $k$.
\begin{enumerate}
    \item Train a probabilistic classifier $h$ on $X_i \in \mathcal{B}_1, Y_j \in \mathcal{S} \ \forall \ i,j$, which gives the probability of being a signal event given the data.
    \item Evaluate $h(X_i), h(Y_j), h(W_k) \ \forall \ i,j,k$.
    \item Estimate the conditional distribution $F_{z|m}$ for $X_i \in \mathcal{B}_2$ ($Z = h(X)$) using one of the two approaches.
    \begin{enumerate}
        \item Approach 1: 
        \begin{enumerate}
            \item Choose splits in the range of $M$, $\{m_0, m_1, \ldots, m_l\}$.
            \item For each split $i \in \{1, \ldots, l\}$, estimate $\widehat{F}_{z|m}^{(i)}$ using a kernel conditional distribution estimator with fixed optimal bandwidths on  $Z_j = h(X_j)$ and $M_{j}^\X$ for $\X_j \in \mathcal{B}_2$ such that $m_{i - 1} \leq M_{j}^\X \leq m_i$.
            \item Then for the experimental data, \begin{equation}
               \widehat{F}_{z|m}(h(W_k)) = \widehat{F}_{z|m}^{(i)}(h(W_k)) \textif m_{i - 1} \leq M_{k} \leq m_i 
            \end{equation}
        \end{enumerate}
        \item Approach 2:
        \begin{enumerate}
            \item Estimate $\widehat{\mu}(M_{i}^\X) = \widehat{\E}\left[logit(h(X_i))| \log(M_{i}^\X)\right]$ (non-parametric regression).
    \item Evaluate $\widehat{\eta}_i = logit(h(X_i)) - \widehat{\mu}(M_{i}^\X)$.
    \item Estimate $\widehat{\sigma}^2(M_{i}^\X) = \widehat{\E}\left[\widehat{\eta}_i^2| \log(M_{i}^\X)\right]$ (non-parametric regression).
    \item Evaluate $\widehat{\epsilon}_i = \widehat{\eta}_i/\widehat{\sigma}(M_{i}^\X)$.
    \item Estimate $\widehat{F}_{\epsilon|m}$ using a kernel conditional distribution estimator with a fixed optimal bandwidth on $\epsilon_i$'s given $\log(M_{i}^\X)$'s. 
    \item Then for the experimental data, \begin{equation}
    \widehat{F}_{z|m}(h(W_k)) = \widehat{F}_{\epsilon|m}\left[ \left(logit(h(W_k)) - \widehat{\mu}(M_{k})\right)/\widehat{\sigma}(M_{k})\right]
    \end{equation}
        \end{enumerate}
    \end{enumerate}
    \item Estimate  $\widehat{F}_z(t)$ using the empirical CDF estimator on the background data $\mathcal{B}_2$.
    \item Evaluate the CDOT classifier output as $T_{M_k}(h(W_k)) = \widehat{F}_z^{-1}\left(\widehat{F}_{z|m}(h(W_k))\right)$. 
\end{enumerate}
\end{method}

Note that all the kernel conditional distribution (CCDF) estimators were obtained using the \texttt{npcdist} function in the R package \texttt{np} \citep{li2008nonparametric, hayfield2008nonparametric, li2013optimal}. The fixed optimal bandwidths in both approaches above were chosen using the least-squares cross-validation method of \cite{li2008nonparametric} and \cite{li2013optimal}.

\Revision{
\section{Semiparametric signal detection}\label{appx:signal_detection}

In \cref{sec:efficient_estimator}, we define efficient estimators. \Cref{sec:efficient_score_function} presents a method for identifying such estimators using score functions. Both sections summarize key results from \cite{vaartAsymptoticStatistics1998,vaartSemiparametric2002}. Finally, in \cref{sec:efficient_estimator_known_background} and \cref{sec:efficient_estimator_parametric_background}, we apply the results to find efficient estimators and test for the model \eqref{eq:model} assuming known and parametric background, respectively.

Henceforth, we let $L_2(F)$ denote the set of all measurable functions $g: \Omega \to \mathbb{R}$ such that their second moment with respect to the measure $F$ exists: $\int g^2\ dF <\infty$.

\subsection{Asymptotically semiparametric efficient estimators}\label{sec:efficient_estimator}

This section defines the concept of an efficient estimator, see \cref{def:efficient_estimator}. Intuitively, the idea is that an efficient estimation is asymptotically unbiased and has the \textit{smallest variance}. To formalize this intuition, we pursue the following approach: given a model $\mathbb{F}$, and a functional $\psi: \mathbb{F}\to \mathbb{R}^k$, estimating the functional on the model $\mathbb{F}$ cannot be harder than estimating the functional over a sub-model $\mathbb{F}_0 \subset \mathbb{F}$. Thus, by choosing sub-models that are hard enough, we can characterize the hardness of estimating a functional in the original model $\mathbb{F}$.

\begin{definition}[Differentiable path or sub-model] We call the map $t \to F_{t,g}$ a differentiable path at $t=0$ with score function $g$ if \begin{equation}\label{eq:score_function}
\int\left(\frac{dF_{t,g}^{1/2}-dF^{1/2}}{t}-\frac{1}{2}\ g\ dF^{1/2}\right)^2 \to 0 \as t \to 0.
\end{equation}\end{definition}

In \eqref{eq:score_function}, $g$ plays the role of a derivative. Each differentiable path, defines a sub-model $\{F_{t,g}: t\geq 0\}\subseteq \mathbb{F}$ with score function $g$. 

We briefly note that all score functions are centered and have finite variance, which guarantees that the next statements are well defined \begin{proposition}
For any $g$ that satisfies \eqref{eq:score_function}, it holds that \begin{equation}
E_{X\sim F}\left[g(X)\right]=0 \textand V_{X\sim F}\left[g(X)\right]<\infty.
\end{equation}
\end{proposition}

We define a tangent set as the collection of score functions corresponding to the considered sub-models.  

\begin{definition}[Tangent set]\label{def:tangent_set} Given a collection of differentiable paths $\mathbb{M}$, 
\begin{equation}
\Lambda(F) = \bigcup\{g: \exists\ m\in\mathbb{M} \text{ that is a differentiable path with score function } g \text{ at } t=0 \}.
\end{equation}
\end{definition}

Given the tangent set, we can define a functional derivative for $F \to \psi(F)$.

\begin{definition}[Differentiable functional] A functional $\psi:\mathbb{F} \to \mathbb{R}^k$ is differentiable at $F \in \mathbb{F}$ relative to the tangent set $\Lambda(F)$ if there exists a continuous bounded linear map $\dot\psi_F: L_2(F) \to \mathbb{R}$ in the tangent set $\Lambda(F)$ such that for every map $t\to F_{t,g}$ with $g\in \Lambda(F)$, it holds that \begin{equation}\label{eq:psi_derivative}
\frac{\psi(F_{t,g})-\psi(F)}{t}\to \dot\psi_F\ g \as t\to0. 
\end{equation}
\end{definition}%
By the Riesz representation theorem, if $\dot\psi_F$ exists, then it can be represented as an inner product: \begin{equation}\label{eq:riesz}
\dot\psi_F\ g = E_{X\sim F}\left[ \psi^*_F(X) \cdot g(X) \right].
\end{equation} Note that $\psi^*_F$ might not be uniquely defined in $\Lambda(F)$, but it is guaranteed to be unique in the closure of the linear span of $\Lambda(F)$, which we denote by $\overline{\lin(\Lambda(F))}$.

\begin{definition}[Efficient influence function]\label{def:efficient_influence_function} If $\dot\psi$ satisfying \eqref{eq:psi_derivative} exists, we call the unique solution $\psi^*_F$ of \eqref{eq:riesz} in $\overline{\lin(\Lambda(F))}$ the efficient influence function.
\end{definition} The efficient influence function by taking any solution of \eqref{eq:riesz} and projecting it into the closure of the linear space of $\Lambda(F)$.

The following lemma shows that the variance of the efficient influence function characterizes the best risk achievable by an estimator. In general, let $V^*_F$ be the covariance of the efficient influence function \begin{equation}
V^*_F = E_{X\sim F}\left[\psi^*_F(X)\cdot \psi^*_F(X)^T\right], 
\end{equation} which reduces to $V^*_F = V_{X\sim F}\left[\psi^*_F(X) \right]$ if if $\psi^*_F(X)$ is a scalar.

\begin{lemma}[Variance of efficient influence function lower bounds squared risk]\label{lemma:LAM}
Let $\psi$ be a differentiable functional at $F$ relative to the tangent set $\Lambda(F)$ with efficient influence function $\psi^*_F$. It holds that \begin{equation}\label{eq:LAM}
\lim_{\delta\to0}\liminf_{n\to\infty}\sup_{Q: \TV(F,Q)<\delta}E_{X\sim Q}\left[\sqrt{n}(\ T_n(X)-\psi(Q)\ )\right]^2 \geq E_{X\sim \Normal\left(0,V^*_F\right)}\norm{X}_2^2
\end{equation} where $\TV$ denotes the total variation distance.
\end{lemma} 

Importantly, recall from \cref{def:efficient_influence_function} that the efficient influence function is defined relative to $\Lambda(F)$. As we consider more sub-models, the tangent space expands, and the right-hand side of \eqref{eq:LAM} cannot decrease. 

In practice, we aim to construct an estimator that attains the lower bound in \eqref{eq:LAM}. To achieve this, we specify a sub-model and an estimator, then check whether the estimator's variance equals that of the efficient influence function. If not, we may consider additional sub-models. For our purposes, however, a single sub-model always suffices.

To ensure asymptotically valid inference, we need the limit distribution of a sequence of estimators. Although we omit the technical details, \eqref{eq:LAM} suggests that an optimal limit distribution is a centered Gaussian distribution with covariance given by $V_{X\sim F}[\psi^*_F(X)]$. For a rigorous treatment of this result, see Chapter 2 of \cite{vaartAsymptoticStatistics1998}. The next two definitions characterize an efficient estimator as any smooth estimator that achieves this normal limit.

\begin{definition}[Regular estimator] A sequence of estimators $T_n$ is regular at $F$ for estimating $\psi(F)$ relative to the tangent set $\Lambda(F)$ if there exists a probability measure $L$ such that \begin{equation}
\sqrt{n}\left(T_n-\psi(F_{n^{-1/2},g})\right) \toD L \as n\to\infty \quad \forall g \in \Lambda(F)
\end{equation}
\end{definition}

\begin{definition}[Asymptotic efficiency]
\label{def:efficient_estimator}
A sequence of estimators $T_n$ is asymptotically efficient for estimating $\psi(F)$ if it is regular at $F$ relative to the tangent set $\Lambda(F)$ and has limit distribution $L = \Normal\left(0, V^*_F\right)$.
\end{definition}

This definition implies the following convergence in distribution:
\begin{equation}
\sqrt{n}\left(T_n - \theta\right) \toD \Normal\left(0, V^*_F\right) \quad \text{as } n \to \infty.
\end{equation} By applying Slutsky's theorem, we derive a result that facilitates practical inference:
\begin{equation}\label{eq:practical_clt}
\left[V^*_{F_n}\right]^\dagger \sqrt{n}\left(T_n - \theta\right) \toD \Normal\left(0, I\right) \quad \text{as } n \to \infty,
\end{equation}
where $\left[V^*_{F_n}\right]^\dagger$ denotes the Moore–Penrose pseudoinverse of $V^*_{F_n}$ and $I$ is the identity matrix.

\subsection{The efficient score function estimator}\label{sec:efficient_score_function}

In this section, we present the standard approach for deriving the efficient influence function and constructing asymptotically efficient estimators. This method relies on the model's score function to enable practical inference through equation~\eqref{eq:practical_clt}.

In our work, there are two models. Let 
\begin{equation}\label{eq:model_appx}
\mathbb{F}=\{F_{\theta,s} : \theta \in \Theta \textand s \in \mathcal{S}\} \where \mathcal{S} = \left\{s \in \mathbb{D}: \int_{\Sr} s(x)\ dx=1\right\}.
\end{equation} where $\mathbb{D}$ is the set of densities supported on $\Omega$. When the background is known, we define
\begin{equation}\label{eq:model_known_background}
f_{\theta,s} = (1-\lambda) \cdot b + \lambda \cdot s \comma \theta=\lambda \textand \Theta=(0,1).
\end{equation} Alternatively, when the background is parametric, we define
\begin{equation}\label{eq:model_parametric_background}
f_{\theta,s} = (1-\lambda) \cdot b_\gamma + \lambda \cdot s \comma \theta=(\lambda,\gamma) \textand \Theta=(0,1) \times A
\end{equation} where $A$ is an open subset of $
\{\gamma \in \mathbb{R}^k : b_\gamma \in \mathbb{D}\}$.

The score function of the model \eqref{eq:model_appx} is given by \begin{equation}
g_{\theta,s} = \nabla_{\theta}\log f_{\theta,s}.
\end{equation}

In all cases, we use sub-models defined by differentiable paths of the form \begin{equation}\label{eq:submodel}
t \to F_{\theta, s_t(h)} \where s_t(h)= s\cdot (1+t\cdot h) \textand \mathbb{H} = \left\{ h \in \mathcal{S}: \int_{\Omega} h(x)\ dx \cdot s(x) = 0 \right\}.
\end{equation} It follows that $t \to F_{\theta, s_t}$ is a differentiable path at $t=0$ with score function \begin{equation}
g_{s,h} = \left[\frac{\partial}{\partial t}\log f_{\theta,s_t(h)}\right]_{t=0}= \lambda \cdot \frac{s\cdot h}{f_{\theta,s}}.
\end{equation}

We obtain the tangent space, see \cref{def:tangent_set}, by considering all such differentiable paths: \begin{equation}\label{eq:tangent_set}
\Lambda(F) = \left\{g_{s,h}: \exists h \in \mathbb{H}\ \text{ such that } g_{s,h} \text{ is the score function of }(t \to F_{\theta, s_t(h)}) \right\}.
\end{equation}

\begin{definition}[Efficient score function]
We define the efficient score function $g^*_{\theta}$ as the orthogonal projection of the score function $g$ onto the closure of the linear span of $\Lambda(F)$. \begin{equation}\label{eq:efficient_score}
g^*_{\theta} = g_{\theta,s} - \Pi g_{\theta,s} \where \Pi g_{\theta} = \inf_{q \in \overline{\lin \Lambda(F)}} \int  \norm{g_{\theta}-q}_2^2\ dF_{\theta,s}
\end{equation}.
\end{definition} Note that we have omitted $s$ in the notation for the efficient score function. This is because, in our analyses, the efficient score function does not depend on the nuisance signal $s$, allowing us to use a simpler CLT for our Z-estimator than is usually required. 

Given the efficient score, we define the set of efficient score estimands \begin{equation}\label{eq:efficient_score_estimand}
\Theta_* \in \left\{\theta: E_{X\sim F_{\theta,s}}\left[g^*_{\theta}\right](X)=0\right\}.
\end{equation} Furthermore, we define an efficient score estimator $\theta_*(F_n)$ as any zero of the efficient score evaluated on the available data:
\begin{equation}\label{eq:efficient_score_estimator}
\theta_*(F_n) \st \sum_{i=1}^n g^*_{\theta_*(F_n)}(X_i)=0.
\end{equation} 

Finally, the following theorem establishes the conditions under which an efficient score estimator is asymptotically efficient.

\begin{theorem}\label{clt_z_estimator} If the following conditions hold \begin{itemize}
\item $\{g^*_\theta : \theta \in \Theta\}$ is $F$-Donsker
\item The map $\theta \to g^*_\theta$ is differentiable at $\theta_*$ with non-singular derivative. 
\item The map $\theta \to g^*_\theta$ is continuous in $L_2(F)$ at $\theta_*$
\end{itemize} Then, the efficient influence function, see \cref{def:efficient_influence_function}, of $\psi(F_{\theta,s})=\theta$ is \begin{equation}\label{eq:efficient_influence_function}
\psi^*_{F_{\theta_*,s}} = \left[V^*_{F_{\theta_*,s}}\right]^{-1}g^*_{\theta_*} \where V^*_{F_n}=E_{X\sim F_{\theta_*,s}}\left[g^*_{\theta_*}(X)g^*_{\theta_*}(X)^T\right].
\end{equation} Furthermore, if for any $\theta_* \in \Theta_*$, $\theta_*(F_n)$ is consistent estimator of $\theta_*$,  it follows that $\theta_*(F_n)$ is asymptotically efficient for $\psi(F_{\theta_*,s})=\theta_*$, see \cref{def:efficient_estimator}.
\end{theorem} 

We do not detail what $F$-Donsker condition entails, but note that its definition can be found in section 6.2 of \cite{vaartAsymptoticStatistics1998}, and that all the efficient score functions that we use in this article satisfy the condition.

In practice, \cref{clt_z_estimator} is useful for inference due to the following limit holding: \begin{equation}
\left[V^*_{F_n}\right]^\dagger\sqrt{n}\left(\theta_*(F_n)-\theta_*(F_{\theta,s})\right) \toD \Normal\left(0,I\right).
\end{equation}



\subsection{Efficient estimator with known background}\label{sec:efficient_estimator_known_background}

\EfficientEstimatorKnownBackground*
\begin{proof}[Proof of \cref{lemma:EfficientEstimatorKnownBackground}]

For the model \eqref{eq:model_known_background}, the score function is \begin{equation}
g = \frac{s-b}{f}.
\end{equation}%
%
Consider the submodel \eqref{eq:submodel}, the orthogonal projection of the score function $g$ onto the tangent set $\Lambda(F)$ \eqref{eq:tangent_set} is given by 
\begin{equation}
\Pi g = \frac{f}{s}\left[g-\frac{B(\Cr)}{F(\Sr)}\right] \cdot I_{\Sr}.
\end{equation} 
The efficient score $g^*$ \eqref{eq:efficient_score} is: \begin{equation}
g^* = g - \Pi g = \frac{B(\Cr)}{F(\Sr)}\cdot I_{\Sr}-\frac{1}{1-\lambda}\cdot I_{\Cr}.
\end{equation}%
The efficient score estimand $\lambda$ is defined as the solution of \begin{equation}
\int g^*(x,\ \lambda\ )\ f(x)\ dx = 0.
\end{equation} which is \begin{equation}\label{eq:estimand1}
\lambda= 1 - \frac{F(\Cr)}{B(\Cr)}.
\end{equation} Analogously the efficient score estimator $\lambda(F_n,B)$ is defined as the plug-in estimator of \eqref{eq:estimand1}: \begin{equation}
\lambda(F_n,B) = 1 - \frac{F_n(\Cr)}{B(\Cr)}.
\end{equation} 
Since we have an explicit formula for the estimator and the estimator, we do not need to rely on \cref{clt_z_estimator} to obtain the asymptotic distribution of $\lambda(F_n,B)$ and can use a standard CLT argument.
Note that it is unbiased, \begin{equation}
E[\lambda(F_n,B)]=\lambda,
\end{equation} and its variance is \begin{equation}
V[\lambda(F_n,B)]=\frac{1}{n}\cdot (1-\lambda)\cdot \frac{F(\Sr)}{B(\Cr)}=\frac{1}{n}\cdot (1-\lambda)\cdot \left(\frac{B(\Sr)}{B(\Cr)}+\lambda\right).
\end{equation} Thus, by the standard central limit theorem and the continuous mapping theorem, we have that \begin{equation}\label{eq:clt_known_background}
\sqrt{n}\left(\lambda(F_n,B) -\lambda\right)\toD \Normal\left(0\ ,\ (1-\lambda)\cdot \frac{F(\Sr)}{B(\Cr)}\right) \as n \to \infty
\end{equation} insofar as $\lambda \in [0,1)$ and $B(\Cr)>0$.

To verify that $\lambda(F_n,B)$ is the asymptotically efficient estimator of $\lambda$, we must check that its variance asymptotically matches the variance of the efficient influence function. %
The efficient influence function \eqref{eq:efficient_influence_function}  is 
\begin{align}
\psi(F,x) &= \left[-\int \pdv{g^*}{\lambda}(t)\ f(t)\ dt\right]^{-1}g^*(x)\\
&= \frac{F(\Sr)(1-\lambda)}{B(\Cr)}\left[\frac{B(\Cr)}{F(\Sr)}\cdot I(x\in\Sr)-\frac{I(x\in\Cr)}{1-\lambda}\right] \\
&= (1-\lambda)\cdot I(x\in\Sr)-\frac{F(\Sr)}{B(\Cr)}\cdot I(x\in\Cr),
\end{align} and its variance is 
\begin{align}
\tau^2(F)=\int \psi^2(F,x)\ f(x)\ dx = (1-\lambda)^2F(\Sr) + \frac{F^2(\Sr)}{B(\Cr)} (1-\lambda) = (1-\lambda)\cdot \frac{F(\Sr)}{B(\Cr)}.
\end{align} Since the variance of the estimator $\lambda(F_n,B)$ matches the variance of the efficient influence function, by \cref{def:efficient_estimator} and \cref{clt_z_estimator}, $\lambda(F_n,B)$ is the efficient estimator of $\lambda$ for $\lambda\in (0,1)$ and $B(\Cr)>0$. 

Furthermore, by \eqref{eq:clt_known_background} and Slutsky's theorem, it follows that:
\begin{equation}\label{eq:limit1}
\sqrt{n}\left(\frac{\lambda(F_n,B)-\lambda}{\tau(F_n)}\right) \toD \Normal(0,1). \for \lambda\in[0,1) \textand B(\Cr)>0
\end{equation} 
Consequently, the test $\Psi_\alpha(F_n,B)=I(T(F_n,B) > z_{1-\alpha})$ where \begin{equation}
T(F_n,B) = \sqrt{n} \cdot \frac{\lambda(F_n,B)}{\tau(F_n)}=\sqrt{n} \cdot\frac{F_n(\Sr)-B(\Sr)}{\sqrt{F_n(\Sr)(1-F_n(\Sr))}},
\end{equation} is an asymptotically valid $\alpha$-level test.
\end{proof}

\subsection{Efficient estimator with parametric background}\label{sec:efficient_estimator_parametric_background}

\EfficientEstimatorParametricBackground*
\begin{proof}[Proof of \cref{lemma:EfficientEstimatorParametricBackground}] For the model \eqref{eq:model_known_background}, the score function is $g=(g_\lambda,g_\gamma)$ where \begin{align}
g_\lambda= (b_\gamma-s)/f \textand 
g_\gamma=\frac{1-\lambda}{f}\cdot \pdv{b}{\gamma}.
\end{align}
%
Consider the sub-mode \eqref{eq:submodel}. The orthogonal projection of $g$ onto $\Lambda$ is given by \begin{equation}
\Pi g = \left[g  - \frac{E[g \cdot I_{\Sr}]}{F(\Sr)}\right] I_{\Sr}. 
\end{equation} Consequently, we the efficient score \eqref{eq:efficient_score} is: \begin{equation}
g^* = g - \Pi g = g \cdot I_{\Cr} + \frac{E[g \cdot I_{\Sr}]}{F(\Sr)} I_{\Sr}, 
\end{equation} where \begin{align}
g^*_{\lambda} &= \frac{B_\gamma(\Cr)}{(1-\lambda)B_\gamma(\Sr)+\lambda} \cdot I_{\Sr}- \frac{1}{1-\lambda} \cdot I_{\Cr} \textand\\
g^*_{\gamma} &=  \frac{1-\lambda}{(1-\lambda)B_\gamma(\Sr)+\lambda} \cdot \left(\int_{\Sr} \pdv{b_\gamma}{\gamma}\right) \cdot I_{\Sr}  + \pdv{b_\gamma}{\gamma} \cdot \frac{1}{b_\gamma} \cdot I_{\Cr}.
\end{align} Then, the efficient score estimands \eqref{eq:efficient_score_estimand} of interest are: \begin{equation}
(\lambda_*(F),\gamma_*(F)) \st \int g^*(x,\lambda_*(F),\gamma_*(F))\cdot f(x)\ dx = 0 ,
\end{equation} which correspond to the following $M$-estimand \begin{align}\label{eq:estimand2}
(\lambda_*(F),\gamma_*(F)) = \argmax_{\tilde{\gamma},\tilde{\lambda}}\ \int\ \ell(x,\tilde{\lambda},\tilde{\gamma}) \cdot f(x)\ dx \st B_\gamma(\Omega) = 1
\end{align} where $\ell(x,\lambda,\gamma)= I(x \in \Sr)\cdot\log\left((1-\lambda)B_\gamma(\Sr)+\lambda\right) +  I(x\in \Cr)\cdot \log\left((1-\lambda)b_\gamma(x)\right)$. The efficient score estimators \eqref{eq:efficient_score_estimator} are the corresponding plug-in estimators $(\lambda_*(F_n),\gamma_*(F_n))$. It is worth noting that the efficient estimator for the signal strength is the same as the efficient estimator with known background, but replacing the known background with the estimated parametric background \begin{equation}
\lambda_*(F_n) = 1 - \frac{F_n(\Cr)}{B_{\gamma_*(F_n)}(\Cr)}.
\end{equation} However, there is no closed-form solution for the background parameters $\gamma_*(F_n)$. Thus, we propose an expectation-maximization \cite{dempsterMaximumLikelihoodIncomplete1977} algorithm to solve for them in \cref{sec:EM_censored_MLE}. 

Regarding inference, by \cref{clt_z_estimator}, the efficient influence function \eqref{eq:efficient_influence_function} is \begin{align}\label{eq:influence_function}
\psi(F,x) = \left[\int \pdv{\ell}{\theta,\theta}(x,\theta_*(F))\cdot f(x)\ dx\right]^\dagger \cdot \pdv{\ell}{\theta}(x,\theta_*(F)) 
\end{align} where $\theta_*(F)=(\lambda_*(F),\gamma_*(F))$, and its variance is $
\tau^2(F)=\int \psi(F,x)\cdot \psi^t(F,x)\cdot f(x)\ dx
$. Furthermore, let $\tau_\lambda(F_n)=\sqrt{\tau^2_{\lambda,\lambda}(F_n)}$. It follows by \cref{clt_z_estimator}  that \begin{equation}\label{eq:limit2}
\sqrt{n}\left(\frac{\lambda_*(F_n)-\lambda_*(F)}{\tau_\lambda(F_n)}\right) \toD \Normal(0,1) \for \lambda_*(F) \in (0,1) \textand B_{\gamma_*(F)}(\Cr)>0.
\end{equation} Finally, under the null hypothesis $H_0: \lambda_*(F)=0$, the variance of $\lambda_*(F_n)$ does not vanish. Consequently, \eqref{eq:limit2} also holds for $\lambda_*(F)=0$ and the test: \begin{align}\label{eq:efficient_test_known_background}
\Psi_\alpha(F_n)=I(T(F_n) > Z_{1-\alpha})
\where T(F_n) = \sqrt{n} \cdot \frac{\lambda_*(F_n)}{\tau_\lambda(F_n)}
\end{align} is an asymptotically valid $\alpha$-level test. \end{proof}}

\subsection{Expectation-Maximization for censored MLE}\label{sec:EM_censored_MLE}

Henceforth, assume that the parametric background is given by a truncated series \begin{equation}\label{eq:truncated_series}
b=b_\gamma=\sum_{k=1}^K \gamma_k \cdot \phi_k(x),  \where \gamma \in \mathbb{R}^K, 
\end{equation} and  recall the censored MLE estimator
\begin{align}
\label{eq:loss}
(\gamma_*(F_n),\lambda_*(F_n)) = \argmax_{\gamma,\lambda}\sum_{i=1}^n\ell(M_i,\lambda,\gamma) \st B_\gamma(\Omega) = 1 
\end{align} where the loss censors the signal region \begin{equation}
\ell(m)= I(m \in \Sr)\cdot\log\left((1-\lambda)\cdot\left(1-B_\gamma(\Cr)\right)+\lambda\right) +  I(m\in \Cr)\cdot \log\left((1-\lambda)\cdot b_\gamma(m)\right)
\end{equation} In the following let $n_{\Sr}$ denote the number of observations in the signal region $n_{\Sr}= \sum_{i=1}^nI(M_i\in \Sr)$, and $n_{\Cr} = n- n_{\Sr}$ denote the number of observations in the control region.

The above optimization doesn't have a closed-form solution, but the first-order optimality conditions indicate that the solution satisfies the following equalities \begin{gather}\label{eq:opt_cond}
\lambda_*(F_n)=1-\frac{F_n(\Cr)}{B_{\gamma_*(F_n)}(\Cr)} \comma B_{\gamma_*(F_n)}(\Omega)=1 \\
\textand \frac{B_{\gamma_*(F_n)}(\Cr)}{F_n(\Cr)} \cdot \phi_k(\Cr) =  \frac{1}{n}\sum_{i=1}^n I(x\in \Cr)\cdot \frac{\phi_k(M_i)}{b_{\gamma_*(F_n)}(M_i)} \ \for 1\leq k \leq K
\end{gather} In the following, we will see that we can use an Expectation-Maximization (EM) approximation \citep{dempsterMaximumLikelihoodIncomplete1977} to obtain an iterative algorithm that converges to to solutions that satisfy the above equations. In particular, we will use the methodology push-forward by \cite{approxem}, which approximates the origin loss function by an EM-like loss without using an statistical argument. 

We start by introducing a guess for the solution of $\gamma$ at the $q$th iteration
\begin{align}
\max_{\gamma: B_\gamma(\Omega)=1}\ \sum_{i=1}^n\ell(M_i)= \max_{\lambda,\gamma: B_\gamma(\Omega)=1}\ &n_{\Sr} \cdot   \log\left((1-\lambda)\cdot\left(1-B_\gamma(\Cr)\right)+\lambda\right) + n_{\Cr} \cdot   \log\left(1-\lambda\right) \\
&+  \sum_{i=1}^n I(M_i \in \Cr)\cdot \log\left(\sum_{k=1}^K \gamma_k \cdot \phi_k(X_i) \cdot \frac{\gamma_k^{(q)}\cdot b_{\gamma^{(q)}}(M_i)}{\gamma_k^{(q)}\cdot b_{\gamma^{(q)}}(M_i)}\right)
\end{align} Then, noting that $b_\gamma^{(q)}(M_i)=\sum_{k=1}^K \gamma_k^{(q)} \cdot \phi_k(M_i)$, we lower-bound the objective via Jensen's inequality \begin{align}
\max_{\lambda,\gamma: B_\gamma(\Omega)=1}\  &n_{\Sr}\cdot \log\left((1-\lambda)\cdot\left(1-B_\gamma(\Cr)\right)+\lambda\right) + n_{\Cr} \cdot   \log\left(1-\lambda\right) \\
&+  \sum_{k=1}^K\sum_{i=1}^n I(M_i \in \Cr)\cdot   \frac{\gamma_k^{(q)}\cdot\phi_k(M_i)}{b_{\gamma^{(q)}}(M_i)} \cdot \log\left(  \frac{\gamma_k \cdot b_{\gamma^{(q)}}(M_i)}{\gamma_k^{(q)}}\right)
\end{align} Finally, we remove the terms that are constant w.r.t. the optimization \begin{align}
\max_{\lambda,\gamma: B_\gamma(\Omega)=1}\ &n_{\Sr}\cdot \log\left((1-\lambda)\cdot\left(1-B_\gamma(\Cr)\right)+\lambda\right) + n_{\Cr} \cdot   \log\left(1-\lambda\right) \\
&+ \sum_{k=1}^K   a_k \cdot \gamma_k^{(q)} \cdot \log\left(\gamma_k\right)
\end{align} where \begin{equation}
a_k = \sum_{i=1}^n I(M_i \in \Cr)\cdot\frac{\phi_k(M_i)}{b_{\gamma^{(q)}}(M_i)}
\end{equation} Let $(\lambda^{(q+1)},\gamma^{(q+1)})$ denote the solution of the above system of equations. It follows that the solution must satisfy the following conditions \begin{equation}\label{eq:opt_cond2_a}
\lambda^{(q+1)}=1-\frac{F_n(\Cr)}{B_{\gamma^{(q+1)}}(\Cr)}\ \comma\ B_{\gamma^{(q+1)}}(\Omega)=1 
\end{equation} and \begin{align}\label{eq:opt_cond2_b}
\for 1 \leq k \leq K\quad \gamma_k^{(q+1)} &= \gamma_k^{(q)}\cdot \frac{1-B_{\gamma^{(q+1)}}(\Cr)+\frac{\lambda^{(q+1)}}{1-\lambda^{(q+1)}}}{n_{\Sr}} \cdot \frac{ a_k}{\phi_k(\Cr)}\\
&= \gamma_k^{(q)}\cdot \frac{B_{\gamma^{(q+1)}}(\Cr)}{F_n(\Cr)}  \cdot \frac{1}{\phi_k(\Cr)}\cdot \frac{a_k}{n}
\end{align} Solving the system of equations leads to a normalized D'Agostini iteration \citep{agostini1,agostini2}  for $\gamma^{(q+1)}$ that doesn't depend on $\lambda^{(q+1)}$ \begin{equation}\label{eq:continuous_dagostini_iteration}
\gamma_{k}^{(q+1)} = \frac{\tilde{\gamma}_k}{B_{\tilde{\gamma}}(\Omega)} \where \tilde{\gamma}_k = \gamma_k^{(q)}\cdot \frac{a_k}{\phi_k(C)} 
\end{equation}  \Revision{Finally, note that at the fix point $\gamma_\infty$ , \eqref{eq:opt_cond2_a} and \eqref{eq:opt_cond2_b} become \eqref{eq:opt_cond}. Since \eqref{eq:loss} is concave w.r.t. $\gamma$, the iteration converges to the unique maximizer. Ergo, $\gamma^{(\infty)}=\gamma_*(F_n)$, and consequently $\lambda^{(\infty)}=\lambda_*(F_n)$}.

\subsection{Discretized censored MLE}\label{sec:discretization}

An issue when implementing the test \eqref{eq:efficient_test_known_background} in the continuous case is to obtain a stable and fast implementation of the influence function \eqref{eq:influence_function}, which requires the inversion of the empirical Hessian. In order to sidestep this issue, we discretize the data and rely on the discrete delta method to construct or asymptotically valid test.  Henceforth, assume that the parametric background is given by a truncated series \Revision{\eqref{eq:truncated_series} and discretize the density model \eqref{eq:model}}. That is, we partition the control region into $L$ bins and take the whole signal region as one bin. Namely, define the disjoint sets \begin{equation}
\Cr = \cup_{l=1}^L \Cr_l \st  \Cr_i \cap \Cr_j = \emptyset \for i\not= j  
\end{equation} and their corresponding counts \begin{equation}
n_{\Sr} = n \cdot F_n(\Sr) \textand  n_l = n \cdot F_n(\Cr_l)\ \for 1 \leq l \leq L
\end{equation} The counts follow a Multinomial distribution \begin{align}\label{eq:multinomial_model}
(n_{1},\dots,n_{L},n_{\Sr})\ \sim\ \text{Multinomial}(n_C,p) \where p_l &= (1-\lambda) B_\gamma(\Cr_l) \for 1 \leq l \leq L\\
p_{\Sr} &= (1-\lambda) B_\gamma(\Sr_l) + \lambda \end{align} Let $\mathbb{F}_n(\Cr)$ denote the estimated probabilities in the control region \begin{equation}
\mathbb{F}_n(\Cr)= \begin{bmatrix}F_n(\Cr_1),\dots,F_n(\Cr_L)\end{bmatrix}^t 
\end{equation} The discretized censored maximum likelihood estimator is \begin{align}
(\gamma_*(F_n),\lambda_*(F_n)) = \argmax_{\gamma,\lambda} \ell(\mathbb{F}_n(\Cr),\lambda,\gamma) \st B_\gamma(\Omega) = 1 
\end{align} where \begin{align}
\ell(\mathbb{F}_n(\Cr),\lambda,\gamma)&= (1-F_n(\Cr)) \cdot \log\left((1-\lambda)\cdot (1-B_{\gamma}(\Cr)) + \lambda \right)\\
&+ F_n(\Cr) \cdot \log\left(1-\lambda\right)\\
&+ \sum_{l=1}^L\ F_n(\Cr_l) \cdot \log\left(B_{\gamma}(\Cr_l)\right)
\end{align} Following an analogous EM-approximation as in section \cref{sec:EM_censored_MLE}, we can derive the following D'Agostini iteration that converges to $\gamma_*(F_n)$ \begin{equation}
\gamma_{k}^{(q+1)} = \frac{\tilde{\gamma}_k}{B_{\tilde{\gamma}}(\Omega)} \where \tilde{\gamma}_k = \gamma_k^{(q)}\cdot \frac{ a_k}{\phi_k(C)} \textand a_k = \sum_{l=1}^L F_n(\Cr_l) \cdot \frac{\phi_k(\Cr_l)}{B_{\gamma^{(q)}}(\Cr_l)}
\end{equation} Note that the above iteration is the natural discretization of the continuous iteration \eqref{eq:continuous_dagostini_iteration}. Finally, $\lambda_*(F_n)$ can still be computed by the ratio between observed and expected counts $\lambda_*(F_n)=1-\frac{F_n(\Cr)}{B_{\gamma_*(F_n)}(\Cr)}$. 

Noting that $\lambda_*(F_n)$ is only a function of the counts in the control region, we can proceed analogously to \cref{sec:EM_censored_MLE} and obtain a central limit by the discrete delta method. Let $\mathbb{F}(\Cr)$ denote the population vector of probabilities in the control region\begin{equation}
\mathbb{F}(\Cr)=\begin{bmatrix}F(\Cr_1),\dots,F(\Cr_L)\end{bmatrix}
\end{equation} it follows by the central limit theorem that \begin{equation}
\sqrt{n}\cdot\left(\mathbb{F}_n(\Cr)-\mathbb{F}(\Cr)\right)\toD \Normal(0,D)
\end{equation} where $D \in \mathbb{R}^{(L-1)\times(L-1)}$ , $D_{l,l}=F(\Cr_l)\cdot (1-F(\Cr_l))$ and $D_{l,j}=-F(\Cr_l)\cdot F(\Cr_j)$ for $j\not= l$. Thus, by the discrete delta method and Slutsky's theorem, we have that \begin{equation}
\sqrt{n} \cdot \frac{\lambda_*(F_n)-\lambda_*(F)}{\sqrt{g(F_n)^t\cdot D_n \cdot g(F_n)}}  \toD \Normal(0,1) 
\end{equation} where $g(F_n)=\nabla_{\mathbb{F}_n(\Cr)}\lambda_*(F_n)$ is the gradient with respect to the empirical probabilities, and $D_n$ is the empirical covariance matrix of the probabilities $D_n \in \mathbb{R}^{(L-1)\times(L-1)}$, $[D_n]_{l,l}=F_n(C_l)\cdot (1-F_n(C_l))$ and $[D_n]_{l,j}=-F_n(C_l)\cdot F_n(C_j)$ for $j\not= l$.

Finally, analogously to the test in the continuous case \eqref{eq:efficient_test_known_background}, we define our asymptotically valid $\alpha$-level test to be \begin{align}
\Psi_\alpha(F_n)&=I(T(F_n) > Z_{1-\alpha})\\ \where T(F_n) &= \sqrt{n} \cdot \frac{\lambda_*(F_n)}{\sqrt{g(F_n)^t\cdot D_n \cdot g(F_n)}}
\end{align}

\subsection{Equivalence between censored and conditional MLE}\label{sec:conditional_MLE}

Consider the mixture model \eqref{eq:model} where the background is known to belong to the set $\mathcal{B}$. Since the signal vanishes in the control region, it follows that the conditional distribution of the mixture on the control region depends only on the background distribution \begin{equation} M | M \in C \sim \frac{f(m)}{F(\Cr)} \cdot I(m \in \Cr) = \frac{b(m)}{B(\Cr)}\cdot I(m \in \Cr)\label{eq:conditional}
\end{equation} A natural idea is to estimate the conditional background distribution on the control region, and then extend it to the signal region, that is, to force the estimated background to integrate to one on the whole domain. This is justified if there is a unique way of extending the conditional background, and consequently the conditional measure identifies the measure on the whole domain. For instance, this is the case when the true background distribution is known to be a polynomial.

\begin{proposition}[\textbf{Polynomial densities that agree on $\Cr$ must agree on $\Omega$}]\label{lemma:poly_background}
    Let $\mathcal{B}$ be some set of densities supported on $\Omega$ such that $b \in \mathcal{B}$ and for any $\tilde{b} \in \mathcal{B}$ it holds that \begin{align}
	(0)\	\tilde{b} \text{ is a polynomial, } \tilde{b}\geq0 \text{ and } \tilde{B}(\Omega)=1
	\end{align} Furthermore, consider the function $d(f,g)$ in $\mathcal{B}\times\mathcal{B}$ such that \begin{align}
		(1)\ d(f,f)=0 \quad
		(2)\ d(f,g)=0 \implies f=g\ \text{a.e.} \quad
		(3)\ d(f,g)\geq0\quad \forall f,g\in \mathcal{B}
	\end{align} then\begin{equation}\label{eq:equivalence}
		b = \argmin_{\tilde{b} \in \mathcal{B}} d(\ b,\ \tilde{b}\ ) = \argmin_{\tilde{b} \in \mathcal{B}} d(\ \frac{b}{B(C)} \cdot I_{\Cr}\ ,\ \frac{\tilde{b}}{\tilde{B}(C)} \cdot I_{\Cr}\ )
	\end{equation} 
\end{proposition}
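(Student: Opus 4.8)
The plan is to treat the two equalities in \eqref{eq:equivalence} separately; each reduces to combining the definiteness property $(2)$ of $d$ with the rigidity of polynomials. Throughout I would assume $B(\Cr)>0$ (otherwise the right-hand objective is not even defined), and note that this forces $\Cr$ to have positive Lebesgue measure, since $\int_\Cr b>0$ is impossible on a null set.

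For the first equality I would show that $b$ is the \emph{unique} minimizer of $\tilde b\mapsto d(b,\tilde b)$ over $\mathcal{B}$. It is a minimizer because $d(b,b)=0$ by $(1)$ while $d(b,\tilde b)\geq 0$ for all $\tilde b\in\mathcal{B}$ by $(3)$. If $\tilde b\in\mathcal{B}$ is any minimizer, then $d(b,\tilde b)=0$, so $b=\tilde b$ almost everywhere by $(2)$; since both are polynomials by $(0)$, this forces $b=\tilde b$ on all of $\Omega$. Hence $\argmin_{\tilde b\in\mathcal{B}}d(b,\tilde b)=\{b\}$.

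For the second equality I would first check that the normalized densities are well defined: if $\tilde B(\Cr)=0$ for some $\tilde b\in\mathcal{B}$, then the polynomial $\tilde b$ vanishes a.e.\ on the positive-measure set $\Cr$, hence is the zero polynomial, contradicting $\tilde B(\Omega)=1$; so $\tilde B(\Cr)>0$ always. Taking $\tilde b=b$ makes the objective $d\!\left(\frac{b}{B(\Cr)}I_{\Cr},\frac{b}{B(\Cr)}I_{\Cr}\right)=0$ by $(1)$, the global minimum by $(3)$, so $b$ is a minimizer. For uniqueness, suppose $\tilde b\in\mathcal{B}$ attains the minimum, so $d\!\left(\frac{b}{B(\Cr)}I_{\Cr},\frac{\tilde b}{\tilde B(\Cr)}I_{\Cr}\right)=0$, whence $(2)$ gives $\frac{b}{B(\Cr)}=\frac{\tilde b}{\tilde B(\Cr)}$ a.e.\ on $\Cr$. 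The key step: the polynomial $\tilde B(\Cr)\,b-B(\Cr)\,\tilde b$ then vanishes on a set of positive Lebesgue measure, hence has infinitely many roots, hence is the zero polynomial; so $\tilde B(\Cr)\,b=B(\Cr)\,\tilde b$ identically on $\Omega$. Integrating this identity over $\Omega$ and using $B(\Omega)=\tilde B(\Omega)=1$ gives $\tilde B(\Cr)=B(\Cr)$, and therefore $b=\tilde b$ on $\Omega$. Thus $b$ is also the unique minimizer of the second objective, and \eqref{eq:equivalence} follows.

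The step I expect to need the most care is the passage from ``two polynomials agree a.e.\ on $\Cr$'' to ``they agree on all of $\Omega$'': this requires that $\Cr$ carry positive Lebesgue measure (so the agreement set cannot be finite) together with the fact that a nonzero univariate polynomial has only finitely many roots. Everything else is bookkeeping with the abstract axioms $(0)$--$(3)$ and the normalization constraint $\tilde B(\Omega)=1$ that pins down the scaling constant.
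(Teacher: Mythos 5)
Your proposal is correct and follows essentially the same route as the paper's proof: use $(1)$ and $(3)$ to show the minimum value is zero, use $(2)$ to get a.e.\ agreement of the normalized densities on $\Cr$, use the polynomial hypothesis $(0)$ to extend that agreement to all of $\Omega$, and then integrate with the normalization $\tilde{B}(\Omega)=B(\Omega)=1$ to conclude $\tilde{B}(\Cr)=B(\Cr)$ and hence $\tilde{b}=b$. You simply spell out some details the paper leaves implicit (that $B(\Cr)>0$ and $\tilde{B}(\Cr)>0$, and the finitely-many-roots argument behind the extension step), which is fine.
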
 The lemma allows us to rewrite the signal strength as a function of $F$ by exploiting the equivalence in equation \eqref{eq:equivalence}. Namely, \begin{equation}\label{eq:lambda_F}
\lambda^*(F) = 1 - \frac{F(\Cr)}{B^*_{F}(\Cr)} \where b_F^* = \argmin_{\tilde{b} \in \mathcal{B}} d(\ \frac{f}{F(\Cr)} \cdot I_{\Cr}\ ,\ \frac{\tilde{b}}{\tilde{B}(\Cr)} \cdot I_{\Cr}\ )
\end{equation} Furthermore, consider using the Kullback-Leibler divergence \citep{kullbackInformationSufficiency1951} for $d$, then we obtain the conditional maximum likelihood estimator \begin{equation}\label{eq:lambda_Fn}
\lambda^*(F_n) = 1 - \frac{F_n(\Cr)}{B^*_{F_n}(\Cr)}
\end{equation} where \begin{equation}\label{eq:conditional_MLE}
b^*_{F_n} = \argmax_{\tilde{b}\in\mathcal{B}} \sum_{i=1}^n I(M_i \in \Cr)\cdot\log\left(\frac{\tilde{b}(M_i)}{\tilde{B}(\Cr)}\right) \st \tilde{B}(\Omega)=1
\end{equation} Note that the condition $\tilde{B}(\Omega)=1$ guarantees the valid extension of the conditional density to the whole domain. 

Finally, regardless of the uniqueness of the extended background, both the conditional and censored MLE coincide from an algorithmic point of view. Recall the censored maximum likelihood estimator is given by\begin{align}\label{eq:censored_MLE_opt}
(\lambda^*(F_n),b_{F_n}^*) = \argmax_{\tilde{\lambda},\tilde{b}\in\mathcal{B} : \tilde{B}(\Omega)=1} &n_{\Sr}\cdot \left((1-\tilde{\lambda})\cdot (1-\tilde{B}(\Cr))+ \tilde{\lambda}\right)\\
&+\sum_{i=1}^n I(M_i \in \Cr)\cdot\log\left((1-\tilde{\lambda})\cdot\tilde{b}(M_i)\right)
\end{align} The first order optimality condition for $\lambda_*$ is given by \eqref{eq:lambda_Fn}. Plugging the result back into the optimization \eqref{eq:censored_MLE_opt} gives us the conditional MLE objective \eqref{eq:conditional_MLE}.

\begin{proof}[Proof of \cref{lemma:poly_background}]
Since $b \in \mathcal{B}$, by (1) and (3), it follows that the minimum must be achieved \begin{equation}
    d(\ \frac{b}{B(\Cr)} \cdot I_{\Cr} , \frac{b^*}{B^*(\Cr)} \cdot I_{\Cr} ) = 0
\end{equation} by (2), we know that the function must agree a.e. on the control region \begin{equation}
    \frac{b}{B(\Cr)} = \frac{b^*}{B^*(\Cr)} \text{ a.e. } x \in \Cr
\end{equation}by (0), we can extend the previous equation to all the domain \begin{equation}
    \frac{b}{B(\Cr)} = \frac{b^*}{B^*(\Cr)} \text{ a.e. } x \in \Omega
\end{equation} Integrating to both sides over $\Omega$, and using the fact that $\tilde{B}(\Omega)=1$ for all $\tilde{b} \in \mathcal{B}$, we get\begin{equation}
B(\Cr) = B^*(\Cr)
\end{equation} and consequently $b = b^* \text{ a.e. } x \in \Omega$
\end{proof}

\section{Simulation details}

\subsection{Code}

\if1\blind
{
The code to reproduce the experiments can be accessed at \begin{center}
\url{https://anonymous.4open.science/r/CDOT-53DC}
\end{center}
} 
\fi

\if0\blind
{
The code to reproduce the experiments can be accessed at \begin{center}
\url{https://github.com/lkania/cdot}
\end{center}
} 
\fi

\subsection{Bernstein basis}\label{sec:bernstein_basis}

In our numerical studies, we use the $K$th order Bernstein basis \begin{equation}\label{eq:bernstein_basis}
\phi_k(x) = \binom{K}{k} \cdot x^{k}\cdot (1-x)^{K-k} \for 0\leq x\leq 1
\end{equation} while restricting the parameters $\gamma$ to be non-negative $\gamma_k \geq 0\ \forall k$ and satisfy $\sum_{k=0}^K \gamma_k=K+1$. These restrictions guarantee that $b_\gamma$ is a density function.
Polynomials in Bernstein form have been routinely used in high-energy physics for background modeling  \citep{bernstein_4_1,bernstein_4_2} since they uniformly approximate any continuous function on $\Omega$ at a rate of $O(1/\sqrt{K})$ \citep{lorentz2013} and do not suffer from boundary bias \citep{ghosal2001}. Furthermore, in order to accelerate computations, we discretize the data using bins before estimating the censored MLE, see Appendix \ref{sec:discretization}.

\subsection{Calibration of signal-enriched test}\label{sec:experiments_appx}

In all experiments, the data is split into training, validation, and test datasets. The training background and signal samples are used to fit a supervised probabilistic classifier. Let $h: \mathcal{X} \to [0,1]$ denote the trained classifier. Conditioned on the training dataset, the classifier is a deterministic function that maps each observation to a score, with higher scores indicating that an event is more likely to be a signal event. 

The validation dataset is used for three purposes. First, if performing decorrelation, the validation data is used to train the decorrelation algorithm by finding the optimal transport map, $T_M(h)$, on the validation background data. Note that if decorrelation is not used, the transport map is defined as the identity map $T_M(h) = h$. Second, the validation dataset and the transformed classifier output are used to find the cut-off point that filters $t \%$ of the background samples. To do this, we compute the score $T_M(h)$ of all background observations and find their $t$-th quantile, denoted by $q(t)$. Thus, the indicator function $I\left(T_M(h(X)) \geq q(t)\right)$ filters $t\%$ of the background observations in the validation dataset. Third, the validation dataset is used to calibrate the test. \Revision{Namely, it is used to specify the signal region $\Sr$ in \eqref{eq:model} and the degree $K$ of the basis \eqref{eq:bernstein_basis}}. The signal region is fixed to be the interval between the 10\% and 90\% quantiles of the empirical signal distribution in the validation dataset. That is, we allow for $20\%$ of signal contamination in the control region. Inspecting the empirical distribution of the background, we find that the ratio between contamination and background in the control region, $\epsilon/B(\Cr)$ in \eqref{eq:contamination}, ranges between 0.21-0.23 for the high-mass resonance experiments and 0.29-0.3 for the decaying high-pT W-boson experiments. Thus, on average, we expect to underestimate the true signal strength by roughly $25\%$. 

To fix the degree of the basis, we sub-sample $N$ datasets of $n$ background observations from the validation dataset. That is, these datasets follow the null distribution. Given $X_1,\dots,X_n \sim f$, let  $\Psi_\alpha^{(K)}$ denote a test defined by \eqref{eq:censored_MLE_test}, where the parametric background (\eqref{eq:truncated_series}) is given by the Bernstein polynomial basis (\eqref{eq:bernstein_basis}). Then for $K \in \{5,10,15,20,25,30,35,40\}$, we compute the empirical type-I error probability and choose $K$ such that the corresponding test is closest to the desired type-I error rate. Concretely, we use the test  $\Psi_\alpha^{(K_*)}$, where \begin{equation}\label{eq:test_selection}
K_* = \argmin_{K} \left|\ \alpha - 
\frac{1}{N} \sum_{j=1}^{N}\ \Psi_{\alpha}^{(K)}\left(F_n^{(j)}\right)\ \right|
\end{equation} and $F_n^{(j)}$ is the empirical distribution of $\left\{X_i^{(j)}\right\}_{i=1}^n$. For all forthcoming experiments, we set $\alpha=0.05$, $n=20000$, and $N=500$. Given the above test, a signal-enriched test can be built by first filtering observations using the trained classifier and then applying the test. Namely, $
\Psi_\alpha^{(K_*)}(X,t) = \Psi_\alpha^{(K_*)}(\tilde{X}) \where X_i \in \tilde{X}\ \textif I(T_M(h(X))\geq q(t))$. Thus, conditioned on the training and validation dataset, $\Psi_\alpha^{(K_*)}(\cdot,t)$ is a deterministic function. Note that the signal-enriched test maintains approximate validity due to the decorrelation algorithm preserving the shape of the background distribution. 

Using the test dataset, we study the power of the signal-enriched test and the performance of the CDOT decorrelation algorithm. To analyze the power of the signal detection test, we proceed as follows: given a signal strength $\lambda$, we sub-sample $N$ datasets of $n$ observations from the test dataset, such that $\lambda\%$ of those observations correspond to signal events. Then, for each dataset, we check if the test rejects the null hypothesis $H_0:\lambda=0$ and compute the empirical probability of rejecting the null hypothesis across datasets. Note that for $\lambda=0$, that probability is the empirical type-I error, while for $\lambda \in \{0.01,0.02,0.05\}$  is the empirical power. In all cases, we report the empirical results with their corresponding Clopper--Pearson confidence intervals. Finally, to understand the utility of the decorrelation algorithm, we study the power of the signal-enriched test both when using a non-decorrelated and a decorrelated classifier to filter the observations. 

\subsection{W-tagging dataset}\label{sec:wtagging_appx}

\begin{figure}[H]
\begin{center}
\includegraphics[width=\textwidth]{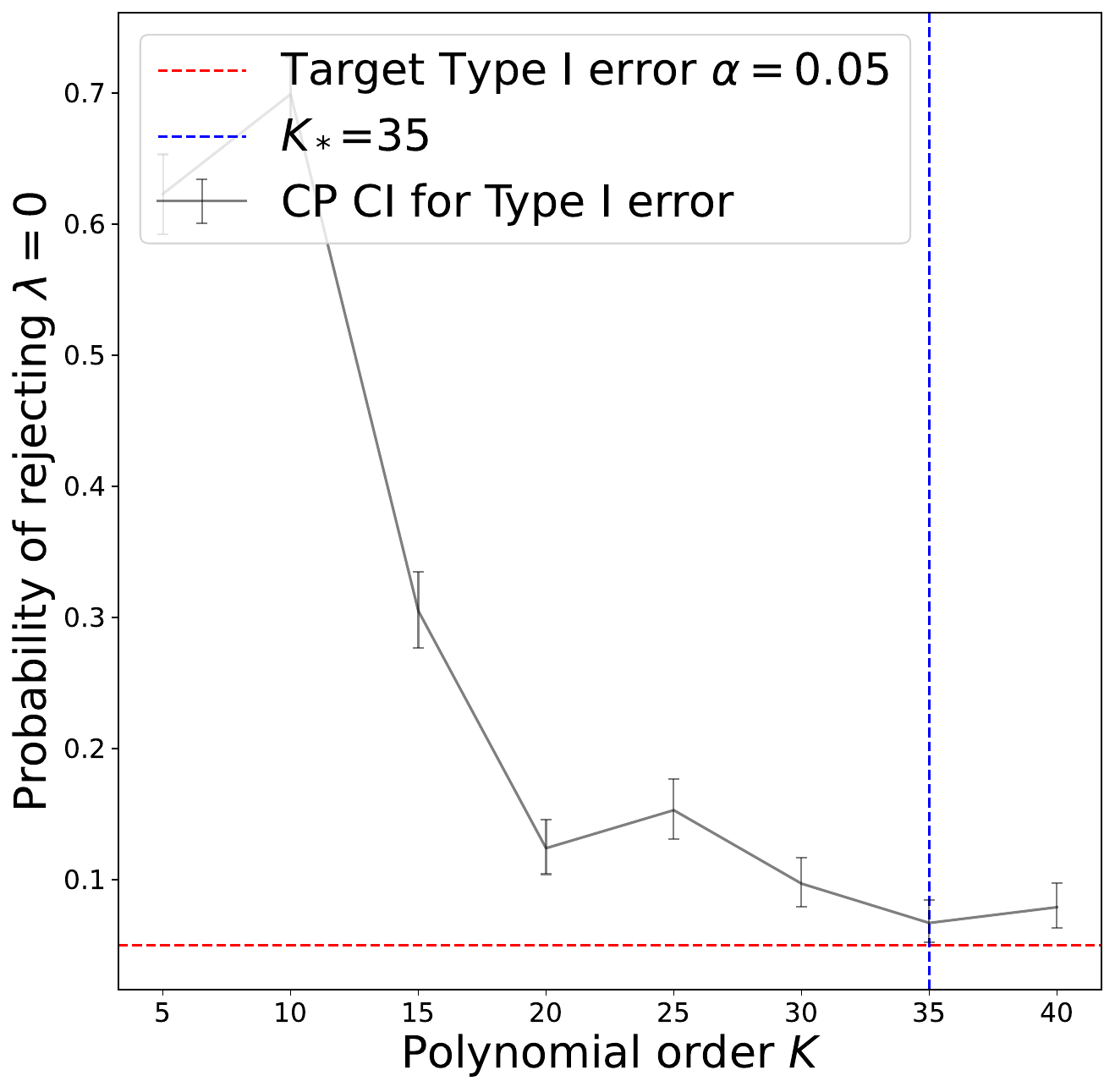}
\end{center}
\caption{Model selection using simulated datasets composed only of background observations from the W-tagging validation dataset. The target type-I error rate is $0.05$. However, none of the asymptotic tests achieve it. The Bernstein basis of order $35$ is selected, which achieves an average type-I error rate between 0.05 and 0.1.}
\label{img:wtagging_test_selection}
\end{figure}

\begin{figure}[H]
\begin{center}
\includegraphics[width=\textwidth]{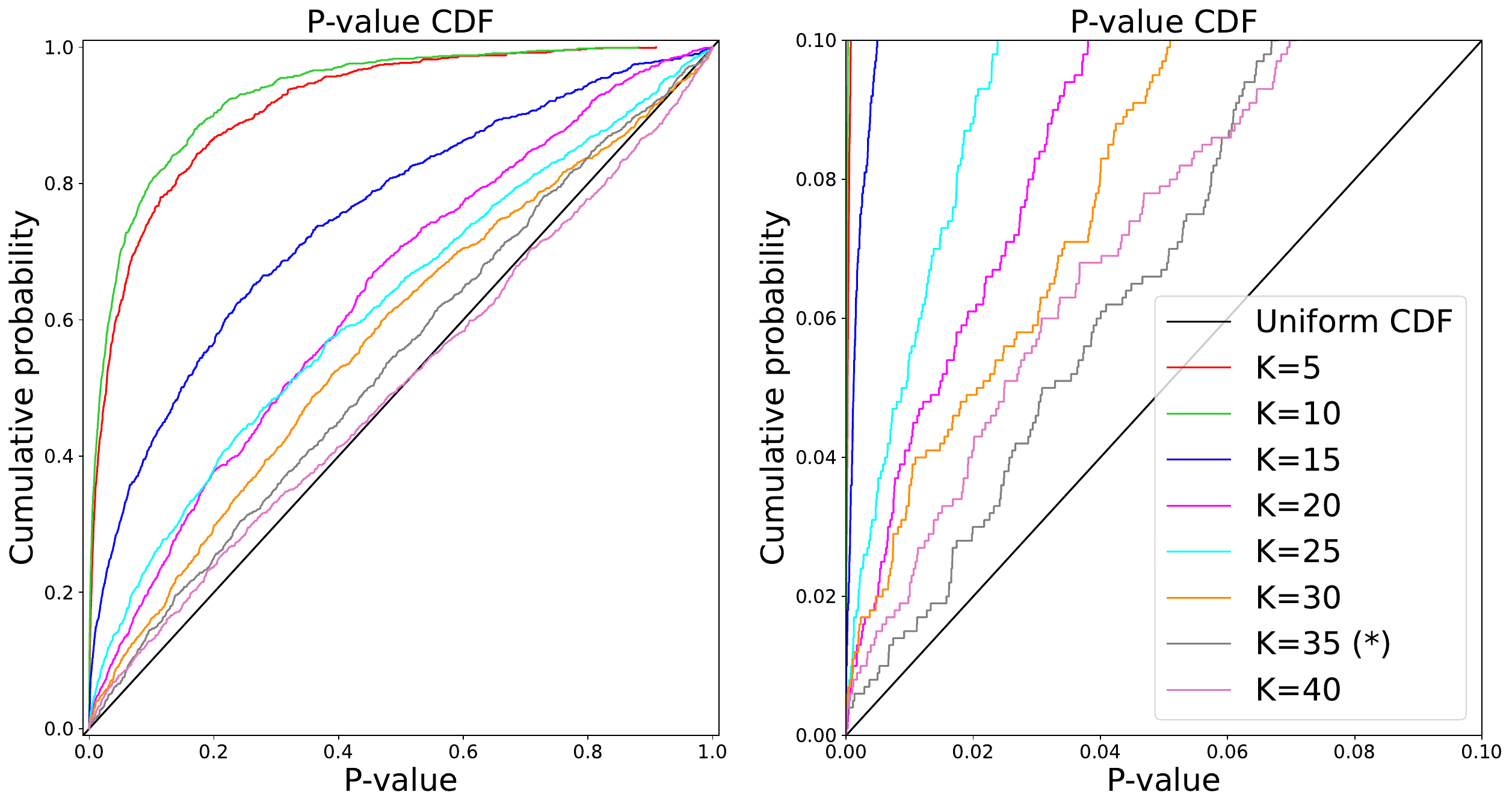}
\end{center}
\caption{CDFs of the empirical p-value distributions corresponding to the different tests considered in Figure \ref{img:wtagging_test_selection}.}
\end{figure}

\begin{figure}[H]
\begin{center}
\includegraphics[width=\textwidth]{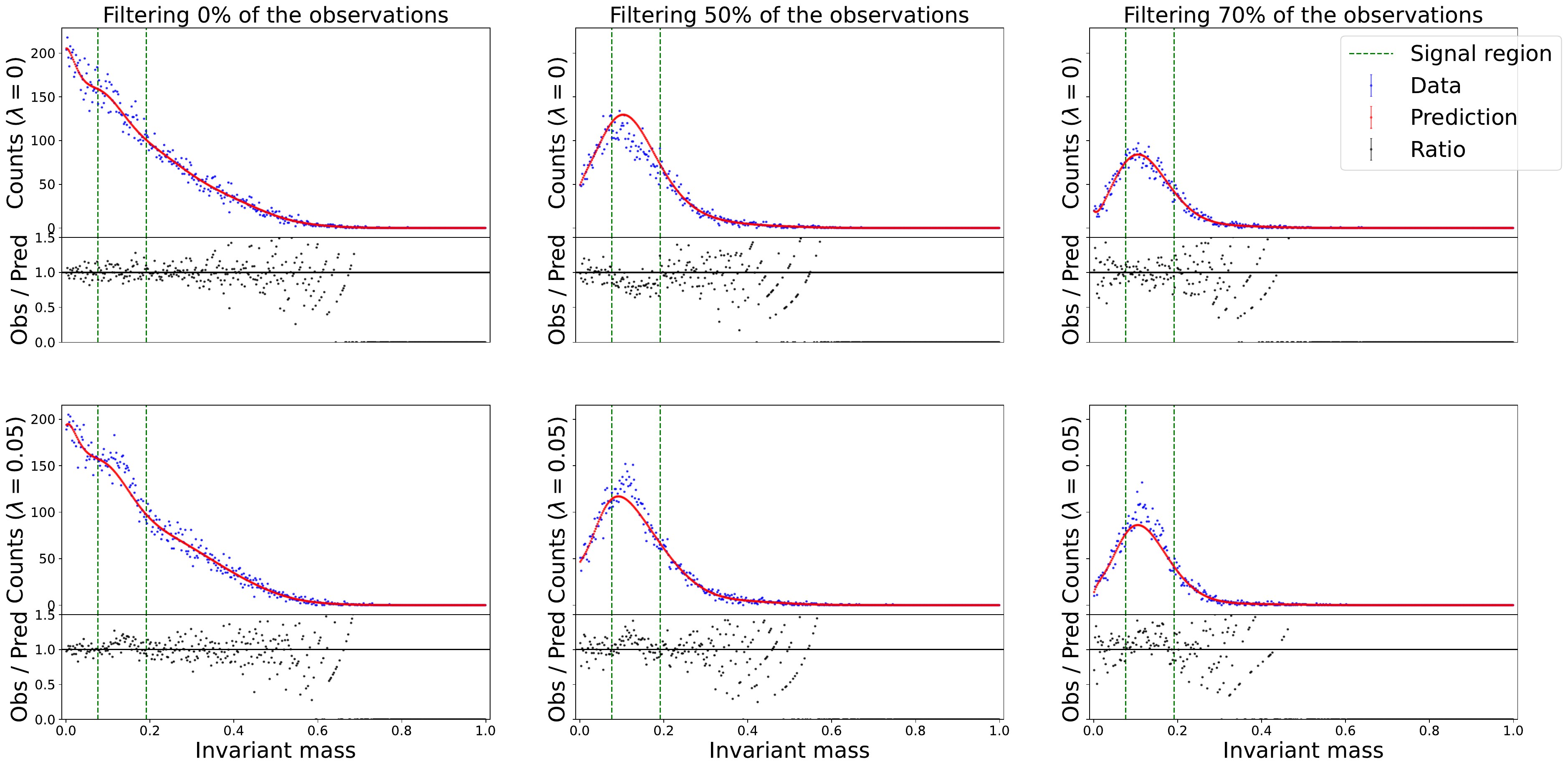}
\end{center}
\caption{Single background estimate for the W-tagging test dataset as the number of observations filtered by the \textbf{non-decorrelated} classifier is increased. The first row corresponds to the null hypothesis, i.e., no signal. In the second row, $5\%$ of the data comes from the signal. Note that the shape of the background distribution does not change. }
\end{figure}

\begin{figure}[H]
\begin{center}
\includegraphics[width=\textwidth]{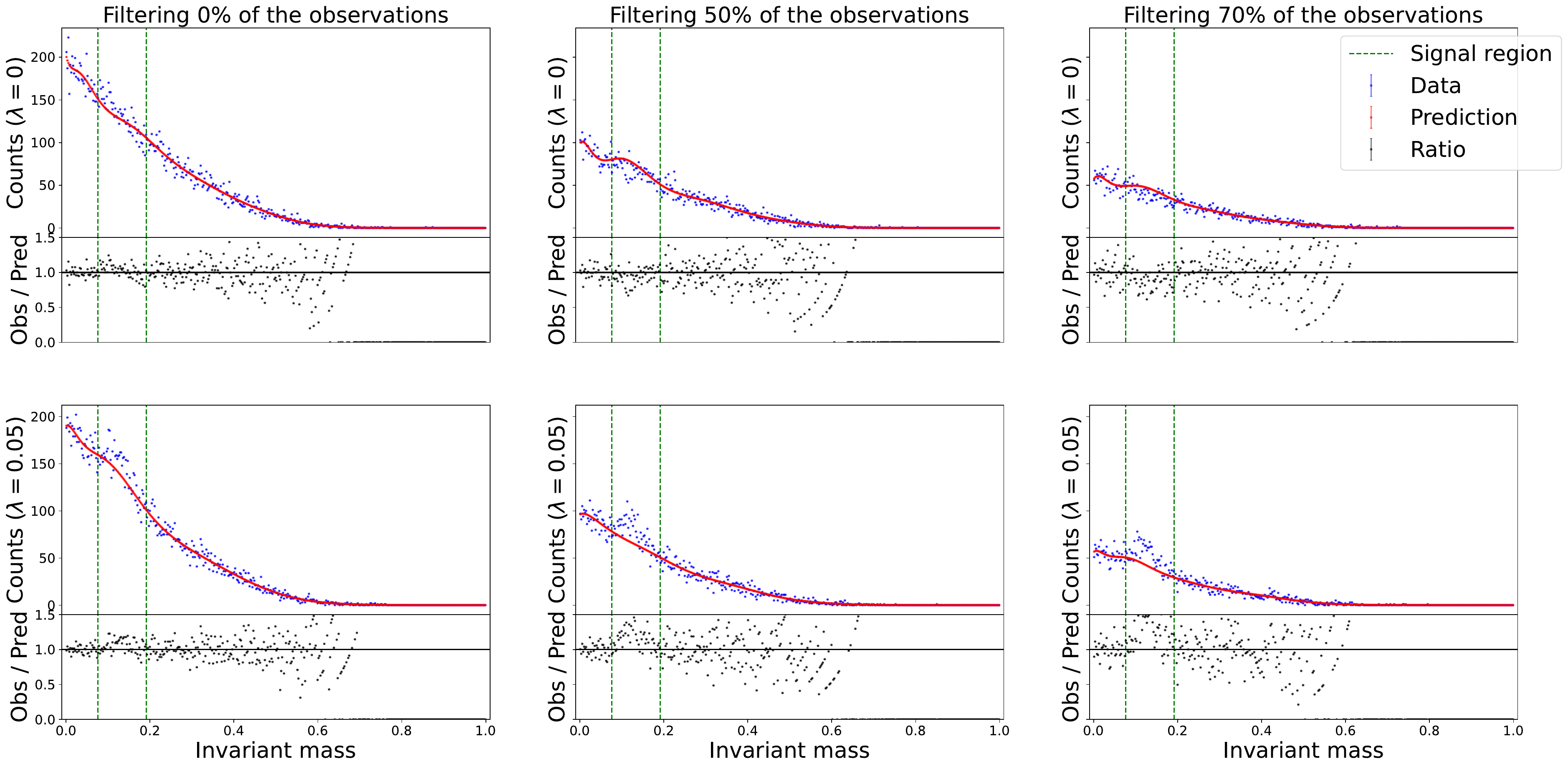}
\end{center}
\caption{Single background estimate for the W-tagging test dataset as the number of observations filtered by the \textbf{decorrelated} classifier is increased. The first row corresponds to the null hypothesis, i.e., no signal. In the second row, $5\%$ of the data comes from the signal. Note that the shape of the background distribution does not change. }
\end{figure}

\subsection{3b and 4b datasets}\label{sec:3b_4b_appx}

\begin{figure}[H]
\begin{center}
\includegraphics[width=0.5\textwidth]{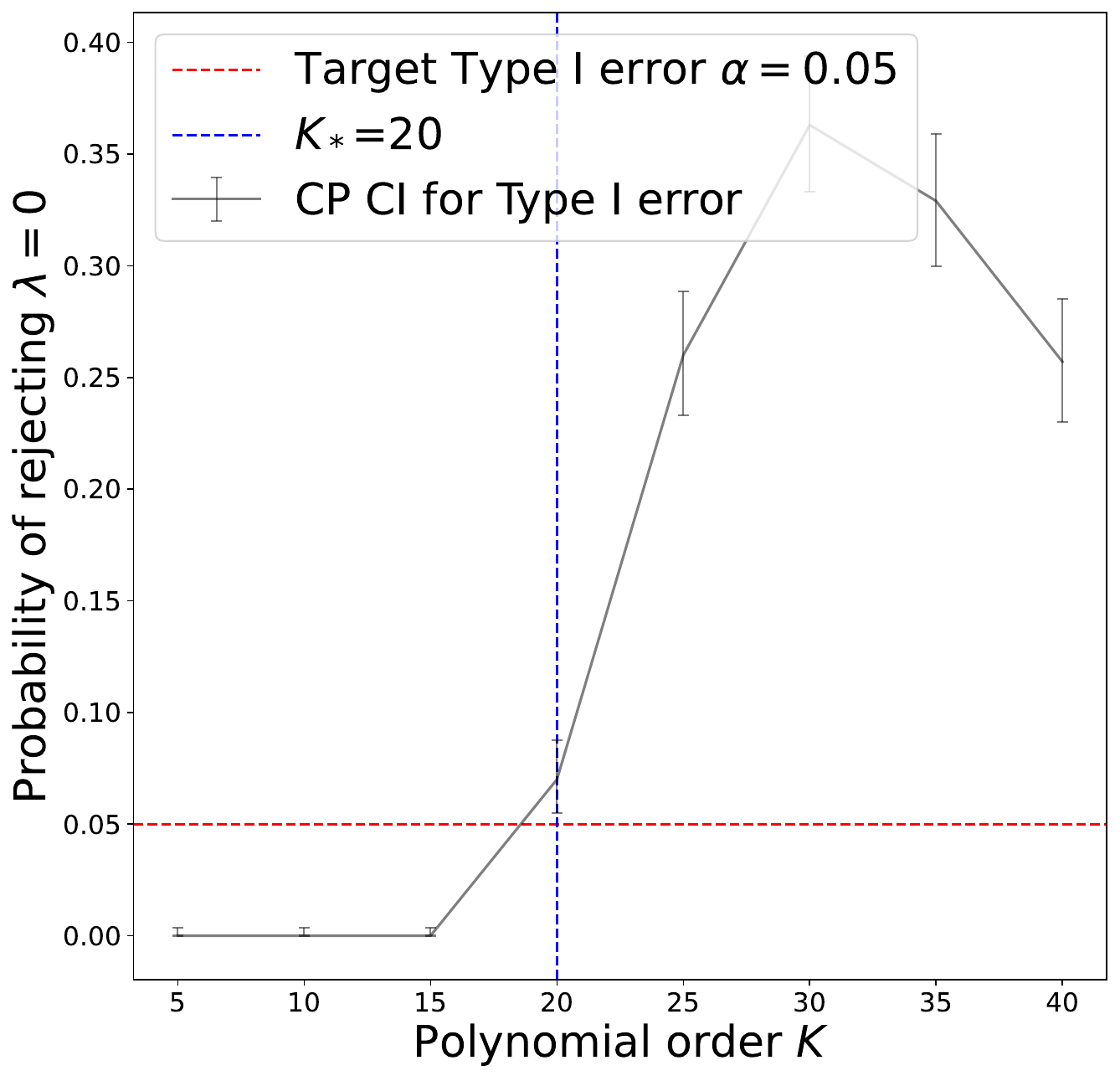}
\end{center}
\caption{Model selection using simulated datasets composed only of background observations from the 3b validation dataset. The target type I error is $0.05$. The Bernstein basis of order $20$ is selected.}
\label{fig:3b_model_selection}
\end{figure}

\begin{figure}[H]
\begin{center}
\includegraphics[width=\textwidth]{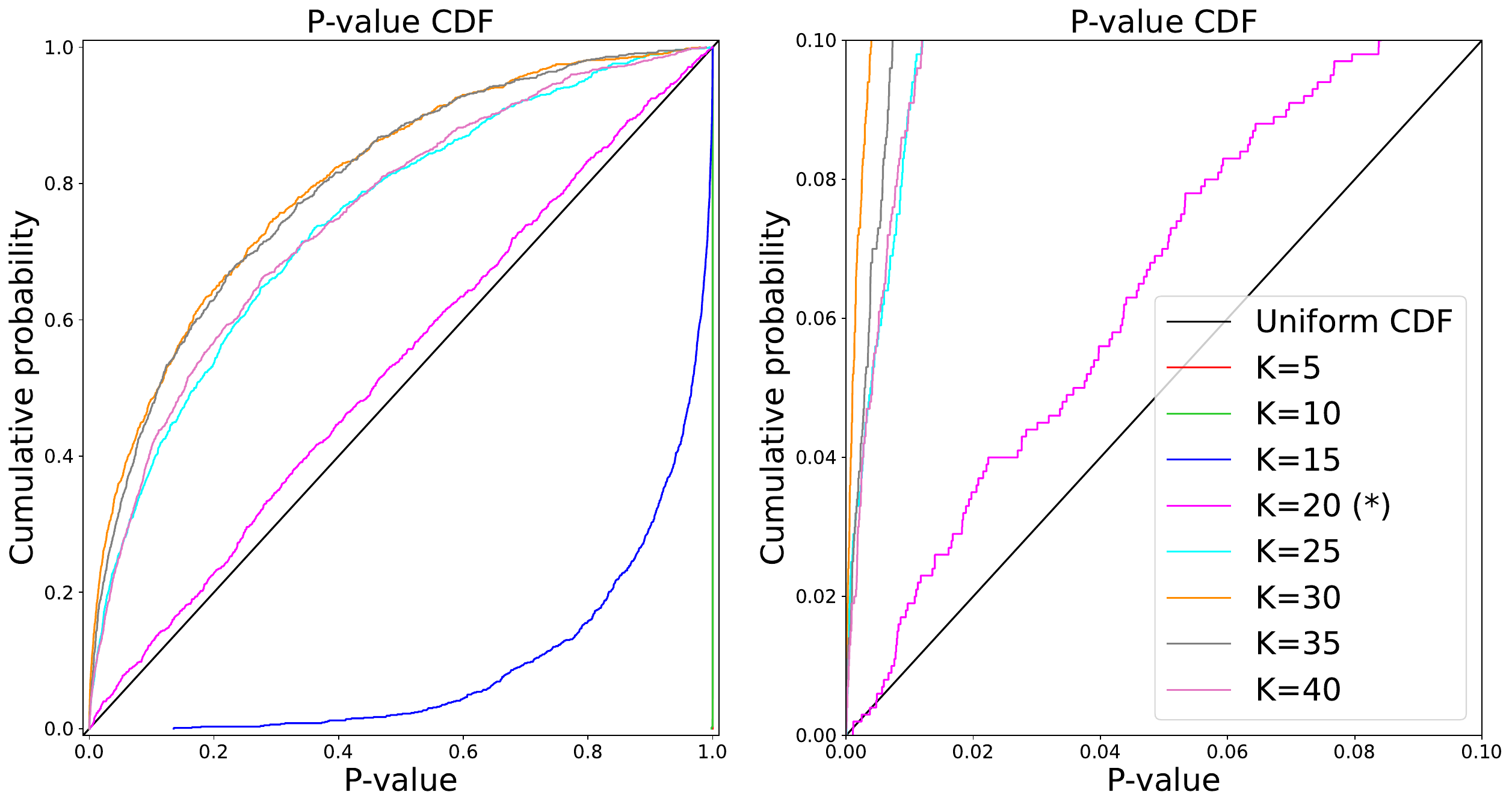}
\end{center}
\caption{CDFs of the empirical p-value distributions corresponding to the different tests considered in Figure \ref{fig:3b_model_selection}.}
\end{figure}

\begin{figure}[H]
\begin{center}
\includegraphics[width=\textwidth]{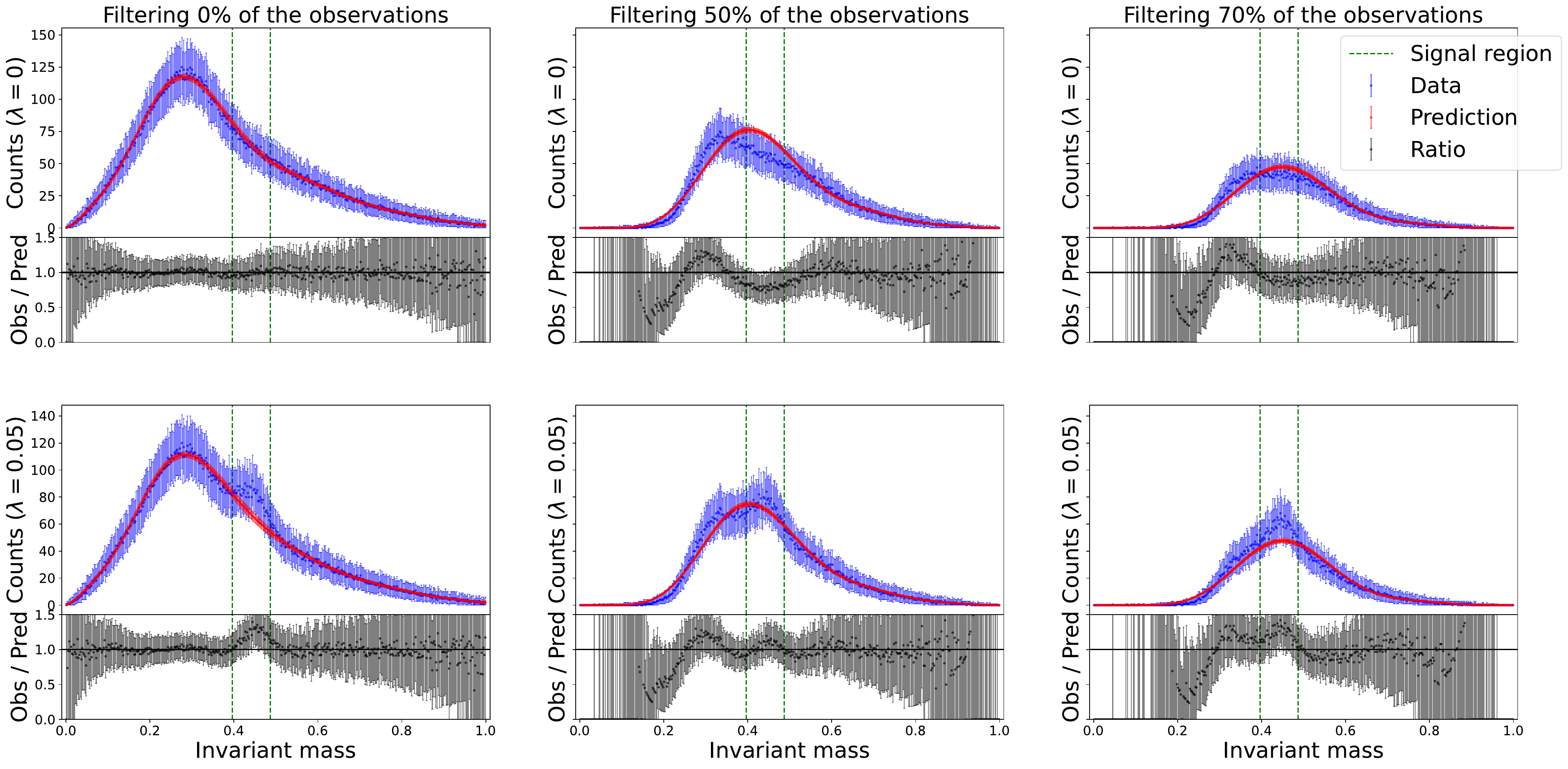}
\end{center}
\caption{Background estimates for the 3b test dataset as the number of observations filtered by the \textbf{non-decorrelated} classifier is increased. The first row corresponds to the null hypothesis, i.e., no signal is present. In the second row, $5\%$ of the data comes from the signal distribution. Note that the shape of the background distribution changes, producing a bump in the signal region. The intervals are 95\% variability intervals based on 1000 simulations, and their midpoint is the median of the simulations.}
\end{figure}

\begin{figure}[H]
\begin{center}
\includegraphics[width=\textwidth]{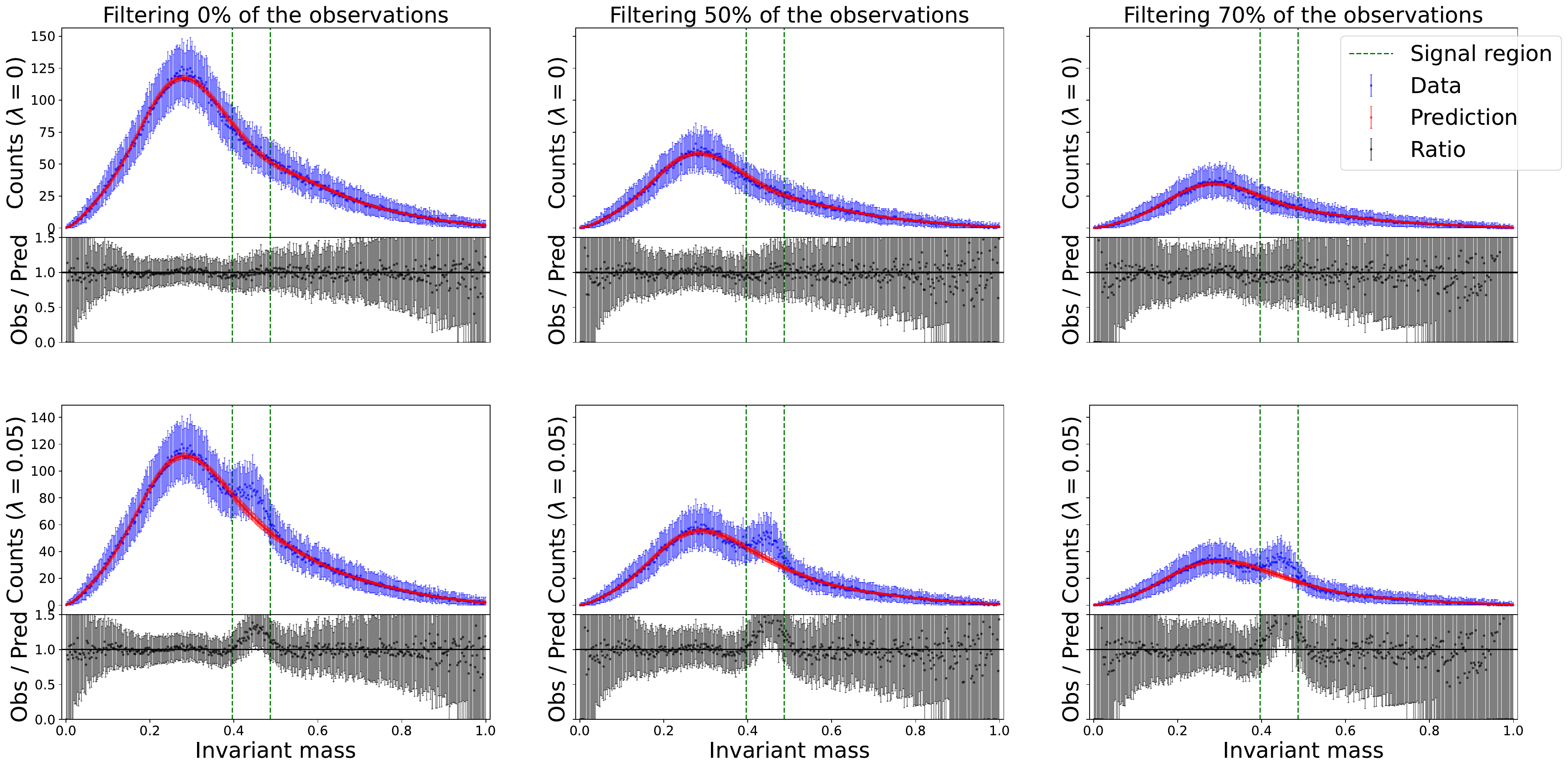}
\end{center}
\caption{Background estimates for the 3b test dataset as the number of observations filtered by the \textbf{decorrelated} classifier is increased. The first row corresponds to the null hypothesis, i.e., no signal. In the second row, $5\%$ of the data comes from the signal. Note that the shape of the background distribution does not change. The intervals are 95\% variability intervals based on 1000 simulations, and their midpoint is the median of the simulations.}
\end{figure}


\begin{figure}[H]
\begin{center}
\includegraphics[width=\textwidth]{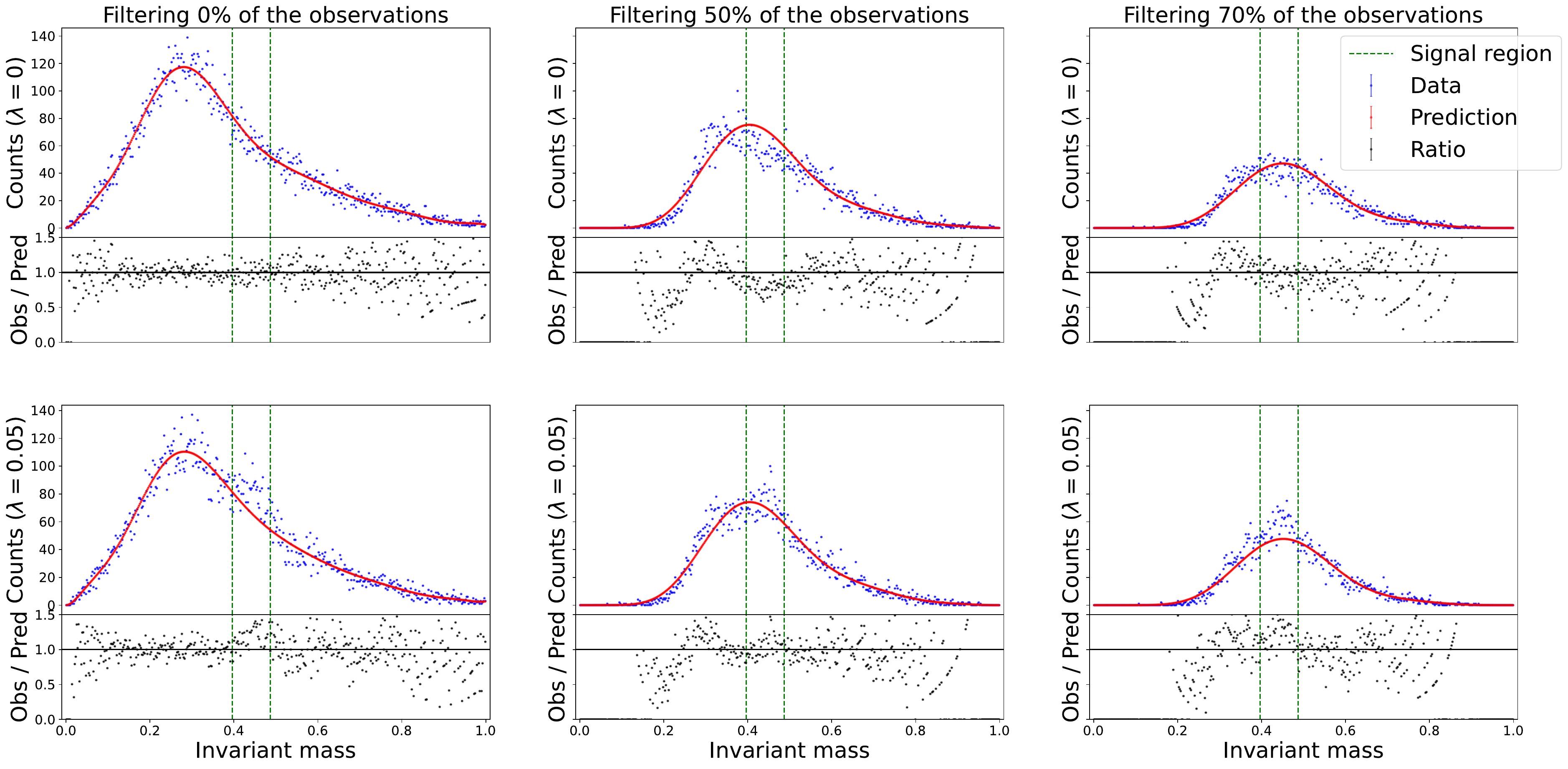}
\end{center}
\caption{Single background estimate for the 3b test dataset as the number of observations filtered by the \textbf{non-decorrelated} classifier is increased. The first row corresponds to the null hypothesis, i.e., no signal. In the second row, $5\%$ of the data comes from the signal. Note that the shape of the background distribution does not change. }
\end{figure}

\begin{figure}[H]
\begin{center}
\includegraphics[width=\textwidth]{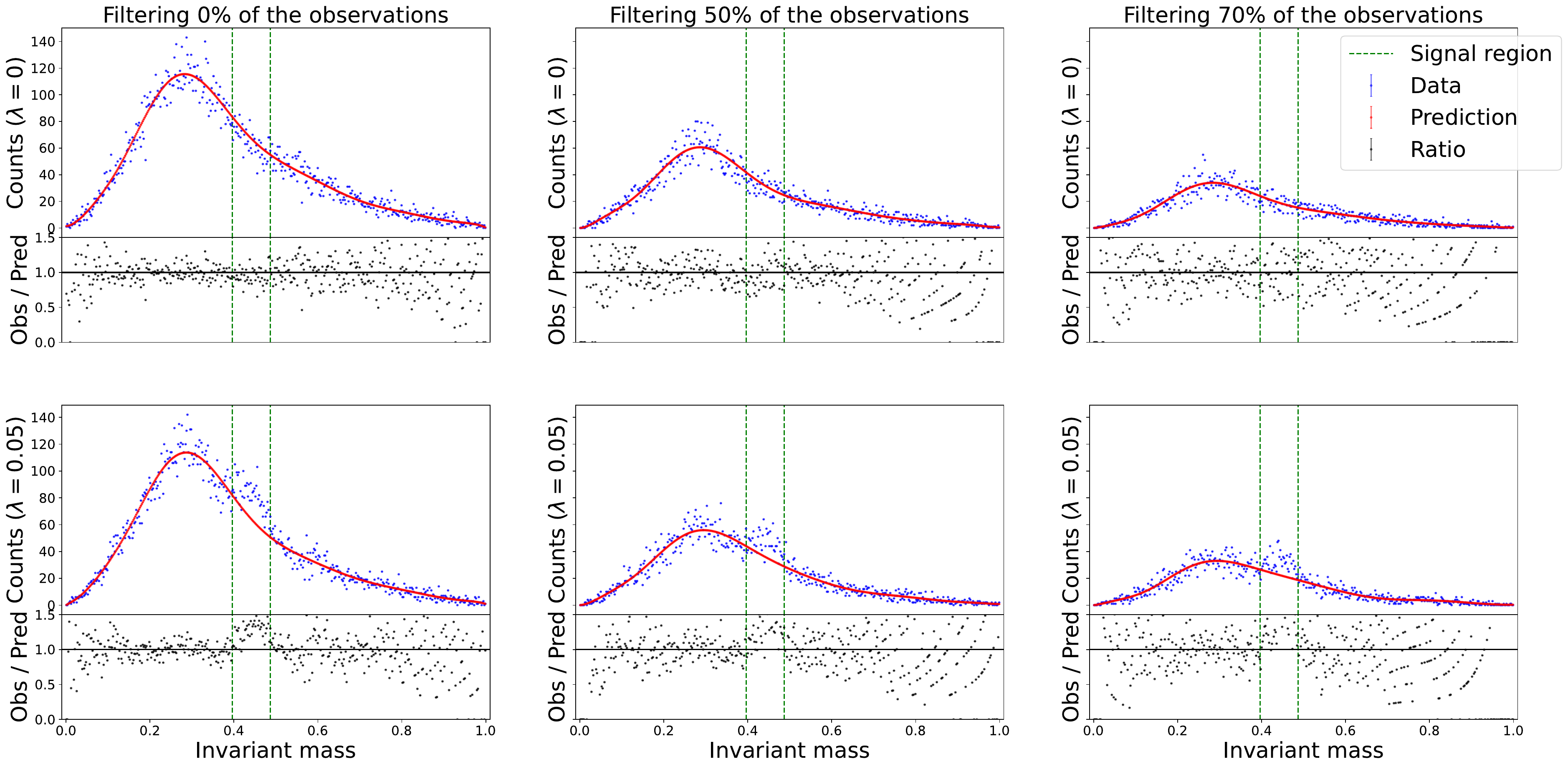}
\end{center}
\caption{Single background estimate for the 3b test dataset as the number of observations filtered by the \textbf{decorrelated} classifier is increased. The first row corresponds to the null hypothesis, i.e., no signal. In the second row, $5\%$ of the data comes from the signal. Note that the shape of the background distribution does not change. }
\end{figure}

\begin{figure}[H]
\begin{center}
\includegraphics[width=\textwidth]{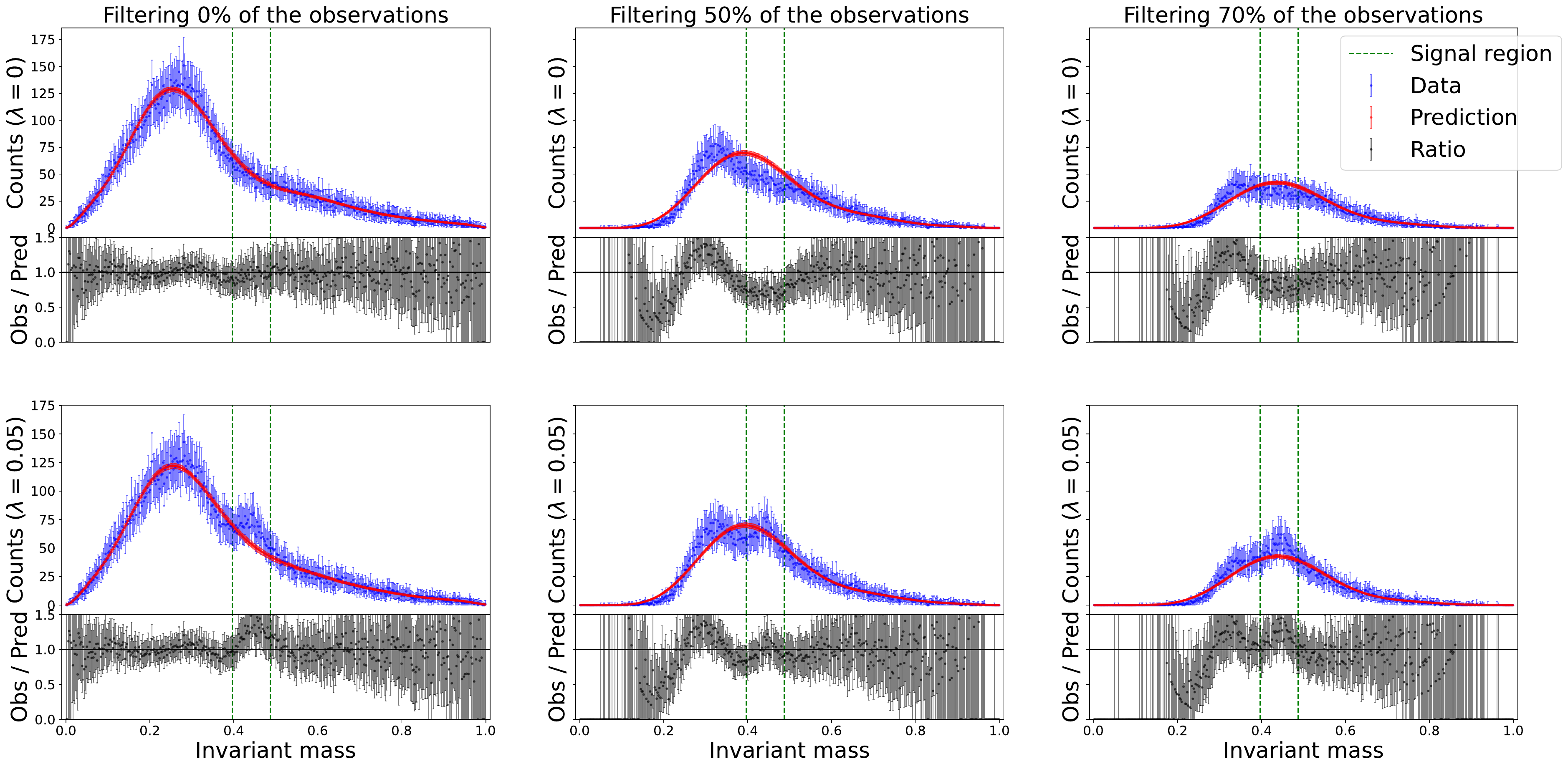}
\end{center}
\caption{Background estimates for the 4b test dataset as the number of observations filtered by the \textbf{non-decorrelated} classifier is increased. The first row corresponds to the null hypothesis, i.e., no signal is present. In the second row, $5\%$ of the data comes from the signal distribution. Note that the shape of the background distribution changes, producing a bump in the signal region. The intervals are 95\% variability intervals based on 1000 simulations, and their midpoint is the median of the simulations.}
\end{figure}

\begin{figure}[H]
\begin{center}
\includegraphics[width=\textwidth]{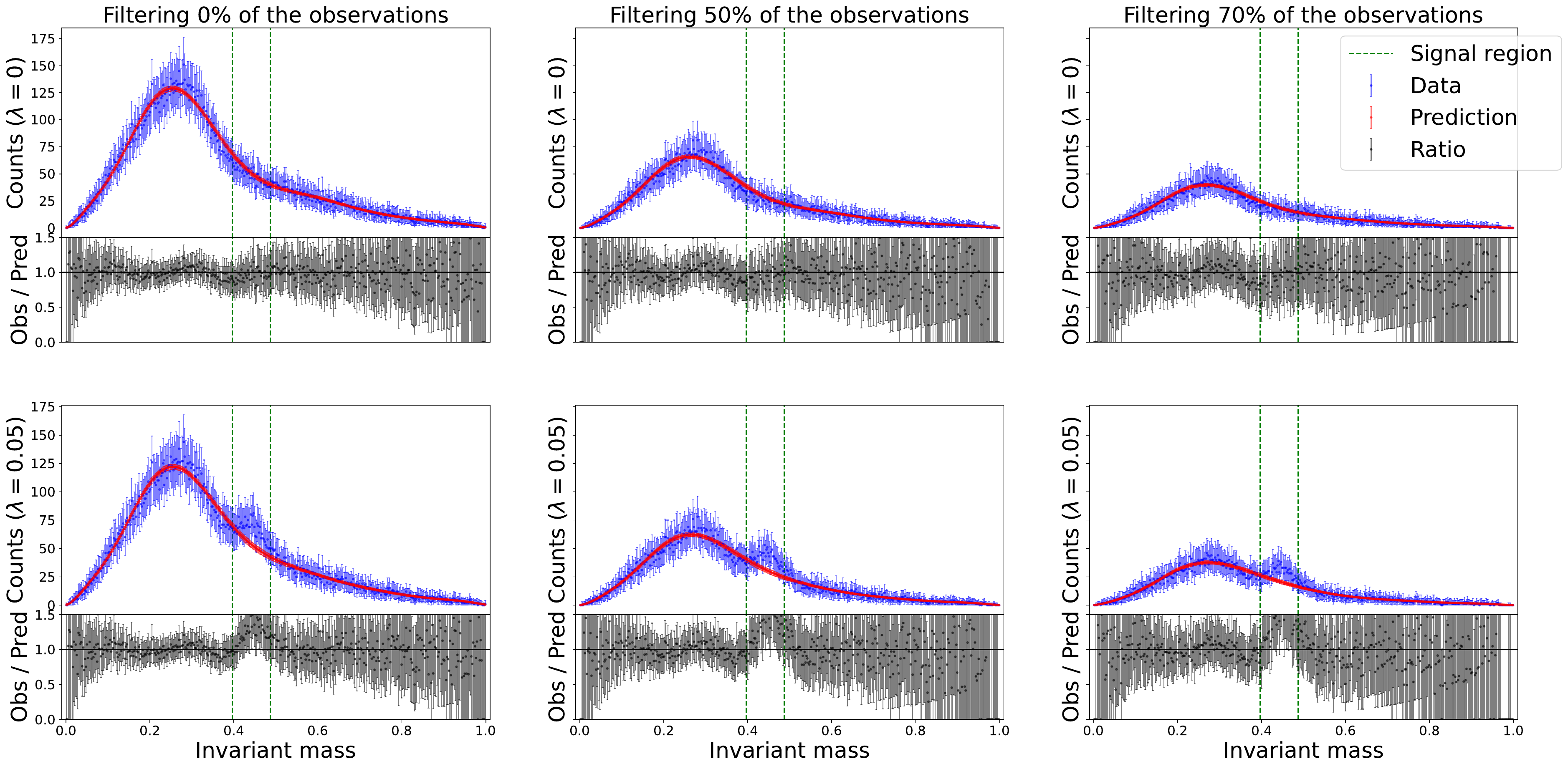}
\end{center}
\caption{Background estimates for the 4b test dataset as the number of observations filtered by the \textbf{decorrelated} classifier is increased. The first row corresponds to the null hypothesis, i.e., no signal. In the second row, $5\%$ of the data comes from the signal. Note that the shape of the background distribution does not change. The intervals are 95\% variability intervals based on 1000 simulations, and their midpoint is the median of the simulations.}
\end{figure}

\begin{figure}[H]
\begin{center}
\includegraphics[width=\textwidth]{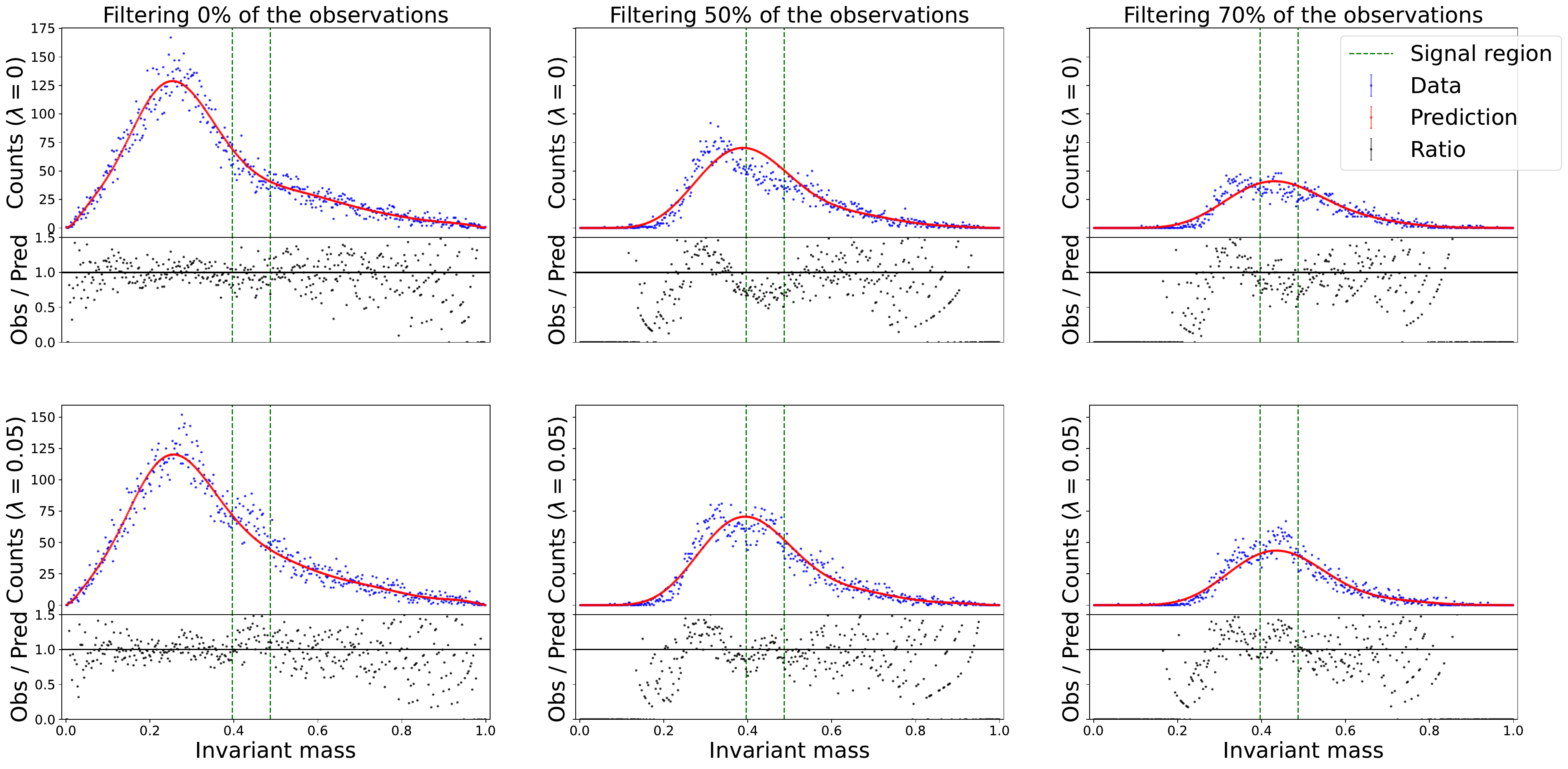}
\end{center}
\caption{Single background estimate for the 4b test dataset as the number of observations filtered by the \textbf{non-decorrelated} classifier is increased. The first row corresponds to the null hypothesis, i.e., no signal. In the second row, $5\%$ of the data comes from the signal. Note that the shape of the background distribution does not change. }
\end{figure}

\begin{figure}[H]
\begin{center}
\includegraphics[width=\textwidth]{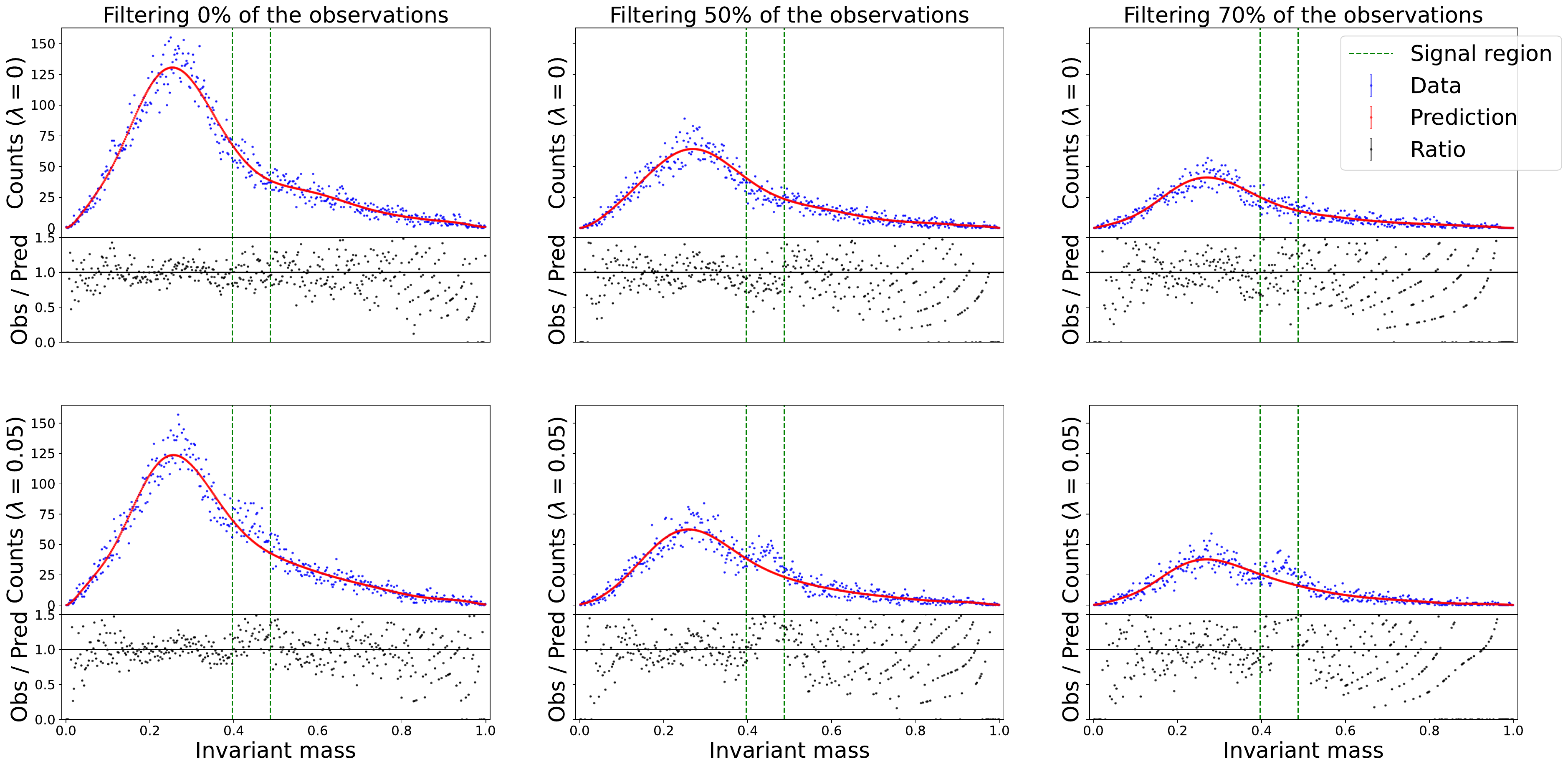}
\end{center}
\caption{Single background estimate for the 4b test dataset as the number of observations filtered by the \textbf{decorrelated} classifier is increased. The first row corresponds to the null hypothesis, i.e., no signal. In the second row, $5\%$ of the data comes from the signal. Note that the shape of the background distribution does not change. }
\end{figure}



\etocdepthtag.toc{mtreferences}
\addcontentsline{toc}{section}{References}
\bibliography{paper}

\begin{thebibliography}{60}
\providecommand{\natexlab}[1]{#1}
\providecommand{\url}[1]{\texttt{#1}}
\expandafter\ifx\csname urlstyle\endcsname\relax
  \providecommand{\doi}[1]{doi: #1}\else
  \providecommand{\doi}{doi: \begingroup \urlstyle{rm}\Url}\fi

\bibitem[Adye(2011)]{adyeUnfoldingAlgorithmsTests2011}
T.~Adye.
\newblock {Unfolding algorithms and tests using RooUnfold}.
\newblock In \emph{{PHYSTAT}}, 2011.

\bibitem[Algeri(2020)]{Algeri20}
S.~Algeri.
\newblock Detecting new signals under background mismodeling.
\newblock \emph{Phys. Rev. D}, 2020.

\bibitem[Algren et~al.(2024)Algren, Raine, and Golling]{algren2024decorrelation}
M.~Algren, J.~A. Raine, and T.~Golling.
\newblock Decorrelation using optimal transport.
\newblock \emph{The European Physical Journal C}, 2024.

\bibitem[Alwall et~al.(2011)Alwall, Herquet, Maltoni, Mattelaer, and Stelzer]{alwall2011madgraph}
Johan Alwall, Michel Herquet, Fabio Maltoni, Olivier Mattelaer, and Tim Stelzer.
\newblock Madgraph 5: going beyond.
\newblock \emph{Journal of High Energy Physics}, 2011.

\bibitem[ATLAS(2012)]{atlas2012observation}
ATLAS.
\newblock Observation of an excess of events in the search for the standard model higgs boson in the gamma-gamma channel with the atlas detector, 2012.

\bibitem[ATLAS(2017{\natexlab{a}})]{atlas2017identification}
ATLAS.
\newblock Identification of hadronically-decaying w bosons and top quarks using high-level features as input to boosted decision trees and deep neural networks in atlas at $\sqrt{s} =$ 13 tev, 2017{\natexlab{a}}.

\bibitem[ATLAS(2017{\natexlab{b}})]{atlas2017performance}
ATLAS.
\newblock Performance of top quark and w boson tagging in run 2 with atlas, 2017{\natexlab{b}}.

\bibitem[ATLAS(2018)]{atlas2018performance}
ATLAS.
\newblock Performance of mass-decorrelated jet substructure observables for hadronic two-body decay tagging in atlas, 2018.

\bibitem[Balakrishnan et~al.(2025)Balakrishnan, Manole, and Wasserman]{balakrishnan2025statistical}
Sivaraman Balakrishnan, Tudor Manole, and Larry Wasserman.
\newblock Statistical inference for optimal transport maps: Recent advances and perspectives.
\newblock \emph{arXiv preprint arXiv:2506.19025}, 2025.

\bibitem[Becker et~al.(1997)Becker, Yang, and Lange]{approxem}
Mark~P Becker, Ilsoon Yang, and Kenneth Lange.
\newblock {EM} algorithms without missing data.
\newblock \emph{Statistical Methods in Medical Research}, 6\penalty0 (1), 1997.

\bibitem[Behnke et~al.(2013)Behnke, Kr{\"o}ninger, Schott, and Sch{\"o}rner-Sadenius]{olafbook}
Olaf Behnke, Kevin Kr{\"o}ninger, Gr{\'e}gory Schott, and Thomas Sch{\"o}rner-Sadenius.
\newblock \emph{Data Analysis in High Energy Physics}.
\newblock Wiley Online Library, 2013.

\bibitem[Bickel et~al.(1993)Bickel, Klaassen, and Ritov]{bickelEfficientAdaptiveEstimation1993}
P.~J. Bickel, C.~Klaassen, and Y~Ritov.
\newblock \emph{Efficient and Adaptive Estimation for Semiparametric Models}.
\newblock Johns Hopkins University Press Baltimore, 1993.

\bibitem[Bickel et~al.(2006)Bickel, Ritov, and Stoker]{bickelTailormadeTestsGoodness2006}
P.~J. Bickel, Y.~Ritov, and T.~M. Stoker.
\newblock Tailor-made tests for goodness of fit to semiparametric hypotheses.
\newblock \emph{The Annals of Statistics}, 2006.

\bibitem[Bordes and Vandekerkhove(2010)]{bordesSemiparametricTwocomponentMixture2010}
L.~Bordes and P.~Vandekerkhove.
\newblock Semiparametric two-component mixture model with a known component: {{An}} asymptotically normal estimator.
\newblock \emph{Mathematical Methods of Statistics}, 2010.

\bibitem[Bryant(2018)]{bryant2018search}
Patrick Bryant.
\newblock \emph{Search for Pair Production of Higgs Bosons in the Four Bottom Quark Final State Using Proton-Proton Collisions at $\sqrt{s} =$ 13 Tev with the ATLAS Detector}.
\newblock PhD thesis, The University of Chicago, 2018.

\bibitem[Cacciari et~al.(2012)Cacciari, Salam, and Soyez]{cacciari2012fastjet}
Matteo Cacciari, Gavin~P Salam, and Gregory Soyez.
\newblock Fastjet user manual: (for version 3.0. 2).
\newblock \emph{The European Physical Journal C}, 72:\penalty0 1--54, 2012.

\bibitem[Chakravarti et~al.(2023)Chakravarti, Kuusela, Lei, and Wasserman]{chakravartiModelindependentDetectionNew2023}
Purvasha Chakravarti, Mikael Kuusela, Jing Lei, and Larry Wasserman.
\newblock Model-independent detection of new physics signals using interpretable {{SemiSupervised}} classifier tests.
\newblock \emph{The Annals of Applied Statistics}, 17\penalty0 (4):\penalty0 2759--2795, 2023.

\bibitem[Chang et~al.(2018)Chang, Cohen, and Ostdiek]{PhysRevD.97.056009}
Spencer Chang, Timothy Cohen, and Bryan Ostdiek.
\newblock What is the machine learning?
\newblock \emph{Phys. Rev. D}, 2018.

\bibitem[Cheng et~al.(2023)Cheng, Arguin, Leissner-Martin, Pilette, and Golling]{cheng2023variational}
Taoli Cheng, Jean-Fran{\c{c}}ois Arguin, Julien Leissner-Martin, Jacinthe Pilette, and Tobias Golling.
\newblock Variational autoencoders for anomalous jet tagging.
\newblock \emph{Physical Review D}, 107\penalty0 (1):\penalty0 016002, 2023.

\bibitem[{CMS}(2012)]{chatrchyan2012observation}
{CMS}.
\newblock Observation of a new boson at a mass of 125 gev with the cms experiment at the lhc.
\newblock \emph{Physics Letters B}, 716\penalty0 (1):\penalty0 30--61, 2012.

\bibitem[CMS(2022)]{cms2022search}
CMS.
\newblock Search for higgs boson pair production in the four b quark final state in proton-proton collisions at $\sqrt{s} = $13 tev.
\newblock \emph{arXiv preprint arXiv:2202.09617}, 2022.

\bibitem[Cranmer et~al.(2015)Cranmer, Pavez, and Louppe]{cranmer2015approximating}
Kyle Cranmer, Juan Pavez, and Gilles Louppe.
\newblock Approximating likelihood ratios with calibrated discriminative classifiers.
\newblock \emph{arXiv preprint arXiv:1506.02169}, 2015.

\bibitem[D'Agostini(1995)]{agostini2}
Giulio D'Agostini.
\newblock A multidimensional unfolding method based on bayes' theorem.
\newblock \emph{Nuclear Instruments and Methods in Physics Research Section A: Accelerators, Spectrometers, Detectors and Associated Equipment}, 362\penalty0 (2-3), 1995.

\bibitem[De~Favereau et~al.(2014)De~Favereau, Delaere, Demin, Giammanco, Lemaitre, Mertens, and Selvaggi]{de2014delphes}
J~De~Favereau, Christophe Delaere, Pavel Demin, Andrea Giammanco, Vincent Lemaitre, Alexandre Mertens, and Michele Selvaggi.
\newblock Delphes 3: a modular framework for fast simulation of a generic collider experiment.
\newblock \emph{Journal of High Energy Physics}, 2014\penalty0 (2):\penalty0 1--26, 2014.

\bibitem[del Barrio et~al.(2025)del Barrio, Gonz{\'a}lez-Sanz, Loubes, and Rodr{\'\i}guez-V{\'\i}tores]{del2025distributional}
Eustasio del Barrio, Alberto Gonz{\'a}lez-Sanz, Jean-Michel Loubes, and David Rodr{\'\i}guez-V{\'\i}tores.
\newblock Distributional limit theory for optimal transport.
\newblock \emph{arXiv preprint arXiv:2505.19104}, 2025.

\bibitem[Dempster et~al.(1977)Dempster, Laird, and Rubin]{dempsterMaximumLikelihoodIncomplete1977}
A.~P. Dempster, N.~M. Laird, and D.~B. Rubin.
\newblock Maximum {{Likelihood}} from {{Incomplete Data Via}} the {{EM Algorithm}}.
\newblock \emph{Journal of the Royal Statistical Society: Series B (Methodological)}, 39\penalty0 (1), 1977.

\bibitem[Di~Micco et~al.(2020)Di~Micco, Gouzevitch, Mazzitelli, and Vernieri]{di2020higgs}
Biagio Di~Micco, Maxime Gouzevitch, Javier Mazzitelli, and Caterina Vernieri.
\newblock Higgs boson potential at colliders: Status and perspectives.
\newblock \emph{Reviews in Physics}, 5:\penalty0 100045, 2020.

\bibitem[Dolen et~al.(2016)Dolen, Harris, Marzani, Rappoccio, and Tran]{dolen2016thinking}
James Dolen, Philip Harris, Simone Marzani, Salvatore Rappoccio, and Nhan Tran.
\newblock Thinking outside the rocs: Designing decorrelated taggers (ddt) for jet substructure.
\newblock \emph{Journal of High Energy Physics}, 2016\penalty0 (5), 2016.

\bibitem[Evans and Bryant(2008)]{evans2008lhc}
Lyndon Evans and Philip Bryant.
\newblock {LHC} machine.
\newblock \emph{Journal of instrumentation}, 3\penalty0 (08), 2008.

\bibitem[Gerber et~al.(2023)Gerber, Han, and Polyanskiy]{gerber2023minimax}
Patrik~R Gerber, Yanjun Han, and Yury Polyanskiy.
\newblock Minimax optimal testing by classification.
\newblock In \emph{The Thirty Sixth Annual Conference on Learning Theory}, 2023.

\bibitem[Ghosal(2001)]{ghosal2001}
Subhashis Ghosal.
\newblock Convergence rates for density estimation with {B}ernstein polynomials.
\newblock \emph{The Annals of Statistics}, 29\penalty0 (5), 2001.

\bibitem[Harris et~al.(2019)Harris, Rankin, and Suarez]{bernstein_4_2}
Philip~Coleman Harris, Dylan~Sheldon Rankin, and Cristina~Mantilla Suarez.
\newblock An approach to constraining the {H}iggs width at the {LHC} and {HL-LHC}.
\newblock \emph{arXiv preprint arXiv:1910.02082}, 2019.

\bibitem[Hayfield and Racine(2008)]{hayfield2008nonparametric}
Tristen Hayfield and Jeffrey~S Racine.
\newblock Nonparametric econometrics: The np package.
\newblock \emph{Journal of statistical software}, 27, 2008.

\bibitem[Kasieczka and Shih(2020)]{kasieczka2020disco}
Gregor Kasieczka and David Shih.
\newblock Disco fever: Robust networks through distance correlation.
\newblock \emph{arXiv preprint arXiv:2001.05310}, 2020.

\bibitem[Kim et~al.(2021)Kim, Ramdas, Singh, and Wasserman]{kim2021}
Ilmun Kim, Aaditya Ramdas, Aarti Singh, and Larry Wasserman.
\newblock Classification accuracy as a proxy for two-sample testing.
\newblock \emph{The Annals of Statistics}, 49\penalty0 (1):\penalty0 411--434, 2021.

\bibitem[Kitouni et~al.(2021)Kitouni, Nachman, Weisser, and Williams]{kitouni2021enhancing}
Ouail Kitouni, Benjamin Nachman, Constantin Weisser, and Mike Williams.
\newblock Enhancing searches for resonances with machine learning and moment decomposition.
\newblock \emph{Journal of High Energy Physics}, 2021\penalty0 (4):\penalty0 1--23, 2021.

\bibitem[Klein and Golling(2022)]{klein2022decorrelation}
Samuel Klein and Tobias Golling.
\newblock Decorrelation with conditional normalizing flows.
\newblock \emph{arXiv:2211.02486}, 2022.

\bibitem[Kosorok(2008)]{kosorokIntroductionEmpiricalProcesses2008}
Michael~R. Kosorok.
\newblock \emph{Introduction to Empirical Processes and Semiparametric Inference}.
\newblock Springer, 2008.

\bibitem[Kullback and Leibler(1951)]{kullbackInformationSufficiency1951}
S.~Kullback and R.~A. Leibler.
\newblock On {{Information}} and {{Sufficiency}}.
\newblock \emph{The Annals of Mathematical Statistics}, 22\penalty0 (1), 1951.

\bibitem[Li and Racine(2008)]{li2008nonparametric}
Qi~Li and Jeffrey~S Racine.
\newblock Nonparametric estimation of conditional cdf and quantile functions with mixed categorical and continuous data.
\newblock \emph{Journal of Business \& Economic Statistics}, 26\penalty0 (4), 2008.

\bibitem[Li et~al.(2013)Li, Lin, and Racine]{li2013optimal}
Qi~Li, Juan Lin, and Jeffrey~S Racine.
\newblock Optimal bandwidth selection for nonparametric conditional distribution and quantile functions.
\newblock \emph{Journal of Business \& Economic Statistics}, 31\penalty0 (1), 2013.

\bibitem[Lorentz(2013)]{lorentz2013}
George~G Lorentz.
\newblock \emph{Bernstein polynomials}.
\newblock American Mathematical Soc., 2013.

\bibitem[Louppe et~al.(2017)Louppe, Kagan, and Cranmer]{louppe2017learning}
Gilles Louppe, Michael Kagan, and Kyle Cranmer.
\newblock Learning to pivot with adversarial networks.
\newblock \emph{Advances in neural information processing systems}, 30, 2017.

\bibitem[Lyons(2008)]{Lyons08}
Louis Lyons.
\newblock {Open statistical issues in Particle Physics}.
\newblock \emph{The Annals of Applied Statistics}, 2\penalty0 (3), 2008.

\bibitem[Ma and Yao(2015)]{maFlexibleEstimationSemiparametric2015}
Yanyuan Ma and Weixin Yao.
\newblock Flexible estimation of a semiparametric two-component mixture model with one parametric component.
\newblock \emph{Electronic Journal of Statistics}, 2015.

\bibitem[Malley et~al.(2012)Malley, Kruppa, Dasgupta, Malley, and Ziegler]{Malley2012Probability}
J.~D. Malley, J.~Kruppa, A.~Dasgupta, K.~G. Malley, and A.~Ziegler.
\newblock Probability machines: consistent probability estimation using nonparametric learning machines.
\newblock \emph{Methods of Information in Medicine}, 51\penalty0 (1):\penalty0 74--81, 2012.

\bibitem[Manole et~al.(2024)Manole, Bryant, Alison, Kuusela, and Wasserman]{manoleBackgroundModelingDouble2022}
Tudor Manole, Patrick Bryant, John Alison, Mikael Kuusela, and Larry Wasserman.
\newblock Background modeling for double {{Higgs}} boson production: {{Density}} ratios and optimal transport.
\newblock \emph{The Annals of Applied Statistics}, 18\penalty0 (4):\penalty0 2950--2978, 2024.

\bibitem[Moreno et~al.(2020)Moreno, Nguyen, Vlimant, Cerri, Newman, Periwal, Spiropulu, Duarte, and Pierini]{moreno2020interaction}
Eric~A Moreno, Thong~Q Nguyen, Jean-Roch Vlimant, Olmo Cerri, Harvey~B Newman, Avikar Periwal, Maria Spiropulu, Javier~M Duarte, and Maurizio Pierini.
\newblock Interaction networks for the identification of boosted h→ bb decays.
\newblock \emph{Physical Review D}, 102\penalty0 (1), 2020.

\bibitem[Moult et~al.(2018)Moult, Nachman, and Neill]{moult2018convolved}
Ian Moult, Benjamin Nachman, and Duff Neill.
\newblock Convolved substructure: analytically decorrelating jet substructure observables.
\newblock \emph{Journal of High Energy Physics}, 2018\penalty0 (5), 2018.

\bibitem[Moustafa et~al.(2018)Moustafa, Creech, and Slay]{moustafaAnomalyDetectionSystem2018}
Nour Moustafa, Gideon Creech, and Jill Slay.
\newblock Anomaly detection system using beta mixture models and outlier detection.
\newblock In \emph{Progress in {{Computing}}, {{Analytics}} and {{Networking}}: {{Proceedings}} of {{ICCAN}} 2017}, 2018.

\bibitem[Patra and Sen(2016)]{patraEstimationTwocomponentMixture2016}
Rohit~Kumar Patra and Bodhisattva Sen.
\newblock Estimation of a {{Two-component Mixture Model}} with {{Applications}} to {{Multiple Testing}}.
\newblock \emph{Journal of the Royal Statistical Society Series B: Statistical Methodology}, 2016.

\bibitem[Rolandi(2012)]{bernstein_4_1}
Gigi Rolandi.
\newblock {LHC} results - {H}ighlights.
\newblock \emph{arXiv preprint arXiv:1211.3718}, 2012.

\bibitem[Rolke and L{\'o}pez(2012)]{rolkeEstimatingSignalPresence2012}
Wolfgang~A. Rolke and Angel~M. L{\'o}pez.
\newblock Estimating a {{Signal In}} the {{Presence}} of an {{Unknown Background}}.
\newblock \emph{Nuclear Instruments and Methods in Physics Research Section A: Accelerators, Spectrometers, Detectors and Associated Equipment}, 2012.

\bibitem[Shepp and Vardi(1982)]{agostini1}
Lawrence~A Shepp and Yehuda Vardi.
\newblock Maximum likelihood reconstruction for emission tomography.
\newblock \emph{IEEE transactions on medical imaging}, 1\penalty0 (2), 1982.

\bibitem[Shimmin et~al.(2017)Shimmin, Sadowski, Baldi, Weik, Whiteson, Goul, and S{\o}gaard]{shimmin2017decorrelated}
Chase Shimmin, Peter Sadowski, Pierre Baldi, Edison Weik, Daniel Whiteson, Edward Goul, and Andreas S{\o}gaard.
\newblock Decorrelated jet substructure tagging using adversarial neural networks.
\newblock \emph{Physical Review D}, 96\penalty0 (7):\penalty0 074034, 2017.

\bibitem[Sj{\"o}strand et~al.(2008)Sj{\"o}strand, Mrenna, and Skands]{sjostrand2008brief}
Torbj{\"o}rn Sj{\"o}strand, Stephen Mrenna, and Peter Skands.
\newblock A brief introduction to pythia 8.1.
\newblock \emph{Computer Physics Communications}, 178\penalty0 (11):\penalty0 852--867, 2008.

\bibitem[van~der Vaart(1998)]{vaartAsymptoticStatistics1998}
A.~W. van~der Vaart.
\newblock \emph{Asymptotic Statistics}.
\newblock Cambridge University Press, 1998.

\bibitem[van~der Vaart(2002)]{vaartSemiparametric2002}
Aad~W. van~der Vaart.
\newblock Semiparametric statistics.
\newblock In \emph{Lectures on Probability Theory and Statistics}, volume 1781. Springer, 2002.

\bibitem[Van~Dyk(2014)]{vandykRoleStatisticsDiscovery2014}
David~A. Van~Dyk.
\newblock The {{Role}} of {{Statistics}} in the {{Discovery}} of a {{Higgs Boson}}.
\newblock \emph{Annual Review of Statistics and Its Application}, 2014.

\bibitem[Villani(2021)]{villani2021topics}
C{\'e}dric Villani.
\newblock \emph{Topics in optimal transportation}, volume~58.
\newblock American Mathematical Soc., 2021.

\end{thebibliography}

\end{document}